\newtheorem{definition}{Definition}
\newtheorem{lemma}[definition]{Lemma}
\newtheorem{theorem}[definition]{Theorem}
\newtheorem{remark}[definition]{Remark}
\newtheorem{example}[definition]{Example}
\newtheorem{corollary}[definition]{Corollary}
\providecommand{\eps}{\varepsilon}
\providecommand{\C}{\mathbb{C}}
\providecommand{\N}{\mathbb{N}}
\providecommand{\R}{\mathbb{R}}
\providecommand{\Z}{\mathbb{Z}}
\providecommand{\id}{\operatorname{id}}
\providecommand{\inn}{\operatorname{int}}
\providecommand{\spann}{\operatorname{span}}
\providecommand{\supp}{\operatorname{supp}}
\let\Re\relax\DeclareMathOperator{\Re}{Re}
\let\Im\relax\DeclareMathOperator{\Im}{Im}
\providecommand{\Om}{\Omega}
\providecommand{\pOm}{{\partial\Omega}}
\providecommand{\vecr}{\hat\vecx}
\providecommand{\vecx}{\boldsymbol{x}}
\providecommand{\vecnu}{\boldsymbol{\nu}}
\renewenvironment{thebibliography}[1]{
\begin{oldthebibliography}{#1}
\setlength{\itemsep}{0em}
\setlength{\parskip}{0em}
}
{
\end{oldthebibliography}
}
\title{A radiation and propagation problem for a Helmholtz equation
with a compactly supported nonlinearity}
\author{Lutz Angermann\thanks{%
Dept.~of Mathematics, Clausthal University of Technology, Erzstr.~1, D-38678 Clausthal-Zellerfeld, Germany,
lutz.angermann@tu-clausthal.de}}
\date{\today}
\begin{document}

\maketitle

\begin{abstract}
The present work describes some extensions of an approach,
originally developed by V.V. Yatsyk and the author, for the theoretical and numerical 
analysis of scattering and radiation effects on infinite plates with cubically polarized layers.
The new aspects lie on the transition to more generally shaped,
two- or three-dimensional objects,
which no longer necessarily have to be represented in terms a Cartesian product of real intervals,
to more general nonlinearities (including saturation) and
the possibility of an efficient numerical approximation of the electromagnetic fields
and derived quantities (such as energy, transmission coefficient, etc.). 
The paper advocates an approach that consists in transforming the original full-space problem
for a nonlinear Helmholtz equation (as the simplest model)
into an equivalent boundary-value problem on a bounded domain by means
of a nonlocal Dirichlet-to-Neumann (DtN) operator.
It is shown that the transformed problem is equivalent to the original one
and can be solved uniquely under suitable conditions.
Moreover, the impact of the truncation of the DtN operator
on the resulting solution is investigated,
so that the way to the numerical solution by appropriate finite element methods is available. 
\end{abstract}

\textsf{Keywords:}
Scattering, radiation,
nonlinear Helmholtz equation,
nonlinearly polarizable medium,
DtN operator,
truncation

\bigskip
\textsf{AMS Subject Classification (2022):}
35\,J\,05 
35\,Q\,60 
78\,A\,45 

\section{Introduction}
The present work deals with the mathematical modeling of the response of
a penetrable two- or three-dimensional object (obstacle), represented by a bounded domain,
to the excitation  by an external electromagnetic field.
A special aspect of the paper is that, in contrast to many other, thematically comparable works,
nonlinear constitutive laws of this object are in the foreground.

A standard example are the so-called Kerr nonlinearities.
It is physically known, but also only little investigated mathematically
that sufficiently strong incident fields, under certain conditions, cause effects
such as frequency multiplication, which cannot occur in the linear models
frequently considered in the literature.
On the other hand, such effects are interesting in applications,
which is why a targeted exploitation, for example from a numerical or
optimization point of view, first requires thorough theoretical investigation.

A relatively simple mathematical model for this is a nonlinear Helmholtz equation,
which results from the transition from the time-space formulation of Maxwell's equations
to the frequency-space formulation together with further simplifications.
Although some interesting nonlinear effects cannot be modeled by means of a single scalar equation
alone, its investigation is of own importance, for example from the aspect of variable coefficients,
and on the other hand its understanding is also the basis for further development,
for example for systems of nonlinear Helmholtz equations, see, e.g., \cite{Angermann:19a}.
The latter is also the reason why we consider a splitted nonlinearity
and not concentrate the nonlinearity in one term as is obvious.

The Helmholtz equation with nonlinearities has only recently become the focus
of mathematical investigations.
However, problems are mainly dealt with in which the nonlinearities are globally smooth,
while here a formulation as a transmission problem is used that allows
less smooth transitions at the object boundary.
In addition, we allow more general nonlinearities than the Kerr nonlinearities mentioned,
in particular saturation effects can be taken into account.

Starting from a physically oriented problem description as a full-space problem,
we derive a weak formulation on a bounded domain using the well-known technique
of DtN operators, and show its equivalence to the weakly formulated original problem.
Since the influence of the external field only occurs indirectly
in the weak formulation, we also give a second variant of the weak formulation
that better clarifies this influence and which we call the input-output formulation.

Since the DtN operators are non-local, their practical application (numerics) causes problems,
which is why a well-known truncation technique is used.
This raises the problem of proving the well-posedness of the reduced problem
and establishing a connection (error estimate) of the solution of the reduced problem
to the original problem.
Although these questions in the linear case have been discussed in the literature
for a relatively long time, they even for the linear case seemed to have been treated
only selectively and sometimes only very vaguely.
The latter concerns in particular the question of the independence of the stability constant
from the truncation parameter. In this work, both stability and error estimates are given
for the two- and three-dimensional case, whereby a formula-based relationship
between the discrete and the continuous stability constant is established.

Another difference to many existing, especially older works is that the present paper works
with variational (weak) formulations but not with integral equations. Unfortunately,
the complete tracking of the dependence of the occurring parameters on the wave number
(so-called wavenumber-independent bounds) has not yet been included.

It has already been mentioned that, for the linear situation,
in connection with scattering problems or with problems that are formulated
from the very beginning in bounded domains (e.g., with impedance boundary conditions),
there is an extensive and multi-threaded body of literature that is beyond
the scope of this article to list.
Transmission problems of the type considered here are rarely found in the literature.

Nevertheless, without claiming completeness, a few works should be mentioned here that
had an influence on the present results and whose bibliographies may be of help.
A frequently cited work that deals with linear scattering problems in two dimensions
and also served as the motivation for the present work is \cite{Hsiao:11},
which, however, does not discuss the dependence of the stability constant on the truncation parameter.
A number of later works by other authors quote this work, but sometimes assume results
that cannot be found in the original.
In this context, the papers \cite{Li:20b} and \cite{Xu:21} should also be mentioned,
which take up and improve various aspects of \cite{Hsiao:11}, e.g.,
the convergence order (exponential convergence of the truncated solution to the origibnal one).
However, they are also restricted to linear two-dimensional scattering or transmission problems,
respectively.

It is also worth noting that, in addition to the DtN-type methods,
there are other methods for reducing full-space problems to problems in bounded domains, too.
These include, above all, the so-called PML methods. Among the works related to the present work,
\cite{Jiang:22} should be mentioned, which considers Kerr nonlinearities and,
using a linearization approach, can show exponential convergence of the PML problem
in relation to the PML parameters.

The work that comes closest to our intentions is \cite{Koyama:07},
where the exterior Dirichlet boundary-value problem for the linear Helmholtz equation
is considered.
In this paper, no separate, parameter-uniform stability estimate of the truncated problem
is given, but the truncation error is included in the error estimate
of a finite element approximation.
A similar work is \cite{Koyama:09}, but in which another boundary condition at the boundary
of the auxiliary domain is considered, the so-called modified DtN condition.

Among the more recent papers, works by Mandel \cite{Mandel:19},
Chen, Ev{\'{e}}quoz \& Weth \cite{Chen:21}, and Maier \& Verf\"{u}rth \cite{Maier:22}
should be mentioned, especially because of the cited sources.
In his cumulative habilitation thesis, which contains further references,
Mandel examines existence and uniqueness questions for solutions of systems
of nonlinear Helmholtz equations in the full-space case.
Scattering or transmission problems are not considered.
Using integral operators, Chen et al.\ study the scattering problem with
comparatively high regularity assumptions to the superlinear nonlinearities
by means of topological fixed point and global bifurcation theory
to prove the existence of bounded solutions,
avoiding truncation approaches.
In this context also the paper \cite{Evequoz:14} is worth to be mentioned, in which
the existence of real-valued solutions (which satisfy certain asymptotic conditions)
of a nonlinear Helmholtz equation with a compactly supported nonlinearity is investigated.

Maier \& Verf\"{u}rth, who focus mainly on multiscale aspects for a nonlinear Helmholtz equation
over a bounded domain with impedance boundary conditions,
give an instructive review of the literature on nonlinear Helmholtz equations.
Further works on nonlinear Helmholtz equations in bounded domains are \cite{Wu:18}
and \cite{Verfuerth:23}, which deal with Kerr nonlinearities and use explicit iterative arguments.

A number of papers on inverse problems also deal with nonlinear Helmholtz equations,
even if the questions answered there are not very closely related to ours.
Typically, such treatises also include statements about the direct problem,
and so we mention here \cite{Griesmaier:22}, \cite{Jalade:04}, and \cite{Harrach:23},
where the latter work considers $C^\infty$-bounded domains and real solutions.

The structure of the present work is based on the program outlined above.
After the problem formulation in Section 2, the exterior auxiliary problem
required for truncation is discussed, after which the weak formulation and
equivalence statement follow in Section 4. Section 5 is dedicated to
the existence and uniqueness of the weak solution, where in particular
the assumptions on the nonlinear terms are discussed.
The final section then deals with the properties of the truncated problem
-- uniform (with respect to the truncation parameter) well-posedness
and estimate of the truncation error.

\section{Problem formulation}
\label{sec:problem}
Let $\Om\subset\R^d$ be a bounded domain with a Lipschitz boundary $\pOm$.
It represents a medium with a nonlinear behaviour with respect to electromagnetic fields.
Since $\Om$ is bounded, we can choose an open Euclidean $d$-ball $B_R\subset\R^d$ of radius
$R > \sup_{\vecx\in\Om}|\vecx|$
with center in the origin such that $\Om\subset B_R$.
The complements of $\Om$ and $B_R$ are denoted by
$\Om^\mathrm{c} := \R^d\setminus\Om$
$B_R^\mathrm{c} := \R^d\setminus B_R$, resp.,
the open complement of $B_R$ is denoted by $B_R^+ := \R^d\setminus\overline{B_R}$
(the overbar over sets denotes their closure in $\R^d$),
and the boundary of $B_R$, the sphere, by $S_R := \partial B_R$ (cf.\ Fig.~\ref{fig:img01}).
The open complement of $\Om$ is denoted by $\Om^+ := \R^d\setminus\overline{\Om}$.
By $\vecnu$ we denote the outward-pointing (w.r.t.\ either $\Om$ or $B_R$) unit normal vector
on $\pOm$ or $S_R$, respectively.

\begin{figure}[htb]
\centering{%
\begin{tikzpicture}
\draw [gray,fill,opacity=.4] plot [smooth cycle] coordinates {(0,0) (1,1) (3,1) (2,-1)};
\draw[dashed] (1.5,0) circle (2);
\draw[->] (-.8,2.1) -- (1.2,1.1);
\node at (2,0) {$\Om$};
\node at (3.8,-1) {$S_R$};
\node at (-.7,1.6) {$u^\mathrm{inc}$};
\end{tikzpicture}}
\caption{The nonlinear medium $\Om$ is excited by an incident field $u^\mathrm{inc}$ ($d=2$)}
\label{fig:img01}
\end{figure}
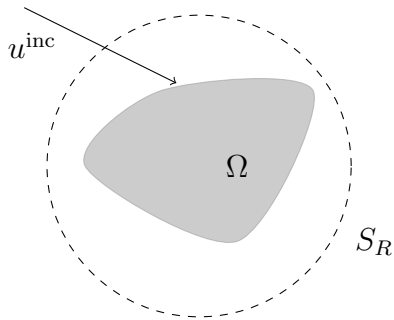

Trace operators will be denoted by one and the same symbol $\gamma$; the concrete
meaning (e.g., traces on the common interface of an interior and exterior domain)
will be clear from the context.

With regard to the function spaces used, we refer to the relevant literature,
e.g., \cite{Adams:03},
\cite[Ch.~1]{Lions:72a},
\cite[Ch.~3]{McLean:00}.
The corresponding norms are denoted by $\|\cdot\|_{0,p,\Om}$ for the $L_p(\Om)$-norm
and $\|\cdot\|_{s,2,\Om}$ for the $H^s(\Om)$-norm (as representative examples;
other domains of definition may also occur).

The classical direct problem of radiation and propagation of an electromagnetic field
-- actually just one component of it --
by/in the penetrable obstacle $\Om$ is governed by a nonlinear Helmholtz equation
with a variable complex-valued wave coefficient:
\begin{equation}\label{eq:genscalproblem}
-\Delta u(\vecx) - \kappa^2 c(\vecx,u) \,u = f(\vecx,u)
\quad\text{for (almost) all }
\vecx\in\R^d,
\end{equation}
where the wavenumber $\kappa>0$ is fixed.
The physical properties of the obstacle $\Om$ are described by the coefficient
$c:\;\R^d\times\C\to\C$ (physically the square of the \emph{refractive index}) and
the right-hand side $f:\;\R^d\times\C\to\C$.
In general, both functions are nonlinear and have the following properties:
\begin{equation}\label{eq:ass_nonlin1}
\supp(1-c(\cdot,w)) = \overline\Om
\quad\text{and}\quad
\supp f(\cdot,w) \subset\overline\Om
\quad\text{for all }
w\in\C.
\end{equation}
The function $1-c$ is often called the \emph{contrast function}.
Basically we assume that $c$ and $f$ are Carath\'{e}odory functions,
i.e.\
the mapping $\vecx\mapsto c(\vecx,v)$ is (Lebesgue-)measurable for all $v\in\C$,
and
the mapping $v\mapsto c(\vecx,v)$ is continuous for almost all $\vecx\in\R^d$.
These two conditions imply that $\vecx\mapsto c(\vecx,v(\vecx))$
is measurable for any measurable $v$.
The same applies to $f$.

The unknown \emph{total field} $u:\;\R^d\to\C$ should have the following structure:
\begin{equation}\label{eq:solstructure}
u=\begin{cases}
u^\mathrm{rad} + u^\mathrm{inc} & \text{in }\Om^\mathrm{c},\\
u^\mathrm{trans} & \text{in }\Om,
\end{cases}
\end{equation}
where
$u^\mathrm{rad}:\;\Om^\mathrm{c}\to\C$ is the unknown radiated/scattered field,
$u^\mathrm{trans}:\;\Om\to\C$ denotes the unknown transmitted field,
and the incident field $u^\mathrm{inc}\in H^1_\mathrm{loc}(\Om^+)$ is given.
The incident field is usually a (weak) solution of either the homogeneous or
inhomogeneous Helmholtz equation (even in the whole space).
Typically it is generated either by concentrated sources located
in a bounded region of $\Om^+$ or by sources at infinity, e.g.\ travelling waves.

\begin{example}[$d=2$]
The incident plane wave, whose transmission and scattering is investigated,
is given by
\[
u^\mathrm{inc}(\vecx) := \alpha^\mathrm{inc} \exp (i(\Phi x_1 - \Gamma x_2)),
\vecx = (x_1,x_2)^\top\in B_R^+
\]
with amplitude $\alpha^\mathrm{inc}$ and angle of incidence
$\varphi^\mathrm{inc}$, $|\varphi^\mathrm{inc}| < \pi$, where
$\Phi := \kappa\sin\varphi^\mathrm{inc}$ is the longitudinal
wave number and $\Gamma := \sqrt{\kappa^2 - \Phi^2} = \kappa\cos\varphi^\mathrm{inc}$
the transverse wave number.
In polar coordinates is then
\begin{align*}
u^\mathrm{inc}(r,\varphi)
&= \alpha^\mathrm{inc} \exp (i(\Phi r \cos\varphi - \Gamma r \sin\varphi))\\
&= \alpha^\mathrm{inc} \exp (i\kappa r(\sin\varphi^\mathrm{inc} \cos\varphi
- \cos\varphi^\mathrm{inc} \sin\varphi))\\
&= \alpha^\mathrm{inc} \exp (i\kappa r \sin(\varphi^\mathrm{inc} - \varphi)),
\quad (r,\varphi) \in B_R^+.
\end{align*}
\end{example}

The radiated/scattered field $u^\mathrm{rad}$
should satisfy an additional condition, the so-called \emph{Sommerfeld radiation condition}:
\begin{equation}\label{eq:sommerfeld1}
\lim_{|\vecx|\to\infty} |\vecx|^{(d-1)/2}
\left(\vecr\cdot\nabla u^\mathrm{rad} - i\kappa u^\mathrm{rad}\right) = 0
\end{equation}
uniformly for all directions $\vecr:=\vecx/|\vecx|$, where
$\vecr\cdot\nabla u^\mathrm{rad}$ denotes the derivative of $u^\mathrm{rad}$
in radial direction $\vecr$,
cf.\ \cite[eq.~(3.7) for $d=3$, eq.~(3.96) for $d=2$]{Colton:13b}.
Physically, the condition \eqref{eq:sommerfeld1} allows only \emph{outgoing} waves at infinity;
mathematically it guarantees the uniqueness of the solution $u^\mathrm{scat}:\;B_R^+ \to\C$
of the following exterior Dirichlet problem
\begin{equation}\label{eq:extDirproblem}
\begin{aligned}
-\Delta u^\mathrm{scat} - \kappa^2 u^\mathrm{scat} = 0
\quad\text{in }
B_R^+,\\
u^\mathrm{scat} = f_{S_R}
\quad\text{on }
S_R,\\
\lim_{|\vecx|\to\infty} |\vecx|^{(d-1)/2}
\left(\vecr\cdot\nabla u^\mathrm{scat} - i\kappa u^\mathrm{scat}\right) = 0,
\end{aligned}
\end{equation}
where $f_{S_R}:\;S_R\to\C$ is given.
We mention that, in the context of classical solutions (i.e.\ $u^\mathrm{scat}\in C^2(B_R^+)$) to
problem \eqref{eq:extDirproblem},
Rellich \cite{Rellich:43} has shown that the condition \eqref{eq:sommerfeld1}
can be weakened to the following integral version:
\[
\lim_{|\vecx|\to\infty} \int_{S_R}
\left|\vecr\cdot\nabla u^\mathrm{scat} - i\kappa u^\mathrm{scat}\right|^2 ds(\vecx) =0.
\]
In the context of weak solutions (i.e.\ $u^\mathrm{scat}\in H^1_\mathrm{loc}(B_R^+)$),
an analogous equivalence statement can be found in \cite[Thm.~9.6]{McLean:00}.
\section{The exterior problem in $B_R^\mathrm{c}$}
For a given $f_{S_R}\in C(S_R)$ and $d=3$,
the unique solvability of problem \eqref{eq:extDirproblem} in $C^2(B_R^+)\cap C(B_R^\mathrm{c})$
is proved, for example, in \cite[Thm.~3.21]{Colton:13b}.
In addition, if $f_{S_R}$ is smoother, say $f_{S_R}\in C^\infty(S_R)$, then
the normal derivative of $u^\mathrm{scat}$ on the boundary $S_R$ is a well-defined
continuous function \cite[Thm.~3.27]{Colton:13b}.
These assertions remain valid in the case $d=2$, see \cite[Sect.~3.10]{Colton:13b}.

Therefore, by solving \eqref{eq:extDirproblem} for given $f_{S_R}\in C^\infty(S_R)$,
a mapping can be introduced that takes the Dirichlet data on $S_R$
to the corresponding Neumann data on $S_R$, i.e.
\[
f_{S_R}\mapsto T_\kappa f_{S_R} := \left.\vecr\cdot\nabla u^\mathrm{scat}\right|_{S_R},
\]
see, e.g., \cite[Sect.~3.2]{Colton:19}.

Furthermore, it is well-known that the mapping $T_\kappa$ can be extended to a bounded linear operator
$T_\kappa:\; H^{s+1/2}(S_R)\to H^{s-1/2}(S_R)$ for any $|s|\le 1/2$ \cite[Thm.~2.31]{ChandlerWilde:12}
(we keep the notation already introduced for this continued operator).
This operator is called the \emph{Dirichlet-to-Neumann operator}, in short \emph{DtN operator},
or \emph{capacity operator}.

Since the problem \eqref{eq:extDirproblem} is considered in a spherical exterior domain,
an explicit series representation of the solution is available
using standard separation techniques in polar or spherical coordinates, respectively.
The term-by-term differentiation of this series thus also provides a series representation
of the image of $T_\kappa$.
We first give a formal description of the approach and then comment on its
mathematical correctness at the end.

The solution of the problem \eqref{eq:extDirproblem} in the two-dimensional case
(here with $u^\mathrm{scat}$ replaced by $u$)
is given by \cite[Proposition~2.1]{Masmoudi:87}, \cite[eq.~(30)]{Keller:89}:
\begin{equation}\label{eq:Fourier2d}
\begin{aligned}
u(\vecx)=u(r\hat\vecx)=u(r,\varphi)
= \sum_{n\in\Z} \frac{H_n^{(1)}(\kappa r)}{H_n^{(1)}(\kappa R)}\, f_n(R) Y_n(\hat\vecx)
&= \sum_{n\in\Z} \frac{H_n^{(1)}(\kappa r)}{H_n^{(1)}(\kappa R)}\, f_n(R) Y_n(\varphi),\\
&\quad
\vecx=r\hat\vecx\in S_r,\ r>R,\ \varphi\in[0,2\pi]
\end{aligned}
\end{equation}
(identifying $u(\vecx)$ with $u(r,\varphi)$ and $Y_n(\hat\vecx)$ with $Y_n(\varphi)$ for
$\vecx=r\hat\vecx=r(\cos\varphi,\sin\varphi)^\top$),
where $(r,\varphi)$ are the polar coordinates,
$H_n^{(1)}$ are the cylindrical Hankel functions of the first kind of order $n$
\cite[Sect.~10.2]{NIST:22}\footnote{%
Instead of \eqref{eq:sommerfeld1} \cite{Masmoudi:87} considered the ingoing Sommerfeld condition
and thus obtained a representation in terms of the cylindrical Hankel functions of the second kind.
Note that
$H_n^{(2)}(-\xi)=-(-1)^n H_n^{(1)}(\xi)$
\cite[(10.11.5)]{NIST:22}.},
$Y_n$ are the circular harmonics defined by
\[
Y_n(\varphi) = \frac{e^{in\varphi}}{\sqrt{2\pi}},
\quad
n\in\Z,
\]
$f_n(R)$ are the Fourier coeﬃcients of $f_{S_R}$ defined by
\begin{equation}\label{eq:Fouriercoeff2d}
f_n(R) := (f_{S_R}(R\cdot),Y_n)_{S_1} = \int_{S_1}f_{S_R}(R\hat\vecx)\overline{Y_n}(\hat\vecx)ds(\hat\vecx)
=\int_0^{2\pi} f_{S_R}(R,\varphi)\overline{Y_n}(\varphi)d\varphi,
\end{equation}
and $ds(\hat\vecx)$ is the Lebesgue arc length element.
The terms of the series \eqref{eq:Fourier2d} are well-defined, since the main branch
of the Hankel functions $H_\nu^{(1)}$ of real order $\nu\ge 0$ is free of real zeros
\cite[p.~62]{Bateman:81b}.

Now we formally differentiate the representation \eqref{eq:Fourier2d} with respect to $r$
to obtain the outward normal derivative of $u$:
\[
\vecr\cdot\nabla u(\vecx) = \frac{\partial u}{\partial r}(r\hat\vecx)
= \kappa\sum_{n\in\Z} \frac{H_n^{(1)'}(\kappa r)}{H_n^{(1)}(\kappa R)}\, f_n(R) Y_n(\hat\vecx),
\quad
\vecx=r\hat\vecx\in S_r,\ r>R.
\]
Setting $f_R:=u|_{S_R}$ and letting $\vecx$ in this representation approach the boundary $S_R$,
we can formally define the (extended) DtN operator by
\begin{equation}\label{def:2dDtNb}
\begin{aligned}
T_\kappa u(\vecx)
&:= \frac{1}{R}\sum_{n\in\Z} Z_n(\kappa R)u_n(R)Y_n(\hat\vecx),
\quad
\vecx=R\hat\vecx\in S_R,
\end{aligned}
\end{equation}
where
\[
Z_n(\xi):= \xi\,\frac{H_n^{(1)'}(\xi)}{H_n^{(1)}(\xi)}\,,
\]
and $u_n(R)$ are the Fourier coefficients of $u|_{S_R}$
analogously to \eqref{eq:Fouriercoeff2d}.
The admissibility of this procedure has been proven in many sources in the classical context,
for example \cite[Sect.~3.5]{Colton:19}.
For the present case, in the paper \cite[Thm.~1]{Ernst:96} it was shown that
the operator $T_\kappa:\; H^{s+1/2}(S_R)\to H^{s-1/2}(S_R)$ is bounded for any $s\in\N_0$.
Ernst's result was extended to all $s\ge 0$ in \cite[Thm.~3.1]{Hsiao:11}.

In the case $d=3$, the solution of the problem \eqref{eq:extDirproblem}
is given by \cite[eq.~(33)]{Keller:89}:
\begin{equation}\label{eq:Fourier3d}
\begin{aligned}
u(\vecx)=u(r\hat\vecx)=u(r,\varphi,\theta)
&= \sum_{n\in\N_0}\sum_{|m|\le n}
\frac{h_n^{(1)}(\kappa r)}{h_n^{(1)}(\kappa R)}\,
f_n^m(R)Y_n^m(\hat\vecx)\\
&= \sum_{n\in\N_0}\sum_{|m|\le n}
\frac{h_n^{(1)}(\kappa r)}{h_n^{(1)}(\kappa R)}\,
f_n^m(R)Y_n^m(\varphi,\theta),\\
&\quad
\vecx\in S_r,\ r>R,\ (\varphi,\theta)\in [0,2\pi]\times [0,\pi]
\end{aligned}
\end{equation}
(identifying $u(\vecx)$ with $u(r,\varphi,\theta)$ and $Y_n^m(\hat\vecx)$
with $Y_n^m(\varphi,\theta)$ for
$\vecx=r\hat\vecx=r(\cos\varphi\sin\theta,\linebreak[0] \sin\varphi\sin\theta,\linebreak[0] \cos\theta)^\top$),
where $(r,\varphi,\theta)$ are the spherical coordinates,
$h_n^{(1)}$ are the spherical Hankel functions of the first kind of order $n$
\cite[Sect.~10.47]{NIST:22},
$Y_n^m$ are the spherical harmonics defined by
\[
Y_n^m(\varphi,\theta) = \sqrt{\frac{2n+1}{4\pi}\,\frac{(n-|m|)!}{(n+|m|)!}}\,
P_n^{|m|}(\cos\theta)e^{im\varphi},
\quad
n\in\N_0,\ |m|\le n,
\]
(identifying $Y_n^m(\hat\vecx)$ with $Y_n^m(\varphi,\theta)$ for
$\hat\vecx=(\cos\varphi\sin\theta,\sin\varphi\sin\theta,\cos\theta)^\top$),
where
$P_n^m$ are the associated Legendre functions of the first kind \cite[Sect.~14.21]{NIST:22},
$f_n^m(R)$ are the Fourier coeﬃcients defined by
\begin{equation}\label{eq:Fouriercoeff3d}
\begin{aligned}
f_n^m(R) = (f_{S_R}(R\cdot),Y_n^m)_{S_1}
&= \int_{S_1}f_{S_R}(R\hat\vecx)\overline{Y_n^m}(\hat\vecx)ds(\hat\vecx)\\
&=\int_0^{2\pi}\int_0^\pi f_{S_R}(R,\varphi,\theta)
\overline{Y_n^m}(\varphi,\theta)\sin\theta d\theta d\varphi,
\end{aligned}
\end{equation}
and $ds(\hat\vecx)$ is the Lebesgue surface area element.
The relationship
\[
h_n^{(1)}(x)=\sqrt{\frac{\pi}{2x}}H_{n+1/2}^{(1)},
\quad
n\in\N_0,
\]
(see, e.g., \cite[Sect.~10.47]{NIST:22})
and the above remark about the lack of real zeros of the Hankel functions $H_\nu^{(1)}$
of real order $\nu\ge 0$ imply that the terms of the series \eqref{eq:Fourier3d} are well-defined.

Proceeding as in the two-dimensional case, we get
\[
\vecr\cdot\nabla u(\vecx) = \frac{\partial u}{\partial r}(r\hat\vecx)
= \kappa\sum_{n\in\N_0}\sum_{|m|\le n}
\frac{h_n^{(1)}(\kappa r)}{h_n^{(1)}(\kappa R)}\,
f_n^m(R)Y_n^m(\hat\vecx),
\quad
\vecx=r\hat\vecx\in S_r,\ r>R.
\]

Setting $f_R:=u|_{S_R}$ and letting $r\to R$,
we can define the (extended) DtN operator by
\begin{equation}\label{def:3dDtNb}
\begin{aligned}
T_\kappa u(\vecx)
&=\frac{1}{R}\sum_{n\in\N_0}\sum_{|m|\le n}
z_n(\kappa R)u_n^m(R)Y_n^m(\hat\vecx),
\quad
\vecx=R\hat\vecx\in S_R,
\end{aligned}
\end{equation}
where
\[
z_n(\xi):= \xi\,\frac{h_n^{(1)'}(\xi)}{h_n^{(1)}(\xi)}\,,
\]
and $u_n^m(R)$ are the Fourier coefficients of $u|_{S_R}$
analogously to \eqref{eq:Fouriercoeff3d}.
The admissibility of this procedure is proved in \cite[Thm.~2.15]{Colton:19} or
\cite[Thm.~2.6.2]{Nedelec:01}, for example. For the present situation
there is a boundedness result for $d=3$ analogous to \cite[Thm.~3.1]{Hsiao:11}
in \cite[Thm.~2.6.4]{Nedelec:01}.
In summary, the following statement applies to both dimensions.
\begin{theorem}\label{th:DtNprop}
The DtN operator $T_\kappa:\; H^{s+1/2}(S_R)\to H^{s-1/2}(S_R)$ is bounded
for any $s\ge 0$.
\end{theorem}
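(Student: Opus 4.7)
The plan is to exploit the explicit series representations \eqref{def:2dDtNb} and \eqref{def:3dDtNb} of $T_\kappa$ together with the Fourier characterization of Sobolev spaces on the sphere $S_R$. Recall that for any $t\in\R$ one has the equivalence
\[
\|v\|_{t,2,S_R}^2 \;\asymp\; \sum_{n} (1+n^2)^{t} \sum_{m} |v_n^m(R)|^2,
\]
where in the two-dimensional case the inner sum collapses to a single term indexed by $n\in\Z$ and in the three-dimensional case $m$ runs over $|m|\le n$. Applying this equivalence simultaneously on the domain side (with $t=s+1/2$) and the image side (with $t=s-1/2$), the claim reduces to the uniform symbol estimate
\[
|Z_n(\kappa R)| \le C_\kappa(1+|n|),\qquad |z_n(\kappa R)| \le C_\kappa(1+n),
\]
valid for all $n$, since then Parseval yields
\[
\|T_\kappa u\|_{s-1/2,2,S_R}^2
\;\le\; \frac{C_\kappa^2}{R^2}\sum_n (1+n^2)^{s-1/2}(1+n)^2\sum_m |u_n^m(R)|^2
\;\le\; C\,\|u\|_{s+1/2,2,S_R}^2.
\]

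The main obstacle, and the only step requiring nontrivial input, is the uniform bound on the symbols $Z_n$ and $z_n$. I would split it into a finite part and an asymptotic part. For bounded $n$ the bound is trivial because $H_n^{(1)}(\kappa R)\ne 0$ (this nonvanishing has already been recorded in the paper via \cite[p.~62]{Bateman:81b}), so $|Z_n(\kappa R)|$ is finite and depends continuously on $n$. For $n\to\infty$ with $\xi=\kappa R>0$ fixed, the standard large-order asymptotics
\[
H_n^{(1)}(\xi) \sim -\frac{i\,(n-1)!}{\pi}\left(\frac{2}{\xi}\right)^{n},\qquad
H_n^{(1)\prime}(\xi) \sim \frac{i\,n!}{\pi\,\xi}\left(\frac{2}{\xi}\right)^{n}
\]
(together with the analogous asymptotics for $h_n^{(1)}$ obtained via $h_n^{(1)}(\xi)=\sqrt{\pi/(2\xi)}\,H_{n+1/2}^{(1)}(\xi)$) give
\[
Z_n(\xi) \;=\; \xi\,\frac{H_n^{(1)\prime}(\xi)}{H_n^{(1)}(\xi)} \;\longrightarrow\; -n,
\]
so that $Z_n(\kappa R)/n\to -1$ as $|n|\to\infty$, and similarly $z_n(\kappa R)/n\to -1$. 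Combining the finite-$n$ boundedness with this asymptotic linear growth yields the required estimate $|Z_n(\kappa R)|,|z_n(\kappa R)|\lesssim 1+|n|$.

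An alternative, and perhaps cleaner, route is to use the recurrence
\[
\xi\,H_n^{(1)\prime}(\xi) = n\,H_n^{(1)}(\xi) - \xi\,H_{n+1}^{(1)}(\xi),
\]
which gives $Z_n(\xi) = n - \xi\,H_{n+1}^{(1)}(\xi)/H_n^{(1)}(\xi)$, reducing the symbol bound to a control of the consecutive Hankel ratio $H_{n+1}^{(1)}/H_n^{(1)}$; the latter is known to behave like $2n/\xi$ for large $n$, so its contribution is absorbed into the $O(n)$ estimate without extra effort. This avoids invoking the full asymptotic expansion and only uses the recurrence together with monotonicity properties of Hankel ratios. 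Either way, once the symbol bound is in hand, the boundedness statement follows mechanically from Parseval, and the argument is uniform in $s\ge 0$ because the weight $(1+n^2)^{s-1/2}(1+n)^2$ is dominated by $(1+n^2)^{s+1/2}$ for every $s$.
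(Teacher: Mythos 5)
Your argument is correct, and it is the standard Fourier--multiplier proof; note, however, that the paper does not actually prove Theorem~\ref{th:DtNprop} this way -- it disposes of the statement by citing Ernst, Hsiao et al.\ and N\'ed\'elec -- so your proposal is a self-contained substitute rather than a reproduction. The same machinery does appear elsewhere in the paper: the Fourier characterizations \eqref{def:SobFourierNorm2}, \eqref{def:SobFourierNorm3} of the $H^s(S_R)$-norms together with the symbol bounds of Corollary~\ref{cor:propHankelfunctions}, namely $|Z_n(\xi)|^2\le(1+n^2)(1+|\xi|^2)$ and $|z_n(\xi)|^2\le(1+n^2)(2+|\xi|^2)$, are exactly the ingredients you need, and they are deployed in precisely your fashion in the proof of Lemma~\ref{l:Tkuniformcontinuity} (see \eqref{eq:TkuniformcontinuitySR2}, \eqref{eq:TkuniformcontinuitySR3}). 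The genuine difference is how the linear-growth bound on the symbols is obtained: the paper gets it with fully explicit constants from the sign and size information $-n\le\Re Z_n(\xi)\le-\tfrac12$, $0<\Im Z_n(\xi)<\xi$ of Lemma~\ref{l:propHankelfunctions} (imported from Shen--Wang and N\'ed\'elec), whereas you derive $Z_n(\xi)\sim -n$ from large-order Hankel asymptotics (or the recurrence), which is correct but yields only a non-explicit constant $C_\kappa$ and relies on the -- easy, but worth stating -- step that convergence of $|Z_n(\xi)|/|n|$ together with finiteness of each individual $Z_n(\xi)$ (no real zeros of $H_\nu^{(1)}$) gives a bound uniform in $n$. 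Two minor points: for $d=2$ you should invoke the symmetry $Z_{-n}=Z_n$, which follows from $H_{-n}^{(1)}=(-1)^nH_n^{(1)}$, so that the asymptotics also cover $n\to-\infty$ (as written, $Z_n(\kappa R)/n\to-1$ is the wrong sign for negative $n$); and your proof takes the series \eqref{def:2dDtNb}, \eqref{def:3dDtNb} as the \emph{definition} of $T_\kappa$, which is consistent with the paper's definition of the extended operator but silently subsumes the admissibility of the term-by-term differentiation that the paper delegates to the cited literature.
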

\begin{remark}\label{rem:sharpDtNbound}
A more refined analysis of the DtN operator in the case $s=0$ results in a sharp estimate
of its norm w.r.t.\ the wavenumber \cite[Thm.~1.4]{Baskin:16}:
Given $\kappa_0>0$, there exists a constant $C>0$ independent of $\kappa$ such that
\[
\|T_\kappa v\|_{-1/2,2,S_R} \le C\kappa \|v\|_{1/2,2,S_R}
\quad\text{for all }
v\in H^1_\mathrm{loc}(B_R^+)
\quad\text{and}\quad
\kappa \ge \kappa_0.
\]
\end{remark}
The result from \cite[Thm.~1.4]{Baskin:16} applies to more general domains,
for the present situation it already follows from the proof of Lemma~\ref{l:Tkuniformcontinuity}
(see the estimates \eqref{eq:TkuniformcontinuitySR2}, \eqref{eq:TkuniformcontinuitySR3}
for $s=0$, where the bounds do not depend on $N$).

At the end of this section we give a collection of some properties of the
coefficient functions in the representations \eqref{def:2dDtNb}, \eqref{def:3dDtNb}
which will be used in some of the subsequent proofs.

\begin{lemma}\label{l:propHankelfunctions}
For all $\xi>0$, the following holds:
\begin{gather*}
-n \le \Re Z_n(\xi) \le -\frac12,
\quad
0 < \Im Z_n(\xi) < \xi
\quad\text{for all }
|n|\in\N,\\
-\frac12 \le \Re Z_0(\xi) < 0,
\quad
\xi < \Im Z_0(\xi),
\\
-(n+1)\le \Re z_n(\xi) \le -1,
\quad
0 < \Im z_n(\xi) \le \xi
\quad\text{for all }
n\in\N,\\
\Re z_0(\xi) = -1,
\quad
\Im z_0(\xi) = \xi.
\end{gather*}
\end{lemma}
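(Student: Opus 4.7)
The plan is to reduce every inequality in the lemma to a sign or monotonicity property of the single real-valued quantity $M_n^2(\xi) := |H_n^{(1)}(\xi)|^2 = J_n(\xi)^2 + Y_n(\xi)^2$, and then to read all of these properties off from Nicholson's integral representation. Writing $H_n^{(1)}=J_n+iY_n$ and using the Wronskian $J_n Y_n' - J_n' Y_n = 2/(\pi\xi)$, a short calculation produces the two clean formulas
\[
\Im Z_n(\xi) = \frac{2}{\pi M_n^2(\xi)}, \qquad
\Re Z_n(\xi) = \frac{\xi}{2}\,(\log M_n^2)'(\xi),
\]
so $\Im Z_n > 0$ is immediate. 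The four remaining bounds become equivalent, respectively, to $M_0^2<2/(\pi\xi)$ (for $\Im Z_0>\xi$), $M_n^2>2/(\pi\xi)$ for $|n|\geq 1$ (for $\Im Z_n<\xi$), $\xi M_n^2$ non-increasing in $\xi$ (for $\Re Z_n\leq -\tfrac12$), and $\xi^{2n}M_n^2$ non-decreasing in $\xi$ (for $\Re Z_n\geq -n$).

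The main tool is Nicholson's formula $M_\nu^2(\xi) = \tfrac{8}{\pi^2}\int_0^\infty \cosh(2\nu t)\,K_0(2\xi\sinh t)\,dt$. After the substitution $u = 2\xi\sinh t$ (with $\theta := \operatorname{arsinh}(u/(2\xi))$) it takes the convenient form
\[
\xi M_\nu^2(\xi) = \frac{4}{\pi^2}\int_0^\infty \frac{\cosh(2\nu\theta)}{\cosh\theta}\,K_0(u)\,du,
\]
and since $\int_0^\infty K_0(u)\,du = \pi/2$, the imaginary-part comparisons reduce to the pointwise inequalities between $\cosh(2n\theta)$ and $\cosh\theta$ for $\theta>0$ (strict $<$ when $n=0$, strict $>$ when $|n|\geq 1$).

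For the upper bound $\Re Z_n\leq -\tfrac12$, I would use that $\theta(u,\xi)$ is strictly decreasing in $\xi$ for each $u>0$, and reduce the task to the sign of
\[
\frac{d}{d\theta}\frac{\cosh(2\nu\theta)}{\cosh\theta}
= \frac{(\nu-\tfrac12)\sinh((2\nu+1)\theta) + (\nu+\tfrac12)\sinh((2\nu-1)\theta)}{\cosh^2\theta}
\]
(via product-to-sum), which is positive for $\nu\geq 1$, negative for $\nu=0$, and identically zero for $\nu=\tfrac12$; this gives $\Re Z_n\leq -\tfrac12$ for $|n|\geq 1$, $\Re Z_0\geq -\tfrac12$, and the equality $\Re Z_{1/2}=-\tfrac12$. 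The lower bound $\Re Z_n\geq -n$ uses the same Nicholson integral in a different guise: setting $D := \sqrt{u^2+4\xi^2}$, $A:=D+u$, $B:=D-u$ rewrites it as $\xi^{2\nu}M_\nu^2(\xi) = \tfrac{4}{\pi^2 2^{2\nu}}\int_0^\infty (A^{2\nu}+B^{2\nu})D^{-1}K_0(u)\,du$, and differentiation under the integral sign (using $\partial_\xi D = \partial_\xi A = \partial_\xi B = 4\xi/D$) shows both $\partial_\xi(A^{2\nu}/D)$ and $\partial_\xi(B^{2\nu}/D)$ non-negative as soon as $\nu\geq 1$, since $D\geq u$.

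The spherical statement follows from the elementary identity $z_n(\xi) = Z_{n+1/2}(\xi) - \tfrac12$, a direct consequence of $h_n^{(1)}(\xi) = \sqrt{\pi/(2\xi)}\,H_{n+1/2}^{(1)}(\xi)$: all bounds for $z_n$ with $n\geq 1$ are just shifted versions of those for $Z_\nu$ with $\nu=n+\tfrac12\geq \tfrac32$, while the case $n=0$ corresponds to the transition value $\nu=\tfrac12$, at which the integrand $\cosh(2\nu\theta)/\cosh\theta$ is identically $1$ and the proof yields directly the edge equalities $\Re z_0 = -1$ and $\Im z_0 = \xi$ (also readable off $h_0^{(1)}(\xi) = -ie^{i\xi}/\xi$). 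The main obstacle is picking substitutions for Nicholson's integral that expose a visible sign for each $\xi$-derivative; the critical value $\nu = \tfrac12$, precisely the boundary case $n=0$ of the spherical statement, requires careful handling because the monotonicity direction of $\xi M_\nu^2$ flips there.
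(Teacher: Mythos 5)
Your proof is correct in substance, and it takes a genuinely different route from the paper: the paper does not actually prove this lemma but delegates it entirely to the literature (Shen--Wang, eq.~(2.34), for the cylindrical bounds; N\'ed\'elec, Thm.~2.6.1, for the spherical ones; plus a remark via Melenk--Sauter to upgrade $0\le\Im z_n$ to strict positivity), whereas you give a unified, self-contained derivation. Your two reductions --- $\Im Z_\nu=2/(\pi M_\nu^2)$ via the Wronskian and $\Re Z_\nu=\tfrac{\xi}{2}(\log M_\nu^2)'$, followed by Nicholson's integral in the two substituted forms --- are exactly the right mechanism. I checked the product-to-sum identity for $\frac{d}{d\theta}\bigl(\cosh(2\nu\theta)/\cosh\theta\bigr)$, the identification $e^{\pm\theta}=(D\pm u)/(2\xi)$ behind the $A,B,D$ rewriting, the resulting signs $(2\nu-1)D\mp u\ge0$ for $\nu\ge1$ (using $D\ge u$), and the shift identity $z_n=Z_{n+1/2}-\tfrac12$; all are correct, and the half-integer case $\nu=n+\tfrac12\ge\tfrac32$ is covered because your monotonicity arguments only require real $\nu\ge1$. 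What your approach buys is a single identity, $\xi M_\nu^2=\frac{4}{\pi^2}\int_0^\infty\frac{\cosh(2\nu\theta)}{\cosh\theta}K_0(u)\,du$ with the transition value $2/\pi$ at $\nu=\tfrac12$, from which essentially all the stated inequalities and both edge equalities fall out uniformly in $d=2$ and $d=3$, instead of two separate external references whose statements the paper then has to patch for strictness.

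One omission you should repair: the strict inequality $\Re Z_0(\xi)<0$ is part of the lemma but is missing from your list of four reduced statements, and your $\theta$-monotonicity argument at $\nu=0$ only yields the lower bound $\Re Z_0\ge-\tfrac12$. It follows in one line from the same machinery: by Nicholson, $M_0^2(\xi)=\frac{8}{\pi^2}\int_0^\infty K_0(2\xi\sinh t)\,dt$ is strictly decreasing in $\xi$ because $K_0$ is strictly decreasing, hence $\Re Z_0=\frac{\xi}{2}(\log M_0^2)'<0$. You should also record the (routine) justification for differentiating under the integral sign, using the exponential decay of $K_0$ and the at most polynomial growth in $u$ of $A$, $B$, $D$.
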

\begin{proof}
For the case $d=2$,
the estimates can be found in \cite[eq.~(2.34)]{Shen:07}.
The other estimates can be found in \cite[Thm.~2.6.1]{Nedelec:01}, see also
\cite[eqs.~(2.22), (2.23)]{Shen:07}.
Although only $0\le \Im z_n(\xi)$ is specified in the formulation of the cited theorem,
the strict positivity follows from the positivity of the function $q_\ell$
in \cite[eq.~(2.6.34)]{Nedelec:01}, as has been mentioned in \cite{Melenk:10}.
\end{proof}
\begin{corollary}\label{cor:propHankelfunctions}
For all $\xi>0$, the following holds:
\begin{align*}
|Z_n(\xi)|^2 &\le (1+n^2)(1 + |\xi|^2)
\quad\text{for all }
|n|\in\N,\\
|z_n(\xi)|^2 &\le (1+n^2)(2 + |\xi|^2)
\quad\text{for all }
n\in\N_0.
\end{align*}
\end{corollary}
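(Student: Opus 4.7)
The plan is to bound $|Z_n(\xi)|^2$ and $|z_n(\xi)|^2$ by splitting into real and imaginary parts, using Lemma~\ref{l:propHankelfunctions} to control each part separately, and then verifying the claimed (deliberately slack, but conveniently factorizable) bounds by elementary algebra.

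First I would write $|Z_n(\xi)|^2 = (\Re Z_n(\xi))^2 + (\Im Z_n(\xi))^2$. From Lemma~\ref{l:propHankelfunctions}, for $|n|\in\N$ we have $-n\le\Re Z_n(\xi)\le -\tfrac12$ and $0<\Im Z_n(\xi)<\xi$, whence the sharp intermediate bound
\[
|Z_n(\xi)|^2 \le n^2 + \xi^2.
\]
To upgrade this to the claimed bound I would just note the identity
\[
(1+n^2)(1+\xi^2) - (n^2+\xi^2) = 1 + n^2\xi^2 \ge 0,
\]
which gives $|Z_n(\xi)|^2\le(1+n^2)(1+\xi^2)$ at once.

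For $z_n$ I would argue similarly but distinguish $n\ge 1$ and $n=0$. For $n\in\N$, Lemma~\ref{l:propHankelfunctions} yields $|\Re z_n(\xi)|\le n+1$ and $|\Im z_n(\xi)|\le\xi$, hence $|z_n(\xi)|^2\le (n+1)^2+\xi^2$. The claim then follows from the algebraic identity
\[
(1+n^2)(2+\xi^2) - \bigl[(n+1)^2+\xi^2\bigr] = (n-1)^2 + n^2\xi^2 \ge 0.
\]
For $n=0$, the last line of Lemma~\ref{l:propHankelfunctions} gives $|z_0(\xi)|^2 = 1+\xi^2 \le 2+\xi^2 = (1+0^2)(2+\xi^2)$, closing the case.

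There is no real obstacle here: everything reduces to inserting the pointwise estimates of Lemma~\ref{l:propHankelfunctions} into $|\cdot|^2=(\Re\cdot)^2+(\Im\cdot)^2$ and verifying two nonnegativity identities. The only mildly nontrivial point is recognising that the stated bounds are intentionally loose -- the slack terms $1+n^2\xi^2$ and $(n-1)^2+n^2\xi^2$ are precisely what allows the right-hand sides to split as products, which will presumably be the form needed in later summability arguments.
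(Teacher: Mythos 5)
Your proposal is correct and follows essentially the same route as the paper: split $|\cdot|^2$ into real and imaginary parts, insert the bounds from Lemma~\ref{l:propHankelfunctions} to get $n^2+\xi^2$ resp.\ $(n+1)^2+\xi^2$, and finish by elementary algebra (the paper divides by $1+n^2$ and bounds termwise, you verify nonnegativity of the difference -- trivially equivalent). The only cosmetic difference is that the paper explicitly invokes $H^{(1)}_{-n}=(-1)^nH^{(1)}_n$ to cover negative indices, whereas you rely on the lemma's bounds being stated for all $|n|\in\N$; since the resulting bound depends only on $n^2$, this changes nothing.
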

\begin{proof}
The estimates of the real and imaginary parts of $Z_n$ from Lemma~\ref{l:propHankelfunctions}
immediately imply that
\begin{align*}
\frac{1}{1+n^2} |Z_n(\xi)|^2
&= \frac{1}{1+n^2} \left[|\Re Z_n(\xi)|^2 + |\Im Z_n(\xi)|^2\right]\\
&\le \frac{1}{1+n^2} \left[n^2 + |\xi|^2\right]
\le 1 + \frac{|\xi|^2}{1+n^2}
\le 1 + |\xi|^2,
\quad
n\in\N.
\end{align*}
Since $H^{(1)}_{-n}(\xi)=(-1)^n H^{(1)}_{n}(\xi)$, $n\in\N$
\cite[eq.~(10.4.2)]{NIST:22}, the estimate is also valid for $n$ such that $-n\in\N$.

Analogously we obtain from Lemma~\ref{l:propHankelfunctions} that
\begin{align*}
\frac{1}{1+n^2} |z_n(\xi)|^2
&= \frac{1}{1+n^2} \left[|\Re z_n(\xi)|^2 + |\Im z_n(\xi)|^2\right]\\
&\le \frac{1}{1+n^2} \left[(1+n)^2 + |\xi|^2\right]
\le 2 + \frac{|\xi|^2}{1+n^2}
\le 2 + |\xi|^2.
\end{align*}
\end{proof}

\section{Weak formulations of the interior problem}

Now we turn to the consideration of the problem \eqref{eq:genscalproblem}--\eqref{eq:sommerfeld1}.

In the classical setting it can be formulated as follows:
Given $u^\mathrm{inc}\in H^1_\mathrm{loc}(\Om^+)$,
determine the transmitted field $u^\mathrm{trans}:\;\Om\to\C$ and
the radiated/scattered field $u^\mathrm{rad}:\;\Om^\mathrm{c}\to\C$
satisfying
\begin{equation}\label{eq:classicalintproblem}
\begin{aligned}
-\Delta u^\mathrm{trans} - \kappa^2 c(\cdot,u^\mathrm{trans}) \,u^\mathrm{trans} &= f(\cdot,u^\mathrm{trans})
&&\text{in }\Om,\\
-\Delta u^\mathrm{rad} - \kappa^2 u^\mathrm{rad} &= 0
&&\text{in }\Om^+,\\
u^\mathrm{trans} &= u^\mathrm{rad} + u^\mathrm{inc}
&&\text{on }\pOm,\\
\vecnu\cdot\nabla u^\mathrm{trans} &= \vecnu\cdot\nabla u^\mathrm{rad}
+ \vecnu\cdot\nabla u^\mathrm{inc}
&&\text{on }\pOm
\end{aligned}
\end{equation}
and the radiation condition \eqref{eq:sommerfeld1}.
Note that the incident field is usually a (weak) solution of either the homogeneous or
inhomogeneous Helmholtz equation in $\Om^+$, i.e.\ the second equation
in \eqref{eq:classicalintproblem} can be replaced by
\begin{equation}\label{eq:classicalintproblem2}
-\Delta u - \kappa^2 u = f^\mathrm{inc}\quad\text{in }\Om^+,
\end{equation}
where $f^\mathrm{inc}:\;\Om^+\to\C$ is an eventual source density.
For simplicity we do not include the case of a nontrivial source density
in our investigation, but the subsequent theory can be easily extended
by adding an appropriate linear functional, say $\ell^\mathrm{src}$,
on the right-hand side of the obtained weak formulations
(see \eqref{eq:weakfullspace} or \eqref{eq:weakball} later).

In order to give a weak formulation of \eqref{eq:classicalintproblem} with
the modification \eqref{eq:classicalintproblem2} in the case $f^\mathrm{inc}=0$,
we introduce the (complex) linear function spaces
\begin{align*}
H^1_\mathrm{comp}(\Om^+) &:= \left\{v\in H^1(\Om^+):
\;\supp v\ \text{ is compact}\right\},\\
V_{\R^d} &:= \{v\in L_2(\R^d):\; v|_\Om\in H^1(\Om),\ v|_{\Om^+}\in H^1_\mathrm{loc}(\Om^+):
\;\gamma v|_\Om = \gamma v|_{\Om^+} \text{ on }\pOm\},\\
V_{\R^d}^\circ &:= \{v\in L_2(\R^d):\; v|_\Om\in H^1(\Om),\ v|_{\Om^+}\in H^1_\mathrm{comp}(\Om^+):
\;\gamma v|_\Om = \gamma v|_{\Om^+} \text{ on }\pOm\}
\end{align*}
(note the comment at the beginning of Section \ref{sec:problem} on the notation for trace operators)
and multiply the first equation of \eqref{eq:classicalintproblem} by the restriction
$v|_\Om$ of an arbitrary element $v\in V_{\R^d}$ and \eqref{eq:classicalintproblem2}
by the restriction $v|_{\Om^+}$ of $v\in V_{\R^d}$, respectively, and integrate by parts:
\begin{align*}
(\nabla u^\mathrm{trans},\nabla v)_\Om - (\vecnu\cdot\nabla u^\mathrm{trans},\nabla v)_{\pOm}
- \kappa^2(c(\cdot,u^\mathrm{trans}) u^\mathrm{trans},v)_\Om
&=(f(\cdot,u^\mathrm{trans}),v)_\Om,\\
(\nabla u,\nabla v)_{\Om^+} - (\vecnu\cdot\nabla u,\nabla v)_{\partial\Om^+} - \kappa^2(u,v)_{\Om^+}
&=0.
\end{align*}
Here we use the notation, for any domain $M\subset\R^d$ with boundary $\partial M$
and appropriately defined functions on $M$ or $\partial M$,
\begin{align*}
(\nabla w,\nabla v)_M &:= \int_M \nabla w\cdot\nabla\overline{v} d\vecx,\\
(w,v)_M &:= \int_M w\overline{v} d\vecx,\\
(w,v)_{\partial M} &:= \int_{\partial M} w\overline{v} ds(\vecx)
\end{align*}
(the overbar over functions denotes complex conjugation).
Taking into consideration the last transmission condition in \eqref{eq:classicalintproblem},
the relationship $\vecnu|_\Om = - \vecnu|_{\Om^+}$, and the fact that the last but one
transmission condition in \eqref{eq:classicalintproblem} is included in the definition of
the space $V_{\R^d}$, we define a bivariate nonlinear form on
$V_{\R^d}\times V_{\R^d}^\circ$ by
\begin{align*}
a_{\R^d}(w,v)
&:=(\nabla w,\nabla v)_\Om + (\nabla w,\nabla v)_{\Om^+} - \kappa^2(c(\cdot,w) w,v)_{\R^d},
\end{align*}
cf., e.g., \cite[Example~21.8]{Wloka:87}.
\begin{definition}
Given $u^\mathrm{inc}\in H^1_\mathrm{loc}(\Om^+)$,
a \emph{weak solution} to the problem \eqref{eq:genscalproblem}--\eqref{eq:sommerfeld1}
is defined as an element $u\in V_{\R^d}$ that has the structure \eqref{eq:solstructure},
satisfies the variational equation
\begin{equation}\label{eq:weakfullspace}
a_{\R^d}(u,v)=(f(\cdot,u),v)_{\R^d}
\quad\text{for all }
v\in V_{\R^d}^\circ
\end{equation}
and the Sommerfeld radiation condition \eqref{eq:sommerfeld1}.
\end{definition}
A second weak formulation can be obtained if we do not replace
the second Helmholtz equation in \eqref{eq:classicalintproblem} by \eqref{eq:classicalintproblem2}.
Then the first step in the derivation of the weak formulation reads as
\begin{align*}
(\nabla u^\mathrm{trans},\nabla v)_\Om - (\vecnu\cdot\nabla u^\mathrm{trans},\nabla v)_{\pOm}
- \kappa^2(c(\cdot,u^\mathrm{trans}) u^\mathrm{trans},v)_\Om
&=(f(\cdot,u^\mathrm{trans}),v)_\Om,\\
(\nabla u^\mathrm{rad},\nabla v)_{\Om^+} - (\vecnu\cdot\nabla u^\mathrm{rad},\nabla v)_{\partial\Om^+}
- \kappa^2(u^\mathrm{rad},v)_{\Om^+}
&=0.
\end{align*}
The last transmission condition in \eqref{eq:classicalintproblem} allows to rewrite
the first equation as
\begin{gather*}
(\nabla u^\mathrm{trans},\nabla v)_\Om
- (\vecnu\cdot\nabla u^\mathrm{rad},\nabla v)_{\pOm}
- \kappa^2(c(\cdot,u^\mathrm{trans}) u^\mathrm{trans},v)_\Om\\
=(f(\cdot,u^\mathrm{trans}),v)_\Om + (\vecnu\cdot\nabla u^\mathrm{inc},\nabla v)_{\pOm},
\end{gather*}
leading to the weak formulation
\[
(\nabla u_0,\nabla v)_\Om + (\nabla u_0,\nabla v)_{\Om^+} - \kappa^2(c(\cdot,u_0) u_0,v)_{\R^d}
=(f(\cdot,u_0),v)_{\R^d}
+ (\vecnu\cdot\nabla u^\mathrm{inc},v)_\pOm
\quad\text{for all }
v\in V_{\R^d}^\circ
\]
with respect to the structure
\[
u_0:=\begin{cases}
u^\mathrm{rad} & \text{in }\Om^\mathrm{c},\\
u^\mathrm{trans} & \text{in }\Om,
\end{cases}
\]
where $u^\mathrm{rad}\in H^1_\mathrm{loc}(\Om^+)$, $u^\mathrm{trans}\in H^1(\Om)$.

The advantage of this formulation is that it clearly separates the unknown
and the known parts of the fields,
so we call this formulation the \emph{input-output formulation}.
The disadvantage is that the natural function space
of the solution $u_0$ is not a linear space due to the last but one
transmission condition in \eqref{eq:classicalintproblem}.

Instead of the problem \eqref{eq:genscalproblem}--\eqref{eq:sommerfeld1}
we want to solve an equivalent problem in the bounded domain $B_R$,
that is, we define
\[
V := \{v\in L_2(B_R):\; v|_\Om\in H^1(\Om),\ v|_{B_R\setminus\overline\Om}\in H^1(B_R\setminus\overline\Om):
\;\gamma v|_\Om = \gamma v|_{B_R\setminus\overline\Om} \text{ on }\pOm\}
\]
and look for an element $u\in V$ such that
\begin{equation}\label{eq:intscalproblem}
\begin{alignedat}{2}
-\Delta u^\mathrm{trans} - \kappa^2 c(\cdot,u^\mathrm{trans}) \,u &= f(\cdot,u^\mathrm{trans})
&&\quad\text{in }\Om,\\
-\Delta u - \kappa^2 u &= 0
&&\quad\text{in }B_R\setminus\overline\Om,\\
u^\mathrm{trans} &= u^\mathrm{rad} + u^\mathrm{inc}
&&\quad\text{on }\pOm,\\
\vecnu\cdot\nabla u^\mathrm{trans} &= \vecnu\cdot\nabla u^\mathrm{rad}
+ \vecnu\cdot\nabla u^\mathrm{inc}
&&\quad\text{on }\pOm,\\
\vecr\cdot\nabla u^\mathrm{rad} &= T_\kappa u^\mathrm{rad}
&&\quad\text{on } S_R
\end{alignedat}
\end{equation}
formally holds.
Now the weak formulation of problem \eqref{eq:intscalproblem} reads as follows:

Find $u\in V$ such that
\begin{equation}\label{eq:weakball}
\begin{aligned}
& (\nabla u,\nabla v)_\Om + (\nabla u,\nabla v)_{B_R\setminus\overline\Om}
- \kappa^2(c(\cdot,u) u,v)_{B_R} - (T_\kappa u,v)_{S_R}\\
&= (f(\cdot,u),v)_{B_R}
- (T_\kappa u^\mathrm{inc},v)_{S_R} + (\vecr\cdot\nabla u^\mathrm{inc},v)_{S_R}
\end{aligned}
\end{equation}
for all $v\in V$ holds.

\begin{lemma}\label{l:exintequiv}
The weak formulations \eqref{eq:weakfullspace} and \eqref{eq:weakball}
of the problems \eqref{eq:genscalproblem}--\eqref{eq:sommerfeld1}
and \eqref{eq:intscalproblem}, resp., are equivalent.
\end{lemma}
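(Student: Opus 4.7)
The plan is to derive each implication by applying Green's identity on the unbounded region $B_R^+$ together with the DtN characterisation $\vecr\cdot\nabla w = T_\kappa w$ on $S_R$ of outgoing Sommerfeld solutions of the homogeneous Helmholtz equation, as recalled in Section~3. Concretely, for any such $w$ and any $v$ of compact support in $\overline{B_R^+}$ with $v|_{B_R^+}\in H^1(B_R^+)$ (so the boundary contribution at infinity drops out), one has
\[(\nabla w,\nabla v)_{B_R^+} - \kappa^2(w,v)_{B_R^+} = -(\vecr\cdot\nabla w, v)_{S_R} = -(T_\kappa w, v)_{S_R},\]
where the rightmost pairing is interpreted as the $H^{-1/2}(S_R)\times H^{1/2}(S_R)$ duality permitted by Theorem~\ref{th:DtNprop}. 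Applied separately to $w=u^\mathrm{rad}$ and to $w=u^\mathrm{inc}$, this identity is the tool that exchanges \eqref{eq:weakfullspace} and \eqref{eq:weakball}.

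For the forward implication, I would start from $u\in V_{\R^d}$ satisfying \eqref{eq:weakfullspace} and pick an arbitrary test function $v\in V$. First I would lift $v$ to an element $\tilde v\in V_{\R^d}^\circ$ by extending $\gamma v|_{S_R}\in H^{1/2}(S_R)$ across the annulus $\{R<|\vecx|<R+1\}$ to an $H^1$-function that vanishes on $S_{R+1}$ and then extending by zero outside $B_{R+1}$ (standard $H^{1/2}$-to-$H^1$ trace lifting). Decomposing $\Om^+=(B_R\setminus\overline\Om)\cup S_R\cup B_R^+$ and using $c\equiv 1$ and $f\equiv 0$ on $\Om^\mathrm{c}$, I would rewrite the $\R^d$-integrals in \eqref{eq:weakfullspace} as $B_R$-integrals plus a $B_R^+$-remainder. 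On $B_R^+$ the structure \eqref{eq:solstructure} gives $u=u^\mathrm{rad}+u^\mathrm{inc}$ with both summands satisfying the homogeneous Helmholtz equation; applying the identity above once to $u^\mathrm{rad}$ and once to $u^\mathrm{inc}$ and using $u|_{S_R}=u^\mathrm{rad}|_{S_R}+u^\mathrm{inc}|_{S_R}$ converts the remainder into $-(T_\kappa u,\tilde v)_{S_R}+(T_\kappa u^\mathrm{inc},\tilde v)_{S_R}-(\vecr\cdot\nabla u^\mathrm{inc},\tilde v)_{S_R}$. Rearranging reproduces \eqref{eq:weakball}.

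For the reverse implication, given $u\in V$ solving \eqref{eq:weakball} I would set $u^\mathrm{trans}:=u|_\Om$ and $u^\mathrm{rad}:=u-u^\mathrm{inc}$ on $B_R\setminus\overline\Om$, then extend $u^\mathrm{rad}$ to $B_R^+$ as the unique solution of the exterior Dirichlet problem \eqref{eq:extDirproblem} with datum $(u-u^\mathrm{inc})|_{S_R}$; existence, uniqueness, the Sommerfeld condition and the DtN identity $\vecr\cdot\nabla u^\mathrm{rad}=T_\kappa u^\mathrm{rad}$ are all guaranteed by Section~3. The assembled field $\tilde u$ --- equal to $u^\mathrm{trans}$ on $\Om$ and to $u^\mathrm{rad}+u^\mathrm{inc}$ on $\Om^\mathrm{c}$ --- lies in $V_{\R^d}$ because its trace matches across $\pOm$ by $u\in V$ and across $S_R$ by construction. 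To verify \eqref{eq:weakfullspace}, I would take any $v\in V_{\R^d}^\circ$, note $v|_{B_R}\in V$, substitute into \eqref{eq:weakball}, and reverse the forward computation by re-applying the Green-type identity on $B_R^+$; the $T_\kappa u^\mathrm{inc}$ and $\vecr\cdot\nabla u^\mathrm{inc}$ boundary terms cancel and the remaining integrals combine to produce \eqref{eq:weakfullspace}. The main delicate point throughout is the rigorous use of Green's identity on the unbounded $B_R^+$ at the $H^1_\mathrm{loc}$ regularity of $u^\mathrm{rad}$, but since compact support of the test function discards the boundary term at infinity and the $S_R$-pairing is well-defined by Theorem~\ref{th:DtNprop}, this reduces to the standard weak form of the DtN boundary condition for exterior Helmholtz solutions.
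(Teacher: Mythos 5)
Your argument is correct, and in the reverse direction it coincides with the paper's proof almost verbatim (radiating extension of $(u-u^\mathrm{inc})|_{S_R}$ via \eqref{eq:extDirproblem}, Green's identity on a shell containing $\supp v$, cancellation of the $T_\kappa u^\mathrm{inc}$ and $\vecr\cdot\nabla u^\mathrm{inc}$ terms). In the forward direction you take a genuinely different route. The common difficulty is that a test function $v\in V$ does not extend by zero across $S_R$ to an element of $V_{\R^d}^\circ$. The paper resolves this with a partition of unity subordinate to an intermediate ball $B_{R'}$ with $\overline\Om\subset B_{R'}\subset B_R$: the piece $v_1$ supported in $B_R$ is fed into \eqref{eq:weakfullspace} directly, while the piece $v_2$ supported in $B_{R'}^{\mathrm{c}}$ is paired with the \emph{strong} form of the homogeneous Helmholtz equation (justified by elliptic regularity) and integrated by parts over the bounded shell $B_R\setminus\overline{B}_{R'}$, which produces the $S_R$-terms. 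You instead lift $\gamma v|_{S_R}$ to an $H^1$ function on an exterior annulus, test \eqref{eq:weakfullspace} once with the resulting $\tilde v\in V_{\R^d}^\circ$, and convert the $B_R^+$-remainder by Green's identity applied separately to $u^\mathrm{rad}$ and $u^\mathrm{inc}$. This is arguably cleaner and makes the forward direction the exact mirror of the reverse one; the price is that you must still justify, exactly as the paper does via elliptic regularity and the uniqueness theorem for radiating exterior solutions, that $u^\mathrm{rad}|_{B_R^+}$ really is \emph{the} outgoing solution with Dirichlet datum $u^\mathrm{rad}|_{S_R}$, so that $\vecr\cdot\nabla u^\mathrm{rad}=T_\kappa u^\mathrm{rad}$ on $S_R$ -- your appeal to ``Section 3'' should be expanded to this two-step argument. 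One small caution on your framing: the identity $\vecr\cdot\nabla w=T_\kappa w$ holds only for \emph{outgoing} solutions, so it may be applied to $u^\mathrm{rad}$ but not to $u^\mathrm{inc}$; your displayed outcome $-(T_\kappa u,\tilde v)_{S_R}+(T_\kappa u^\mathrm{inc},\tilde v)_{S_R}-(\vecr\cdot\nabla u^\mathrm{inc},\tilde v)_{S_R}$ shows you in fact treated the two summands correctly, but the sentence ``applied separately to $w=u^\mathrm{rad}$ and to $w=u^\mathrm{inc}$'' should make explicit that only the Green's-identity half of the tool is used for the incident field.
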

\begin{proof}First let $u\in V_{\R^d}$ be a weak solution to
\eqref{eq:genscalproblem}--\eqref{eq:sommerfeld1},
i.e.\ it satisfies \eqref{eq:weakfullspace}.
Then its restriction to $B_R$ belongs to $V$.

To demonstrate that this restriction satisfies the weak formulation \eqref{eq:weakball},
we construct the radiating solution $u_{B_{R'}^\mathrm{c}}$
of the homogeneous Helmholtz equation outside of a smaller ball $B_{R'}$
such that $\overline\Om\subset B_{R'}\subset B_R$ and
$\left. u_{B_{R'}^\mathrm{c}}\right|_{S_{R'}} = (u-u^\mathrm{inc})|_{S_{R'}}$.
This solution can be constructed in the form of a series expansion in terms of Hankel functions
as explained in the previous section.
By elliptic regularity (see, e.g., \cite[Thm.~4.16]{McLean:00}, \cite[Sect.~6.3.1]{Evans:15}),
the solution of this problem satisfies the Helmholtz equation in $B_{R'}^\mathrm{c}$.
Moreover, by uniqueness \cite[Thm.~2.6.5]{Nedelec:01},
it coincides with $u-u^\mathrm{inc}=u^\mathrm{rad}$ in $B_{R'}^\mathrm{c}$.

Now we choose a finite partition of unity covering $\overline{B}_R$, denoted by $\{\varphi_j\}_J$
\cite[Sect.~1.2]{Wloka:87}, such that its index set $J$ can be decomposed
into two disjoint subsets $J_1,J_2$ as follows:
\[
\overline{B}_{R'}\subset \inn\Big(\bigcup_{j\in J_1}\supp\varphi_j\Big),
\quad
\bigcup_{j\in J_1}\supp\varphi_j \subset B_R,
\quad
\bigcup_{j\in J_2}\supp\varphi_j \subset B_{R'}^\mathrm{c}.
\]
For example, we can choose $\{\varphi_j\}_{J_1}$ to consist of one element, say $\varphi_1$,
namely the usual mollifier function with support $B'$, where the open ball $B'$ (centered
at the origin) lies between $B_{R'}$ and $B_R$,
i.e.\ $\overline{B}_{R'}\subset B' = \inn\left(\supp\varphi_1\right)$,
$\supp\varphi_1 \subset B_R$.
Then the second part consists of a finite open covering
of the spherical shell $\overline{B}_R\setminus B'$.

Then we take, for any $v\in V$, the product $v_1:=v\sum_{j\in J_1}\varphi_j$.
This is an element of $V$, too, with support in $B_R$, and it can be continued
by zero to an element of $V_{\R^d}^\circ$ (keeping the notation).
Hence we can take it as a test function in the weak formulation \eqref{eq:weakfullspace}
and obtain
\[
a_{\R^d}(u,v_1)=(f(\cdot,u),v_1)_{\R^d}.
\]
This is equal to
\begin{align*}
& (\nabla u,\nabla v_1)_\Om + (\nabla u,\nabla v_1)_{B_R\setminus\overline\Om}
- \kappa^2(c(\cdot,u) u,v_1)_{B_R} - (T_\kappa u,v_1)_{S_R}\\
&= (f(\cdot,u),v_1)_{B_R}
- (T_\kappa u^\mathrm{inc},v_1)_{S_R} + (\vecr\cdot\nabla u^\mathrm{inc},v_1)_{S_R}
\end{align*}
due to the properties of the support of $v_1$
(in particular, all terms ``living'' on $S_R$ are equal to zero).

Since the homogeneous Helmholtz equation is satisfied in
$\bigcup_{j\in J_2}\supp\varphi_j\subset B_{R'}^\mathrm{c}$,
we can proceed as follows. We continue the test function $v_2:=v\sum_{j\in J_2}\varphi_j$
by zero into the complete ball $B_R$ and have
\begin{align*}
(f(\cdot,u),v_2)_{B_R\setminus\overline\Om}&=0
=(-\Delta u - \kappa^2 u,v_2)_{B_R\setminus\overline{B}_{R'}}\\
&=(\nabla u,\nabla v_2)_{B_R\setminus\overline{B}_{R'}}
- \kappa^2 (u, v_2)_{B_R\setminus\overline{B}_{R'}}
- (\vecnu\cdot\nabla u,v_2)_{\partial(B_R\setminus\overline{B}_{R'})}\\
&=(\nabla u,\nabla v_2)_{B_R\setminus\overline{B}_{R'}}
- \kappa^2 (u, v_2)_{B_R\setminus\overline{B}_{R'}} - (\vecr\cdot\nabla u,v_2)_{S_R}.
\end{align*}
Now, taking into consideration the properties of the support of $v_2$,
we easily obtain the following relations:
\begin{align*}
(\nabla u,\nabla v_2)_{B_R\setminus\overline{B}_{R'}}
&=(\nabla u,\nabla v_2)_\Om + (\nabla u,\nabla v_2)_{B_R\setminus\overline\Om},\\
(u, v_2)_{B_R\setminus\overline{B}_{R'}}
&=(c(\cdot,u) u,v_2)_{B_R},\\
(\vecr\cdot\nabla u,v_2)_{S_R}
&=(\vecr\cdot\nabla u^\mathrm{rad},v_2)_{S_R} + (\vecr\cdot\nabla u^\mathrm{inc},v_2)_{S_R}\\
&=(T_\kappa u^\mathrm{rad},v_2)_{S_R} + (\vecr\cdot\nabla u^\mathrm{inc},v_2)_{S_R}\\
&=(T_\kappa u,v_2)_{S_R} - (T_\kappa u^\mathrm{inc},v_2)_{S_R}
+ (\vecr\cdot\nabla u^\mathrm{inc},v_2)_{S_R},
\end{align*}
where the treatment of the last term makes use of the construction
of the Dirichlet-to-Neumann map $T_\kappa$.

Adding both relations and observing that $v=v_1+v_2$,
we arrive at the variational formulation \eqref{eq:weakball}.

Conversely, let $u\in V$ be a solution to \eqref{eq:weakball}.
To continue it into $B_R^\mathrm{c}$, similar to the first part of the proof
we construct the radiating solution $u_{B_R^\mathrm{c}}$
of the Helmholtz equation outside $B_R$ such that
$\left. u_{B_R^\mathrm{c}}\right|_{S_R} = (u-u^\mathrm{inc})|_{S_R}$
and set $u:=u_{B_R^\mathrm{c}} + u^\mathrm{inc}$ in $B_R^+$.
Hence we have that
$T_\kappa u = \frac{\partial u_{B_R^\mathrm{c}}}{\partial\vecr} + T_\kappa u^\mathrm{inc}$.

Now we take an element $v\in V_{\R^d}^\circ$. Its restriction to $B_R$
is an element of $V$ and thus can be taken as a test function in \eqref{eq:weakball}:
\begin{equation}\label{eq:weakball2}
\begin{aligned}
& (\nabla u,\nabla v)_\Om + (\nabla u,\nabla v)_{B_R\setminus\overline\Om}
- \kappa^2(c(\cdot,u) u,v)_{B_R} - (T_\kappa u,v)_{S_R}\\
&= (f(\cdot,u),v)_{B_R}
- (T_\kappa u^\mathrm{inc},v)_{S_R} + (\vecr\cdot\nabla u^\mathrm{inc},v)_{S_R}.
\end{aligned}
\end{equation}
Since $v$ has a compact support, we can choose a ball $B\subset\R^d$ centered at the origin such that
$\overline{B}_R\cup\supp v \subset B$.
The homogeneous Helmholtz equation is obviously satisfied in the spherical shell
$B\setminus\overline{B}_R$:
\[
-\Delta u_{B_R^\mathrm{c}} - \kappa^2 u_{B_R^\mathrm{c}} = 0.
\]
We multiply this equation by the complex conjugate of the test function $v\in V$,
then integrate over the shell, and apply the first Green's formula:
\[
(\nabla u_{B_R^\mathrm{c}},\nabla v)_{B\setminus\overline{B}_R}
-\kappa^2 (u_{B_R^\mathrm{c}},v)_{B\setminus\overline{B}_R}
-(\vecnu\cdot\nabla u_{B_R^\mathrm{c}},v)_{\partial(B\setminus\overline{B}_R)}
= 0.
\]
Now we observe that
\begin{align*}
(\nabla u_{B_R^\mathrm{c}},\nabla v)_{B\setminus\overline{B}_R}
&=(\nabla u_{B_R^\mathrm{c}},\nabla v)_{B_R^+},\\
(u_{B_R^\mathrm{c}},v)_{B\setminus\overline{B}_R}
&=(u_{B_R^\mathrm{c}},v)_{B_R^+},\\
(\vecnu\cdot\nabla u_{B_R^\mathrm{c}},v)_{\partial(B\setminus\overline{B}_R)}
&=-(\vecr\cdot\nabla u_{B_R^\mathrm{c}},v)_{S_R}
=-(T_\kappa u - T_\kappa u^\mathrm{inc},v)_{S_R}
\end{align*}
where the minus sign in the last line results from the change in the orientation
of the outer normal (once w.r.t.\ the shell, once w.r.t. $B_R$)
and the construction of $u_{B_R^\mathrm{c}}$.
So we arrive at
\[ 
(\nabla u_{B_R^\mathrm{c}},\nabla v)_{B_R^+}
-\kappa^2 (u_{B_R^\mathrm{c}},v)_{B_R^+}
+(T_\kappa u,v)_{S_R}
= (T_\kappa u^\mathrm{inc},v)_{S_R}.
\] 
Finally, since the incident field satisfies the homogeneous Helmholtz equation
in the spherical shell, too, we see by an analogous argument that the variational equation
\begin{equation}\label{eq:extball3}
(\nabla u^\mathrm{inc},\nabla v)_{B_R^+}
-\kappa^2 (u^\mathrm{inc},v)_{B_R^+}
= -(\vecr\cdot\nabla u^\mathrm{inc},v)_{S_R}
\end{equation}
holds.

Adding the variational equations \eqref{eq:weakball2} -- \eqref{eq:extball3},
we arrive at the variational formulation \eqref{eq:weakfullspace}.
\end{proof}

\section{Existence and uniqueness of a weak solution}

In this section we investigate the existence and uniqueness of the weak solution
of the interior problem \eqref{eq:intscalproblem}.
We define the sesquilinear form
\begin{equation}\label{eq:ainttransm}
a(w,v)
:=(\nabla w,\nabla v)_\Om + (\nabla w,\nabla v)_{B_R\setminus\overline\Om}
- \kappa^2 (w,v)_{B_R} - (T_\kappa w,v)_{S_R}
\quad\text{for all }
w,v\in V,
\end{equation}
the nonlinear form
\begin{equation}\label{eq:ninttransm}
\begin{aligned}
n(w,v) &:= \kappa^2((c(\cdot,w) - 1) w,v)_{B_R}
+ (f(\cdot,w),v)_{B_R}\\
&\quad
- (T_\kappa u^\mathrm{inc},v)_{S_R}
+ (\vecr\cdot\nabla u^\mathrm{inc},v)_{S_R}
\end{aligned}
\end{equation}
and reformulate \eqref{eq:weakball} as follows:
Find $u\in V$ such that
\begin{equation}\label{eq:intscalproblem3}
a(u,v) = n(u,v)
\quad\text{for all }
v\in V.
\end{equation}
On the space $V$, we use the standard seminorm and norm:
\[
|v|_V := \left(\|\nabla v\|_{0,2,\Om}^2 + \|\nabla v\|_{0,2,B_R\setminus\overline\Om}^2\right)^{1/2},
\quad
\|v\|_V := \left(|v|_V^2 + \|v\|_{0,2,B_R}^2\right)^{1/2}.
\] 
For $\kappa>0$, the following so-called \emph{wavenumber dependent norm} on $V$
is also common:
\[
\|v\|_{V,\kappa} := \left(|v|_V^2 + \kappa^2\|v\|_{0,2,B_R}^2\right)^{1/2}.
\]
It is not difficult to verify that the standard norm and the wavenumber dependent norm
are equivalent on $V$, i.e.\ it holds
\begin{equation}\label{eq:wavenumbernormequiv}
C_- \|v\|_V \le \|v\|_{V,\kappa} \le C_+ \|v\|_V
\quad\text{for all } v\in V,
\end{equation}
where the equivalence constants depend on $\kappa$ in the following way:
$C_- := \min\{1;\kappa\}$ and $C_+ := \max\{1;\kappa\}$.
We now proceed to examine the linear aspects of the problem \eqref{eq:intscalproblem3}.
\begin{lemma}\label{l:a-cont}
The sesquilinear form $a$ is bounded on $V$.
\end{lemma}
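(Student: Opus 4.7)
The plan is to bound each of the four terms in the definition of $a(w,v)$ separately by a constant multiple of $\|w\|_V\|v\|_V$ and then add the estimates.

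First, I would handle the two gradient terms by a direct application of the Cauchy--Schwarz inequality in $L_2(\Om)$ and $L_2(B_R\setminus\overline\Om)$, which immediately yields
\[
|(\nabla w,\nabla v)_\Om + (\nabla w,\nabla v)_{B_R\setminus\overline\Om}|
\le |w|_V\,|v|_V \le \|w\|_V\|v\|_V.
\]
The volume term $\kappa^2(w,v)_{B_R}$ is bounded by $\kappa^2\|w\|_{0,2,B_R}\|v\|_{0,2,B_R}\le \kappa^2\|w\|_V\|v\|_V$ by Cauchy--Schwarz in $L_2(B_R)$.

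The main work concerns the boundary term $(T_\kappa w,v)_{S_R}$. I would interpret it as the duality pairing between $H^{-1/2}(S_R)$ and $H^{1/2}(S_R)$, and invoke Theorem~\ref{th:DtNprop} with $s=0$, which gives a constant $C_{T_\kappa}>0$ such that
\[
\|T_\kappa w\|_{-1/2,2,S_R} \le C_{T_\kappa}\|\gamma w\|_{1/2,2,S_R}
\quad\text{for all } w\in V.
\]
Here the trace $\gamma w$ on $S_R$ is well-defined because the restriction $w|_{B_R\setminus\overline\Om}$ belongs to $H^1(B_R\setminus\overline\Om)$ and $S_R$ is part of its Lipschitz boundary. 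By the standard trace theorem there is a constant $C_\mathrm{tr}>0$ with
\[
\|\gamma v\|_{1/2,2,S_R}\le C_\mathrm{tr}\|v\|_{1,2,B_R\setminus\overline\Om}\le C_\mathrm{tr}\|v\|_V
\quad\text{for all } v\in V,
\]
and analogously for $w$. Combining these estimates yields
\[
|(T_\kappa w,v)_{S_R}|\le \|T_\kappa w\|_{-1/2,2,S_R}\|\gamma v\|_{1/2,2,S_R}
\le C_{T_\kappa} C_\mathrm{tr}^2 \|w\|_V\|v\|_V.
\]

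Summing the four contributions gives $|a(w,v)|\le C\|w\|_V\|v\|_V$ with $C := 1+\kappa^2+C_{T_\kappa}C_\mathrm{tr}^2$. The only nonroutine step is identifying $(T_\kappa w,v)_{S_R}$ with the $\langle H^{-1/2},H^{1/2}\rangle$ duality pairing and combining the DtN bound from Theorem~\ref{th:DtNprop} with the trace inequality; the rest is Cauchy--Schwarz. If desired, the equivalence \eqref{eq:wavenumbernormequiv} immediately transfers the bound to the wavenumber-dependent norm $\|\cdot\|_{V,\kappa}$.
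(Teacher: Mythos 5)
Your proposal is correct and follows essentially the same route as the paper: term-by-term Cauchy--Schwarz, the DtN bound of Theorem~\ref{th:DtNprop} with $s=0$, and the trace theorem to absorb the boundary term into $C_{T_\kappa}C_\mathrm{tr}^2\|w\|_V\|v\|_V$. The only (immaterial) difference is that the paper records the slightly sharper constant $\max\{1,\kappa^2\}+C_{T_\kappa}C_\mathrm{tr}^2$ and also states the bound in the wavenumber-dependent norm, which you note as an afterthought.
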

\begin{proof}
Applying to each addend in the definition of $a$ the appropriate
Cauchy-Bunyakovsky-Schwarz inequality, we obtain
\begin{align*}
|a(w,v)| &\le |w|_V |v|_V
+ \kappa^2 \|w\|_{0,2,B_R}\|v\|_{0,2,B_R}\\
&\quad + \|T_\kappa w\|_{-1/2,2,S_R} \|v\|_{1/2,2,S_R}
\quad\text{for all } w,v\in V.
\end{align*}
According to Theorem~\ref{th:DtNprop} the DtN operator $T_\kappa$ is bounded,
i.e.\ there exists a constant $C_{T_\kappa} > 0$ such that
\[
\|T_\kappa w\|_{-1/2,2,S_R} \le C_{T_\kappa} \|w\|_{1/2,2,S_R}
\quad\text{for all }
w\in V.
\]
It remains to apply a trace theorem \cite[Thm.~3.37]{McLean:00}:
\begin{align*}
|a(w,v)|
&\le|w|_V |v|_V + \kappa^2 \|w\|_{0,2,B_R}\|v\|_{0,2,B_R}
+ C_{T_\kappa} C_\mathrm{tr}^2
\|w\|_{1,2,B_R\setminus\overline\Om} \|v\|_{1,2,B_R\setminus\overline\Om}\\
&\le |w|_V |v|_V + \kappa^2 \|w\|_{0,2,B_R}\|v\|_{0,2,B_R}
+ C_{T_\kappa} C_\mathrm{tr}^2\|w\|_V \|v\|_V\\
&\le \min\{(\max\{1,\kappa^2\} + C_{T_\kappa} C_\mathrm{tr}^2)\|w\|_V \|v\|_V,
(1 + C_{T_\kappa} C_\mathrm{tr}^2)\|w\|_{V,\kappa} \|v\|_{V,\kappa}\}\\
&\qquad\text{for all } w,v\in V.
\end{align*}
\end{proof}

\begin{lemma}\label{l:contGarding}
Given $\kappa_0>0$ and $R_0>0$,
assume that $\kappa \ge \kappa_0$ (cf.\ Rem.~\ref{rem:sharpDtNbound}) and $R \ge R_0$.
In addition, $\kappa_0\ge 1$ is required for $d=2$.
Then the sesquilinear form a satisfies a G{\aa}rding's inequality of the form
\[
\Re a(v,v) \ge \|v\|_{V,\kappa}^2 - 2\kappa^2 \|v\|_{0,2,B_R}^2
\quad\text{for all } v\in V.
\]
\end{lemma}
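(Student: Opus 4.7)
The plan is to prove the inequality by a direct computation on $\Re a(v,v)$ and reduce everything to a sign statement on the boundary term produced by the DtN operator.

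First, I would start from the definition \eqref{eq:ainttransm} and take real parts termwise. The two $(\nabla w,\nabla v)$ contributions together give exactly $|v|_V^2$, while $-\kappa^2(v,v)_{B_R}$ contributes $-\kappa^2\|v\|_{0,2,B_R}^2$. Hence
\[
\Re a(v,v) \;=\; |v|_V^2 \;-\; \kappa^2\|v\|_{0,2,B_R}^2 \;-\; \Re (T_\kappa v,v)_{S_R}.
\]
Since the right-hand side of the claimed inequality is $\|v\|_{V,\kappa}^2 - 2\kappa^2\|v\|_{0,2,B_R}^2 = |v|_V^2 - \kappa^2\|v\|_{0,2,B_R}^2$, the statement is equivalent to the sign condition $\Re(T_\kappa v,v)_{S_R}\le 0$.

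To establish this, I would expand the trace $v|_{S_R}$ in the orthonormal basis of circular harmonics $\{Y_n\}_{n\in\Z}$ on $S_1$ when $d=2$, and of spherical harmonics $\{Y_n^m\}$ when $d=3$, writing $v(R\vecr) = \sum_n v_n(R)Y_n(\vecr)$ (respectively $v(R\vecr) = \sum_{n,m} v_n^m(R)Y_n^m(\vecr)$). Substituting the series representations \eqref{def:2dDtNb} and \eqref{def:3dDtNb} of $T_\kappa$ and performing the change of variables from $S_R$ to $S_1$ (the surface measure factor $R^{d-1}$ absorbs the prefactor $1/R$ in $T_\kappa$ up to a harmless positive constant), Parseval's identity diagonalises the pairing:
\[
(T_\kappa v,v)_{S_R} = \sum_{n\in\Z} Z_n(\kappa R)\,|v_n(R)|^2 \qquad (d=2),
\]
and analogously with $z_n(\kappa R)$ and $|v_n^m(R)|^2$ for $d=3$. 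Taking the real part and invoking Lemma~\ref{l:propHankelfunctions}, which gives $\Re Z_n(\kappa R)<0$ for every $n\in\Z$ and $\Re z_n(\kappa R)\le -1<0$ for every $n\in\N_0$, yields $\Re(T_\kappa v,v)_{S_R}\le 0$ immediately, and the asserted Gårding inequality follows.

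The main technical obstacle, rather than the sign argument itself, is the justification of the diagonalisation for $v|_{S_R}\in H^{1/2}(S_R)$ only, so that $(T_\kappa v,v)_{S_R}$ must be interpreted as the $H^{-1/2}$-$H^{1/2}$ duality pairing. Convergence of the diagonal sum is nevertheless absolute: by Corollary~\ref{cor:propHankelfunctions} the coefficients $Z_n(\kappa R)$ and $z_n(\kappa R)$ grow only linearly in $|n|$, so $\sum|Z_n(\kappa R)||v_n(R)|^2$ is controlled by $(1+(\kappa R)^2)^{1/2}\|v\|_{1/2,2,S_R}^2$, exactly matching the $H^{1/2}$-regularity of the trace. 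The standing hypotheses $\kappa\ge\kappa_0$, $R\ge R_0$ and, for $d=2$, $\kappa_0\ge 1$, are not required for the algebraic step but enter through the well-posedness of $T_\kappa$ (Theorem~\ref{th:DtNprop}, Remark~\ref{rem:sharpDtNbound}); they will play their real role later when the Gårding constants need to be tracked uniformly in $\kappa$ and $R$.
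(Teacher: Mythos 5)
Your proposal is correct, but it proves the lemma by a genuinely different route than the paper. The paper's proof is a two-line reduction to the same identity $\Re a(v,v)=\|v\|_{V,\kappa}^2-2\kappa^2\|v\|_{0,2,B_R}^2-\Re\,(T_\kappa v,v)_{S_R}$ followed by a citation of Lemma~3.3 of Melenk \& Sauter, which under the standing hypotheses $\kappa\ge\kappa_0$, $R\ge R_0$ (and $\kappa_0\ge1$ for $d=2$) yields the \emph{quantitative} bound $-\Re\,(T_\kappa v,v)_{S_R}\ge CR^{-1}\|v\|_{0,2,S_R}^2$ with $C$ depending only on $\kappa_0,R_0$; the positive boundary term is then simply discarded. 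You instead diagonalise the duality pairing in the circular/spherical harmonic basis and use only the sign information $\Re Z_n(\kappa R)<0$, $\Re z_n(\kappa R)\le-1$ from Lemma~\ref{l:propHankelfunctions}, with absolute convergence of the diagonal series secured by Corollary~\ref{cor:propHankelfunctions} and the Fourier characterisation of $\|\cdot\|_{1/2,2,S_R}$. This is precisely the computation the paper itself carries out later for the truncated form $a_N$ in Lemma~\ref{l:aNprop}(ii), extended to $N=\infty$; it is self-contained and, as you correctly note, shows that the hypotheses on $\kappa_0$ and $R_0$ are not needed for the Gårding inequality as stated. What your route gives up is the extra term $CR^{-1}\|v\|_{0,2,S_R}^2$ (your expansion would in fact recover $\frac12\sum_n|v_n(R)|^2$ and its $d=3$ analogue, so nothing essential is lost), and the citation of Melenk--Sauter is reused anyway in the uniqueness argument of Theorem~\ref{th:linprobsolvable}, which is presumably why the author keeps the hypotheses attached to this lemma. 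Both arguments are sound; yours is the more elementary and more transparent of the two.
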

\begin{proof}
From the definitions of $a$ and the wavenumber dependent norm it follows immediately that
\begin{align*}
\Re a(v,v) &=\|v\|_{V,\kappa}^2 - 2\kappa^2 \|v\|_{0,2,B_R}^2 - \Re\,(T_\kappa v,v)_{S_R}\\
&\ge \|v\|_{V,\kappa}^2 - 2\kappa^2 \|v\|_{0,2,B_R}^2 + CR^{-1}\|v\|_{0,2,S_R}^2\\
&\ge \|v\|_{V,\kappa}^2 - 2\kappa^2 \|v\|_{0,2,B_R}^2,
\end{align*}
where the first estimate follows from \cite[Lemma~3.3]{Melenk:10} with a constant $C>0$
depending solely on $\kappa_0>0$ and $R_0>0$.
\end{proof}

Next we discuss the solvability and stability of the problem \eqref{eq:intscalproblem3}
for the case that the right-hand side is just an antilinear continuous functional
$\ell:\; V\to\C$.
The linear problem of finding $u\in V$ such that
\begin{equation}\label{eq:intlinproblem}
a(u,v) = \ell(v)
\quad\text{for all } v\in V
\end{equation}
holds can be formulated equivalently as an operator equation in the dual space $V^\ast$ of $V$
consisting of all continuous antilinear functionals from $V$ to $\C$.
Namely, if we define the linear operator $\mathcal{A}:\; V \to V^\ast$ by
\begin{equation}\label{eq:defopA}
\mathcal{A} w(v) := a(w,v)
\quad\text{for all } w, v\in V,
\end{equation}
problem \eqref{eq:intlinproblem} is equivalent to solving the operator equation
\begin{equation}\label{eq:intlinopproblem}
\mathcal{A} u = \ell
\end{equation}
for $u\in V$.

Note that $\mathcal{A}$ is a bounded operator by Lemma~\ref{l:a-cont}.

\begin{theorem}\label{th:linprobsolvable}
Under the assumptions of Lemma~\ref{l:contGarding},
the problem \eqref{eq:intlinopproblem} is uniquely solvable for any $\ell\in V^\ast$.
\end{theorem}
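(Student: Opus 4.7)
The plan is to combine Lax--Milgram with a Fredholm alternative argument: I will split $\mathcal{A}$ into a coercive part and a compact perturbation, so that unique solvability reduces to injectivity, and then establish injectivity by reducing to a standard uniqueness statement for the radiating homogeneous Helmholtz equation in the full space.

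First I introduce the shifted sesquilinear form
\[
a_0(w,v) := a(w,v) + 2\kappa^2(w,v)_{B_R},
\quad w,v\in V,
\]
so that Lemma~\ref{l:contGarding} delivers $\Re a_0(v,v)\ge\|v\|_{V,\kappa}^2$ and Lemma~\ref{l:a-cont} gives boundedness. By the Lax--Milgram theorem in the Hilbert space $(V,\|\cdot\|_{V,\kappa})$, the induced operator $\mathcal{A}_0\colon V\to V^*$ is an isomorphism. The difference $\mathcal{K}:=\mathcal{A}-\mathcal{A}_0$ is induced by the form $(w,v)\mapsto -2\kappa^2(w,v)_{B_R}$, which factors as $V\hookrightarrow L_2(B_R)\to V^*$. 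The first arrow is compact by Rellich--Kondrachov applied separately on the Lipschitz domains $\Omega$ and $B_R\setminus\overline\Omega$, hence $\mathcal{K}$ is compact. Applying $\mathcal{A}_0^{-1}$ to the equation $\mathcal{A}u=\ell$ yields the equivalent equation
\[
(I+\mathcal{A}_0^{-1}\mathcal{K})u=\mathcal{A}_0^{-1}\ell
\]
of Fredholm index zero, so unique solvability for every $\ell\in V^*$ reduces to injectivity of $\mathcal{A}$.

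Next I verify injectivity. Let $u\in V$ satisfy $a(u,v)=0$ for all $v\in V$. Testing first with $v\in C_c^\infty(\Omega)$ and then with $v\in C_c^\infty(B_R\setminus\overline\Omega)$ shows that $u$ is a weak solution of $-\Delta u-\kappa^2 u=0$ in both $\Omega$ and $B_R\setminus\overline\Omega$; testing with arbitrary $v\in V$ then forces continuity of the normal flux across $\partial\Omega$ (so $-\Delta u-\kappa^2 u=0$ weakly in all of $B_R$) and the identity $\vecr\cdot\nabla u=T_\kappa u$ on $S_R$ as the only surviving boundary contribution. As in the second half of the proof of Lemma~\ref{l:exintequiv}, I then extend $u$ into $B_R^+$ by the radiating solution of the exterior Dirichlet problem with boundary data $u|_{S_R}$. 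By the very construction of $T_\kappa$, the extended field lies in $H^1_\mathrm{loc}(\R^d)$, solves $-\Delta u-\kappa^2 u=0$ in $\R^d$, and satisfies the Sommerfeld radiation condition~\eqref{eq:sommerfeld1}.

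The hard part, and the only non-routine ingredient, is the global uniqueness at the end: applying Rellich's lemma together with the unique continuation principle for the Helmholtz operator (e.g.\ in the form of \cite[Thm.~9.5]{McLean:00} or \cite[Thm.~2.6.5]{Nedelec:01}) to this radiating global solution forces $u\equiv 0$ in $\R^d$, and in particular $u=0$ in $V$. Everything else is standard Hilbert-space functional analysis, unaffected by the norm equivalence~\eqref{eq:wavenumbernormequiv} between $\|\cdot\|_V$ and $\|\cdot\|_{V,\kappa}$.
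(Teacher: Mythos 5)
Your argument is correct, but it reaches the conclusion by a genuinely different route in the decisive step. The functional-analytic reduction is essentially the paper's: your explicit splitting $\mathcal{A}=\mathcal{A}_0+\mathcal{K}$ with Lax--Milgram for the shifted form and compactness of $V\hookrightarrow L_2(B_R)$ is an unpacked version of the compact-perturbation/Fredholm argument the paper cites from \cite{McLean:00}, the only cosmetic difference being that the paper verifies triviality of the homogeneous \emph{adjoint} kernel while you verify injectivity of $\mathcal{A}$ itself --- equivalent for an index-zero operator. The divergence is in the uniqueness argument. The paper tests with $v=u$, extracts $\Im\,(T_\kappa u,u)_{S_R}=0$, invokes the sign property of the DtN symbol from \cite[Lemma~3.3]{Melenk:10} to get $u=0$ on $S_R$ (and then vanishing Neumann data there), extends by zero, and quotes the full-space transmission uniqueness result \cite[Lemma~7.1]{Torres:93}. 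You instead recover the interface conditions and the identity $\vecr\cdot\nabla u=T_\kappa u$ on $S_R$ from the variational equation, glue on the radiating exterior extension (both Cauchy data match by construction of $T_\kappa$, exactly as in the second half of the proof of Lemma~\ref{l:exintequiv}), and conclude with Rellich's lemma plus unique continuation for an entire radiating solution of the constant-coefficient Helmholtz equation. Both are sound; your route avoids the Melenk--Sauter sign lemma and the Torres--Welland transmission theory entirely, at the price of redoing the gluing across $S_R$ that the paper has already carried out in Lemma~\ref{l:exintequiv}.

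Two small points you should tighten. First, before Rellich's lemma applies you need $\int_{S_r}|u|^2\,ds\to 0$; for an entire solution this follows because Green's identity in $B_r$ forces $\Im\int_{S_r}\overline{u}\,\vecr\cdot\nabla u\,ds=0$, which combined with the expansion of $\int_{S_r}|\vecr\cdot\nabla u-i\kappa u|^2\,ds\to 0$ gives the claim --- state this, since it is the only place where ``radiating \emph{and} global'' is actually used. Second, the pinpoint citations for this final step are not quite on target: \cite[Thm.~9.6]{McLean:00} is the equivalence of radiation conditions and \cite[Thm.~2.6.5]{Nedelec:01} is exterior-domain uniqueness; what you need is the vanishing of entire radiating solutions (exterior uniqueness plus interior analyticity/unique continuation), so cite both ingredients explicitly.
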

\begin{proof}
The basic ideas of the proof are taken from the proof of \cite[Thm~3.8]{Melenk:10}.
Since the embedding of $V$ into $L_2(B_R)$ is compact by
the compactness theorem of Rellich–Kondrachov \cite[Thm.~3.27]{McLean:00}
together with Tikhonov's product theorem \cite[Thm.~4.1]{Kelley:63},
the compact perturbation theorem \cite[Thm.~2.34]{McLean:00}
together with Lemma~\ref{l:contGarding} imply that
the Fredholm alternative \cite[Thm.~2.27]{McLean:00} holds for the equation \eqref{eq:intlinopproblem}.

Hence it is sufficient to demonstrate that the homogeneous adjoint problem
(cf.\ \cite[p.~43]{McLean:00})
of finding $u\in V$ such that $\overline{a(v,u)} = 0$ holds for all $v\in V$
only allows for the trivial solution.

So suppose $u\in V$ is a solution of the homogeneous adjoint problem.
We take $v:=u$ and consider the imaginary part of the resulting equation:
\[
0 = \Im\overline{a(u,u)}
= - \Im\overline{(T_\kappa u,u)_{S_R}}
= \Im\,(T_\kappa u,u)_{S_R}.
\]
Then \cite[Lemma~3.3]{Melenk:10} implies $u=0$ on $S_R$.
Then $u$ satisfies the variational equation
\[
(\nabla u,\nabla v)_\Om + (\nabla u,\nabla v)_{B_R\setminus\overline\Om}
- \kappa^2(u,v)_{B_R} = 0
\quad\text{for all }
v\in V,
\]
i.e.\ it is a weak solution of the homogeneous interior transmission Neumann problem
for the Helmholtz equation on $B_R$.
On the other hand, $u$ can be extended to the whole space $\R^d$ by zero
to an element $\tilde u\in V_{\R^d}$, and this element can be interpreted
as a weak solution of a homogeneous full-space transmission problem,
for instance in the sense of \cite[Problem (P)]{Torres:93}.
Then it follows from \cite[Lemma~7.1]{Torres:93} that $\tilde u=0$ and thus $u=0$.
\end{proof}

Since a Fredholm operator has a closed image \cite[p.~33]{McLean:00},
it follows from the Open Mapping Theorem and Theorem~\ref{th:linprobsolvable}
(cf.\ \cite[Cor.~2.2]{McLean:00})
that the inverse operator $\mathcal{A}^{-1}$ is bounded,
i.e.\ there exists a constant $C(R,\kappa)>0$ such that
\[
\|u\|_{V,\kappa} = \|\mathcal{A}^{-1}\ell\|_{V,\kappa} \le C(R,\kappa) \|\ell\|_{V^\ast}
\quad\text{for all }
\ell\in V^\ast.
\]
Then it holds
\[
\frac{1}{C(R,\kappa)}\le \frac{\|\ell\|_{V^\ast}}{\|u\|_{V,\kappa}}
=\sup_{v\in V\setminus\{0\}}\frac{|\ell(v)|}{\|u\|_{V,\kappa}\|v\|_{V,\kappa}}
=\sup_{v\in V\setminus\{0\}}\frac{|a(u,v)|}{\|u\|_{V,\kappa}\|v\|_{V,\kappa}}\,.
\]
This estimate proves the following result.
\begin{lemma}\label{l:infsup}
Under the assumptions of Lemma~\ref{l:contGarding}, the sesquilinear form
$a$ satisfies an \emph{inf-sup condition}:
\[
\beta(R,\kappa) := \inf_{w\in V\setminus\{0\}}\sup_{v\in V\setminus\{0\}}
\frac{|a(w,v)|}{\|w\|_{V,\kappa} \|v\|_{V,\kappa}} > 0.
\]
\end{lemma}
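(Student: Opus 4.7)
The plan is to read the inf-sup condition directly off the unique-solvability statement established in the paragraph immediately preceding the lemma. The key point is that the operator $\mathcal{A}: V \to V^*$ defined in \eqref{eq:defopA} is, by Lemma~\ref{l:a-cont} and Theorem~\ref{th:linprobsolvable}, a bounded bijection whose inverse is bounded by virtue of the Open Mapping Theorem. Consequently, the correspondence $\ell \mapsto u = \mathcal{A}^{-1}\ell$ is a bijection between $V^* \setminus \{0\}$ and $V \setminus \{0\}$, which is exactly what is needed to convert a pointwise stability bound into a uniform inf-sup bound.

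The concrete steps I would carry out are as follows. First, I would start from the stability estimate $\|u\|_{V,\kappa} \le C(R,\kappa)\,\|\ell\|_{V^*}$ already derived. Rearranging and using the definition of the dual norm together with the identity $\mathcal{A}u(v) = a(u,v)$ yields
\[
\frac{1}{C(R,\kappa)} \,\le\, \sup_{v \in V\setminus\{0\}} \frac{|a(u,v)|}{\|u\|_{V,\kappa}\,\|v\|_{V,\kappa}}.
\]
Second, since this bound holds for every non-zero $u \in V$ (each such $u$ arising as $\mathcal{A}^{-1}\ell$ for a unique non-zero $\ell \in V^*$), I would then take the infimum of the right-hand side over all $w = u \in V \setminus \{0\}$, obtaining
\[
\beta(R,\kappa) \,\ge\, \frac{1}{C(R,\kappa)} \,>\, 0,
\]
which is precisely the desired inf-sup condition.

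I do not expect any serious obstacle, since all the nontrivial work has already been completed in the preceding lemmas and theorem: boundedness of the form (Lemma~\ref{l:a-cont}), the G\aa{}rding inequality (Lemma~\ref{l:contGarding}), and the Fredholm-alternative argument yielding unique solvability with a trivial adjoint kernel (Theorem~\ref{th:linprobsolvable}). The only subtlety worth making explicit is that surjectivity of $\mathcal{A}^{-1}$ guarantees that $u$ ranges over all of $V \setminus \{0\}$ as $\ell$ ranges over $V^* \setminus \{0\}$, so that passing to the infimum over $w$ genuinely reproduces the quantity $\beta(R,\kappa)$ defined in the lemma and yields a strictly positive lower bound.
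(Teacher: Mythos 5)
Your proposal is correct and follows essentially the same route as the paper: the paper likewise derives the bound $\|u\|_{V,\kappa}\le C(R,\kappa)\|\ell\|_{V^\ast}$ from the Fredholm property, the Open Mapping Theorem and Theorem~\ref{th:linprobsolvable}, and then rearranges $\frac{1}{C(R,\kappa)}\le\sup_{v}\frac{|a(u,v)|}{\|u\|_{V,\kappa}\|v\|_{V,\kappa}}$ before taking the infimum over $w=u$. Your explicit remark that bijectivity of $\mathcal{A}$ ensures $u$ sweeps out all of $V\setminus\{0\}$ is a point the paper leaves implicit, but it is the same argument.
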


Now we turn to the nonlinear situation and concretize the assumptions
regarding the Cara\-th\'{e}o\-dory functions $c$ and $f$.

\begin{lemma}\label{l:f-est}
Let $p_f\in\begin{cases}
[2,\infty), & d=2,\\
[2,6], & d=3,
\end{cases}$
and assume there exist nonnegative functions $m_f,g_f\in L_\infty(\Om)$
such that
\[
|f(\vecx,\xi)|\le m_f(\vecx)|\xi|^{p_f-1} + g_f(\vecx)
\quad\text{for all }
(\vecx,\xi)\in \Om\times\C.
\]
Then $vf(\cdot,w) \in L_1(\Om)$ for all $w,v\in V$.
\end{lemma}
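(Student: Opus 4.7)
The plan is to bound $|vf(\cdot,w)|$ pointwise using the growth hypothesis and then control each resulting piece by H\"older's inequality together with Sobolev embedding on $\Omega$.

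First I would split, for $(w,v)\in V\times V$,
\[
|v(\vecx)f(\vecx,w(\vecx))|\le \|m_f\|_{0,\infty,\Om}\,|v(\vecx)||w(\vecx)|^{p_f-1}+\|g_f\|_{0,\infty,\Om}\,|v(\vecx)|
\]
for almost every $\vecx\in\Om$. The second summand is integrable because $\Om$ is bounded and $v|_\Om\in H^1(\Om)\subset L_2(\Om)\subset L_1(\Om)$. The first summand I would estimate by H\"older's inequality with conjugate exponents $p_f$ and $p_f/(p_f-1)$:
\[
\int_\Om|v||w|^{p_f-1}\,d\vecx\le \|v\|_{0,p_f,\Om}\,\|w\|_{0,p_f,\Om}^{p_f-1}.
\]

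The key step is to observe that $V$ restricts on $\Om$ to $H^1(\Om)$ and to invoke the Sobolev embedding theorem for bounded Lipschitz domains: $H^1(\Om)\hookrightarrow L_q(\Om)$ continuously for every $q\in[1,\infty)$ if $d=2$ and for every $q\in[1,6]$ if $d=3$. The admissible range for $p_f$ stated in the lemma is exactly chosen so that $p_f$ lies within this Sobolev range in both dimensions, hence $v|_\Om,\,w|_\Om\in L_{p_f}(\Om)$ with
\[
\|v\|_{0,p_f,\Om}\le C_S\|v\|_{1,2,\Om}\le C_S\|v\|_V,
\]
and analogously for $w$. Combining these estimates yields
\[
\int_\Om|vf(\cdot,w)|\,d\vecx\le \|m_f\|_{0,\infty,\Om}C_S^{p_f}\|v\|_V\|w\|_V^{p_f-1}+\|g_f\|_{0,\infty,\Om}|\Om|^{1/2}\|v\|_V<\infty,
\]
which establishes $vf(\cdot,w)\in L_1(\Om)$.

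There is no real obstacle here; the only point that needs care is the measurability of $\vecx\mapsto f(\vecx,w(\vecx))$, which is guaranteed by the Carath\'eodory property of $f$ noted in Section \ref{sec:problem}, and the fact that the upper bound $2$ (for $d=2$) or $6$ (for $d=3$) on $p_f$ matches precisely the critical Sobolev exponent for $H^1(\Om)$, ensuring that the H\"older--Sobolev argument closes.
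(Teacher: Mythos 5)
Your proof is correct and follows essentially the same route as the paper: H\"older's inequality (the paper uses the three-function version keeping $m_f\in L_\infty(\Om)$ inside the product, while you pull the $L_\infty$ factor out first, which is equivalent) combined with the Sobolev embedding $H^1(\Om)\hookrightarrow L_{p_f}(\Om)$ for the admissible range of $p_f$, plus the Carath\'eodory property for measurability. The only blemish is your closing parenthetical: for $d=2$ the range is $p_f\in[2,\infty)$, so ``$2$'' is the lower bound, not an upper bound matching a critical exponent; this does not affect the argument.
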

\begin{proof}
Since $f$ is a Carath\'{e}odory function, the composition $f(\cdot,w)$ is measurable
and it suffices to estimate the integral of $|vf(\cdot,w)|$.
Moreover, it suffices to consider the term $m_f v |w|^{p_f-1}$ in more detail.
By H\"{o}lder's inequality for three functions, it holds that
\[
\|vf(\cdot,w)\|_{0,1,\Om}
\le \|m_f\|_{0,\infty,\Om}\|v\|_{0,p_f,\Om}
\|w^{p_f-1}\|_{0,q,\Om}
\quad\text{with }
\frac{1}{p_f}+\frac{1}{q}=1.
\]
The $L_{p_f}$-norm of $v$ is bounded thanks to the embedding
$V|_\Om\subset L_{p_f}(\Om)$ for the allowed values of $p_f$ \cite[Thm.~4.12]{Adams:03}.
Since $|w^{p_f-1}|^q=|w|^{p_f}$, the $L_q$-norm of $w^{p_f-1}$ is bounded by the same reasoning.
\end{proof}

\begin{lemma}\label{l:c-est}
Let $p_c\in\begin{cases}
[2,\infty), & d=2,\\
[2,6], & d=3,
\end{cases}$
and assume there exist nonnegative functions $m_c,g_c\in L_\infty(\Om)$
such that
\[
|c(\vecx,\xi)-1|\le m_c(\vecx)|\xi|^{p_c-2} + g_c(\vecx)
\quad\text{for all }
(\vecx,\xi)\in \Om\times\C.
\]
Then $zv(c(\cdot,w)-1)\in L_1(\Om)$ for all $z,w,v\in V$.
\end{lemma}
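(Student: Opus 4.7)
The strategy is to mimic the proof of Lemma~\ref{l:f-est} almost verbatim, replacing the exponent $p_f-1$ by $p_c-2$ and the single factor $v$ by the product $zv$, and then applying a three-factor H\"older inequality.

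First I would note that $c(\cdot,w)$ is measurable because $c$ is a Carath\'eodory function and $w$ is measurable; hence so is $zv(c(\cdot,w)-1)$, and it suffices to bound the $L_1$-norm of its modulus. Splitting by means of the assumed growth bound on $|c(\cdot,\xi)-1|$ yields
\[
\|zv(c(\cdot,w)-1)\|_{0,1,\Om}
\le \|m_c\|_{0,\infty,\Om}\int_\Om |z||v||w|^{p_c-2}\,d\vecx
+ \|g_c\|_{0,\infty,\Om}\int_\Om |z||v|\,d\vecx.
\]
The second integral is finite by the Cauchy--Bunyakovsky--Schwarz inequality, since $z|_\Om,v|_\Om\in H^1(\Om)\subset L_2(\Om)$.

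For the first integral the key step is a three-factor H\"older inequality with exponents $(p_c,p_c,q)$, where $q:=p_c/(p_c-2)$ satisfies $2/p_c+1/q=1$ (the degenerate case $p_c=2$ reduces to the previous argument, since then $|w|^{p_c-2}\equiv 1$). I obtain
\[
\int_\Om |z||v||w|^{p_c-2}\,d\vecx
\le \|z\|_{0,p_c,\Om}\,\|v\|_{0,p_c,\Om}\,\bigl\||w|^{p_c-2}\bigr\|_{0,q,\Om},
\]
and the last factor equals $\|w\|_{0,p_c,\Om}^{p_c-2}$, because $(p_c-2)q=p_c$. It then remains to invoke the Sobolev embedding $V|_\Om\subset H^1(\Om)\hookrightarrow L_{p_c}(\Om)$ \cite[Thm.~4.12]{Adams:03}, which is valid precisely in the ranges $p_c\in[2,\infty)$ for $d=2$ and $p_c\in[2,6]$ for $d=3$ assumed in the statement.

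I do not expect any genuine obstacle here: the proof is a routine application of H\"older combined with the Sobolev embedding, entirely parallel to Lemma~\ref{l:f-est}. The only points that require a little care are the choice of the conjugate exponent $q$ (which forces the upper endpoint of the admissible range for $p_c$ to match the Sobolev embedding exponent), and the separate treatment of the limit case $p_c=2$, where the argument collapses to a plain Cauchy--Schwarz estimate together with the $L_\infty$-bounds on $m_c$ and $g_c$.
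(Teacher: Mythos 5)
Your proposal is correct and follows essentially the same route as the paper: the paper applies H\"older's inequality for four functions (with the $L_\infty$-factor $m_c$ included) and the exponent relation $\frac{2}{p_c}+\frac{1}{q}=1$, which is exactly your three-factor H\"older after extracting $\|m_c\|_{0,\infty,\Om}$, followed by the same Sobolev embedding $V|_\Om\subset L_{p_c}(\Om)$. Your explicit treatment of the degenerate case $p_c=2$ and of the $g_c$-term via Cauchy--Schwarz only spells out details the paper leaves implicit.
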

\begin{proof}
Similar to the proof of Lemma~\ref{l:f-est} it is sufficient
to consider the term $m_c z v |w|^{p_c-2}$ in more detail.
By H\"{o}lder's inequality for four functions, it holds that
\[
\|zv(c(\cdot,w)-1)\|_{0,1,\Om}
\le \|m_c\|_{0,\infty,\Om}\|z\|_{0,p_c,\Om}\|v\|_{0,p_c,\Om}\|w^{p_c-2}\|_{0,q,\Om}
\quad\text{with }
\frac{2}{p_c}+\frac{1}{q}=1.
\]
The $L_{p_c}$-norms of $z,v$ are bounded thanks to the embedding theorem \cite[Thm.~4.12]{Adams:03}.
Since $|w^{p_c-2}|^q=|w|^p_c$, the $L_q$-norm of $w^{p_c-2}$ is bounded by the same reasoning.
\end{proof}
\begin{corollary}\label{cor:cf-inequal}
Under the assumptions of Lemma~\ref{l:f-est} and Lemma~\ref{l:c-est}, resp.,
the following estimates hold for all $z,w,v\in V$:
\begin{align*}
|(f(\cdot,w),v)_\Om|
& \le C_\mathrm{emb}^{p_f}\|m_f\|_{0,\infty,\Om}\|w\|_{1,2,\Om}^{p_f-1}\|v\|_{1,2,\Om} \\
&\quad + \sqrt{|\Om|_d}\,\|g_f\|_{0,\infty,\Om}\|v\|_{0,2,\Om},\\
|((c(\cdot,w)-1)z,v)_\Om|
& \le C_\mathrm{emb}^{p_c}\|m_c\|_{0,\infty,\Om}\|w\|_{1,2,\Om}^{p_c-2}\|z\|_{1,2,\Om}\|v\|_{1,2,\Om} \\
&\quad + \|g_c\|_{0,\infty,\Om}\|z\|_{0,2,\Om}\|v\|_{0,2,\Om},
\end{align*}
where $|\Om|_d$ is the $d$-volume of $\Om$.
\end{corollary}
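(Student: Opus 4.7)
The plan is to apply the pointwise bounds on $f$ and $c-1$ that are assumed in Lemmas~\ref{l:f-est} and~\ref{l:c-est}, combined with H\"older's inequality in the same arrangement used in the proofs of those lemmas, but now keeping track of the embedding constant explicitly. Throughout, $C_\mathrm{emb}$ denotes the norm of the continuous embedding $V|_\Om\hookrightarrow L_{p_f}(\Om)$ (resp.\ $L_{p_c}(\Om)$), which exists for the admissible ranges of $p_f,p_c$ by \cite[Thm.~4.12]{Adams:03}.

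First I would prove the $f$-estimate. The triangle inequality inside the integral and the assumed bound on $|f(\cdot,w)|$ give
\[
|(f(\cdot,w),v)_\Om|\le \int_\Om m_f(\vecx)|w|^{p_f-1}|v|\,d\vecx + \int_\Om g_f(\vecx)|v|\,d\vecx.
\]
Pulling out $\|m_f\|_{0,\infty,\Om}$ and applying two-factor H\"older with exponents $p_f/(p_f-1)$ and $p_f$ to $|w|^{p_f-1}$ and $|v|$ turns the first summand into $\|m_f\|_{0,\infty,\Om}\|w\|_{0,p_f,\Om}^{p_f-1}\|v\|_{0,p_f,\Om}$; the Sobolev embedding then contributes one factor $C_\mathrm{emb}$ per $L_{p_f}$-norm, i.e.\ $C_\mathrm{emb}^{p_f}$ in total, and converts the $L_{p_f}$-norms into $\|\cdot\|_{1,2,\Om}$-norms. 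For the second summand, Cauchy--Schwarz together with the trivial bound $\|g_f\|_{0,2,\Om}\le\sqrt{|\Om|_d}\,\|g_f\|_{0,\infty,\Om}$ produces exactly the $g_f$-term of the stated inequality.

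The $c$-estimate is handled in the same spirit, only with one more factor. Starting from
\[
|((c(\cdot,w)-1)z,v)_\Om|\le \int_\Om m_c(\vecx)|w|^{p_c-2}|z||v|\,d\vecx+\int_\Om g_c(\vecx)|z||v|\,d\vecx,
\]
I would apply three-factor H\"older with exponents $(p_c/(p_c-2),\,p_c,\,p_c)$ to $|w|^{p_c-2}$, $|z|$, $|v|$, which yields $\|m_c\|_{0,\infty,\Om}\|w\|_{0,p_c,\Om}^{p_c-2}\|z\|_{0,p_c,\Om}\|v\|_{0,p_c,\Om}$; the embedding again bundles into $C_\mathrm{emb}^{p_c}$ and passes to $\|\cdot\|_{1,2,\Om}$-norms. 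The $g_c$-summand is dispatched by pulling $\|g_c\|_{0,\infty,\Om}$ out and applying Cauchy--Schwarz to $|z||v|$.

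There is no real obstacle here; the corollary is essentially an explicit-constant recapitulation of the integrability statements of Lemmas~\ref{l:f-est} and~\ref{l:c-est}. The only point worth flagging is the degenerate endpoint $p_c=2$, in which $|w|^{p_c-2}\equiv 1$ and the three-factor H\"older collapses to ordinary Cauchy--Schwarz; with the convention $\|w\|_{1,2,\Om}^{0}=1$ the stated formula remains valid, so no separate case distinction is needed.
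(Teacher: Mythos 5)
Your proposal is correct and follows essentially the same route as the paper: the paper's proof simply refers back to the H\"older-plus-Sobolev-embedding computations in the proofs of Lemmata~\ref{l:f-est} and \ref{l:c-est} (with $v$ replaced by $\overline{v}$) for the $m_f$- and $m_c$-terms and calls the $g_f$- and $g_c$-terms trivial, which is exactly what you carry out explicitly, including the correct bookkeeping of the embedding constant as $C_\mathrm{emb}^{p_f-1}\cdot C_\mathrm{emb}=C_\mathrm{emb}^{p_f}$ and $C_\mathrm{emb}^{p_c-2}\cdot C_\mathrm{emb}^{2}=C_\mathrm{emb}^{p_c}$, and the presence of $\sqrt{|\Om|_d}$ only in the $g_f$-term.
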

\begin{proof}
Replace $v$ by $\overline{v}$ in Lemmata~\ref{l:f-est}, \ref{l:c-est}
to get the first addend of the bounds.
The estimate of the second addend is trivial.
\end{proof}

\begin{example}\label{ex:Kerr}
An important example for the nonlinearities is
\[
c(\vecx,\xi) := \begin{cases}
1, & (\vecx,\xi)\in \Om^+\times\C,\\
\eps^{(L)}(\vecx) + \alpha(\vecx)|\xi|^2, & (\vecx,\xi)\in\Om\times\C,
\end{cases}
\]
with given $\eps^{(L)}, \alpha\in L_\infty(\Om)$,
and $f = 0$. Here $p_c=4$, which is within the range of validity of Lemma~\ref{l:c-est},
and $m_c=|\alpha|$, $g_c=|\eps^{(L)}-1|$.
\end{example}

The estimates from Corollary~\ref{cor:cf-inequal} show that the first
two terms on the right-hand side of the variational equation \eqref{eq:intscalproblem3}
can be considered as values of nonlinear mappings from $V$ to $V^\ast$,
i.e.\ we can define
\begin{align*}
\ell^\mathrm{contr}:\; V\to V^\ast
\quad\text{ by}&
&\langle\ell^\mathrm{contr}(w),v\rangle &:= \kappa^2((c(\cdot,w) - 1) w,v)_\Om,\\
\ell^\mathrm{src}:\; V\to V^\ast
\quad\text{ by}&
&\langle\ell^\mathrm{src}(w),v\rangle &:=  (f(\cdot,w),v)_\Om
\qquad\text{for all }
w,v\in V.
\end{align*}
Furthermore,
if $u^\mathrm{inc}\in H^1_\mathrm{loc}(\Om^+)$ is such that
additionally $\Delta u^\mathrm{inc}$ belongs to $L_{2,\mathrm{loc}}(\Om^+)$
(where $\Delta u^\mathrm{inc}$ is understood in the distributional sense),
the last two terms on the right-hand side of \eqref{eq:ninttransm}
form an antilinear continuous functional on $\ell^\mathrm{inc}\in V^\ast$:
\[
\langle\ell^\mathrm{inc},v\rangle
:=(\vecr\cdot\nabla u^\mathrm{inc} - T_\kappa u^\mathrm{inc},v)_{S_R}
\qquad\text{for all }
v\in V.
\]
This is a consequence of Theorem~\ref{th:DtNprop} and the estimates before the trace theorem
\cite[Thm.~6.13]{Angermann:21a}.
Hence
\[
\|\ell^\mathrm{inc}\|_{V^\ast}
\le \tilde{C}_\mathrm{tr}[\|\Delta u^\mathrm{inc}\|_{0,2,B_R\setminus\overline{\Om}}
+ \|u^\mathrm{inc}\|_{0,2,B_R\setminus\overline{\Om}}]
+ C_{T_\kappa} C_\mathrm{tr}^2\|u^\mathrm{inc}\|_{1,2,B_R\setminus\overline{\Om}},
\]
where $\tilde{C}_\mathrm{tr}$ is the norm of the trace operator defined in
\cite[eq.~(6.39)]{Angermann:21a}.

However, it is more intuitive to utilize the estimate
\[
\|\ell^\mathrm{inc}\|_{V^\ast}
\le C_\mathrm{tr}\|\vecr\cdot\nabla u^\mathrm{inc} - T_\kappa u^\mathrm{inc}\|_{-1/2,2,S_R}.
\]
The reason for this is that the bound can be interpreted as a measure
of the deviation of the function $u^\mathrm{inc}$ from a radiating solution
of the corresponding Helmholtz equation.
In other words: If the function $u^\mathrm{inc}$ satisfies the boundary value problem
\eqref{eq:extDirproblem} with $f_{S_R}:=u^\mathrm{inc}|_{S_R}$,
then the functional $\ell^\mathrm{inc}$ is not present.

Consequently, setting
\[
\mathcal{F}(w) := \ell^\mathrm{contr}(w) + \ell^\mathrm{src}(w) + \ell^\mathrm{inc}
\qquad\text{for all }
w\in V,
\]
we obtain a nonlinear operator $\mathcal{F}:\; V \to V^\ast$,
and the problem \eqref{eq:intscalproblem3} is then equivalent to the operator equation
\[
\mathcal{A} u = \mathcal{F}(u)
\quad\text{in } V^\ast,
\]
and further, by Lemma~\ref{l:infsup}, equivalent to the fixed-point problem
\begin{equation}\label{eq:scalfpp}
u = \mathcal{A}^{-1} \mathcal{F}(u)
\quad\text{in } V.
\end{equation}
In order to prove the subsequent existence and uniqueness theorem, we specify
some additional properties of the nonlinearities $c$ and $f$.
\begin{definition}\label{def:loclip}
The functions $c$ and $f$ are said to generate locally Lipschitz continuous
Nemycki operators in $V$ if the following holds: For some parameters
$p_c,p_f\in\begin{cases}
[2,\infty), & d=2,\\
[2,6], & d=3,
\end{cases}$,
there exist Carath\'{e}odory functions
$L_c:\;\Om\times\C\times\C\to(0,\infty)$ and $L_f:\;\Om\times\C\times\C\to(0,\infty)$
such that the composition operators
$\Om\times V\times V\to L_{q_c}(\Om):\:(\vecx,w,v)\mapsto L_c(\vecx,w,v)$,
$\Om\times V\times V\to L_{q_f}(\Om):\:(\vecx,w,v)\mapsto L_f(\vecx,w,v)$
are bounded for $q_c,q_f>0$ with
$\frac{3}{p_c}+\frac{1}{q_c}=\frac{2}{p_f}+\frac{1}{q_f}=1$,
and
\[
|c(\vecx,\xi)-c(\vecx,\eta)| \le L_c(\vecx,\xi,\eta) |\xi-\eta|,
\quad
|f(\vecx,\xi)-f(\vecx,\eta)| \le L_f(\vecx,\xi,\eta) |\xi-\eta|
\]
for all $(\vecx,\xi,\eta)\in\Om\times\C\times\C$.
\end{definition}

\begin{remark}\label{rem:df-est}
If the nonlinearities $c$ and $f$ generate locally Lipschitz continuous
Nemycki operators in the sense of the above Definition~\ref{def:loclip}, the
assumptions of Lemmata~\ref{l:f-est}, \ref{l:c-est} can be replaced by
the requirement that there exist functions $w_f,w_c\in V$ such that
$f(\cdot,w_f)\in L_{p_f/(p_f-1)}(\Om)$ and $c(\cdot,w_c)\in L_{p_c/(p_c-2)}(\Om)$,
respectively.
\end{remark}
\begin{proof}
Indeed, similar to the proofs of the two lemmata mentioned, we have that
\begin{align*}
\|vf(\cdot,w)\|_{0,1,\Om}
&\le \|vf(\cdot,w_f)\|_{0,1,\Om} + \|v(f(\cdot,w) - f(\cdot,w_f))\|_{0,1,\Om}\\
&\le \|vf(\cdot,w_f)\|_{0,1,\Om} + \|vL_f(\cdot,w,w_f) |w-w_f|\|_{0,1,\Om}\\
&\le \|v\|_{0,p_f,\Om}\|f(\cdot,w_f)\|_{0,\tilde{q}_f,\Om}
+ \|v\|_{0,p_f,\Om}\|L_f(\cdot,w,w_f)\|_{0,q_f,\Om}\|w-w_f\|_{0,p_f,\Om}\\
&\le \left[\|f(\cdot,w_f)\|_{0,\tilde{q}_f,\Om}
+ \|L_f(\cdot,w,w_f)\|_{0,q_f,\Om}(\|w\|_V + \|w_f\|_V)\right]\|v\|_V,\\
\|zvc(\cdot,w)\|_{0,1,\Om}
&\le \|zvc(\cdot,w_c)\|_{0,1,\Om} + \|zv(c(\cdot,w) - c(\cdot,w_c))\|_{0,1,\Om}\\
&\le \|zvc(\cdot,w_c)\|_{0,1,\Om} + \|zvL_c(\cdot,w,w_c) |w-w_c|\|_{0,1,\Om}\\
&\le \|z\|_{0,p_c,\Om}\|v\|_{0,p_c,\Om}\|c(\cdot,w_c)\|_{0,\tilde{q}_c,\Om}\\
&\quad + \|z\|_{0,p_c,\Om}\|v\|_{0,p_c,\Om}\|L_c(\cdot,w,w_c)\|_{0,q_c,\Om}\|w-w_c\|_{0,p_c,\Om}\\
&\le \left[\|c(\cdot,w_c)\|_{0,\tilde{q}_c,\Om}
+ \|L_c(\cdot,w,w_c)\|_{0,q_c,\Om}(\|w\|_V + \|w_c\|_V)\right]\|z\|_V\|v\|_V
\end{align*}
with
$\frac{1}{p_f}+\frac{1}{\tilde{q}_f}=1$
and
$\frac{2}{p_c}+\frac{1}{\tilde{q}_c}=1$.
\end{proof}

\begin{theorem}\label{th:scalee}
Under the assumptions of Lemma~\ref{l:contGarding},
let the functions $c$ and $f$ generate locally Lipschitz continuous
Nemycki operators in $V$ and assume that there exist functions $w_f,w_c\in V$ such that
$f(\cdot,w_f)\in L_{p_f/(p_f-1)}(\Om)$ and $c(\cdot,w_f)\in L_{p_c/(p_c-2)}(\Om)$,
respectively.

Furthermore let $u^\mathrm{inc}\in H^1_\mathrm{loc}(\Om^+)$ be such that
additionally $\Delta u^\mathrm{inc}\in L_{2,\mathrm{loc}}(\Om^+)$ holds.

If there exist numbers $\varrho>0$ and $L_\mathcal{F}\in (0,\beta(R,\kappa))$ such that
the following two conditions
\begin{align}
&\kappa^2\left[\|c(\cdot,w_c)-1\|_{0,\tilde{q}_c,\Om}
+ \|L_c(\cdot,w,w_c)\|_{0,q_c,\Om}(\varrho + \|w_c\|_V)\right]\varrho\nonumber\\
&\quad + \left[\|f(\cdot,w_f)\|_{0,\tilde{q}_f,\Om}
+ \|L_f(\cdot,w,w_f)\|_{0,q_f,\Om}(\varrho + \|w_f\|_V)\right]\label{eq:theu1}\\
&\quad
+ C_\mathrm{tr}\|\vecr\cdot\nabla u^\mathrm{inc} - T_\kappa u^\mathrm{inc}\|_{-1/2,2,S_R}
\le \varrho\beta(R,\kappa),\nonumber\\
&\kappa^2\left[\|L_c(\cdot,w,v)\|_{0,q_c,\Om}\varrho
+ \|c(\cdot,w_c)-1\|_{0,\tilde{q}_c,\Om}
+ \|L_c(\cdot,w,w_c)\|_{0,q_c,\Om}(\varrho + \|w_c\|_V)\right]\nonumber\\
&\quad + \|L_f(\cdot,w,v)\|_{0,q_f,\Om} \le L_\mathcal{F}
\label{eq:theu2}
\end{align}
are satisfied for all $w,v\in K_\varrho^\mathrm{cl}:=\{v \in V:\; \|v\|_V \le \varrho\}$,
then the problem \eqref{eq:scalfpp} has a unique solution $u \in K_\varrho^\mathrm{cl}$.
\end{theorem}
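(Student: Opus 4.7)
The plan is to reformulate the variational problem \eqref{eq:intscalproblem3} as the fixed-point equation \eqref{eq:scalfpp} for the operator $\mathcal{T}:=\mathcal{A}^{-1}\mathcal{F}$ and apply the Banach contraction mapping theorem on the closed ball $K_\varrho^\mathrm{cl}\subset V$. Since $K_\varrho^\mathrm{cl}$ is closed in the Hilbert space $V$, it is a complete metric space, so it suffices to verify (a) the self-mapping property $\mathcal{T}(K_\varrho^\mathrm{cl})\subset K_\varrho^\mathrm{cl}$ and (b) that $\mathcal{T}$ is a strict contraction on $K_\varrho^\mathrm{cl}$. The norm equivalence \eqref{eq:wavenumbernormequiv} permits me to freely switch between $\|\cdot\|_V$ and the wavenumber dependent norm $\|\cdot\|_{V,\kappa}$ used in Lemma~\ref{l:infsup}.

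For (a), I would start from the inf-sup consequence $\|\mathcal{A}^{-1}\ell\|_{V,\kappa}\le\beta(R,\kappa)^{-1}\|\ell\|_{V^\ast}$ established just before Lemma~\ref{l:infsup}, so it remains to bound $\|\mathcal{F}(w)\|_{V^\ast}$ for $w\in K_\varrho^\mathrm{cl}$. The two nonlinear contributions $\ell^\mathrm{contr}(w)$ and $\ell^\mathrm{src}(w)$ are estimated exactly as in the proof of Remark~\ref{rem:df-est}: after inserting the reference elements $w_c$ and $w_f$, writing $c(\cdot,w)-1=[c(\cdot,w_c)-1]+[c(\cdot,w)-c(\cdot,w_c)]$ (analogously for $f$), and applying the Lipschitz bounds from Definition~\ref{def:loclip} together with H\"older's inequality using the exponents $(p_c,p_c,\tilde q_c)$, $(p_c,p_c,p_c,q_c)$, $(p_f,\tilde q_f)$ and $(p_f,p_f,q_f)$, these contributions reproduce the first two lines on the left-hand side of \eqref{eq:theu1}. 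The linear contribution $\ell^\mathrm{inc}$ is already bounded by $C_\mathrm{tr}\|\vecr\cdot\nabla u^\mathrm{inc}-T_\kappa u^\mathrm{inc}\|_{-1/2,2,S_R}$ as noted after its definition. Summing the three bounds and invoking \eqref{eq:theu1} yields $\|\mathcal{T}(w)\|_{V,\kappa}\le\varrho\,\beta(R,\kappa)\cdot\beta(R,\kappa)^{-1}=\varrho$.

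For (b), I would compute $\mathcal{F}(w_1)-\mathcal{F}(w_2)$ for $w_1,w_2\in K_\varrho^\mathrm{cl}$; note that $\ell^\mathrm{inc}$ is $w$-independent and drops out. For the contrast term the key algebraic step is
\[
(c(\cdot,w_1)-1)w_1 - (c(\cdot,w_2)-1)w_2 = (c(\cdot,w_1)-1)(w_1-w_2) + (c(\cdot,w_1)-c(\cdot,w_2))w_2 .
\]
The second summand is handled directly via the Lipschitz property of $c$, producing the term $\kappa^2\|L_c(\cdot,w_1,w_2)\|_{0,q_c,\Om}\varrho$ of \eqref{eq:theu2} after H\"older with exponents $(p_c,p_c,p_c,q_c)$. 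For the first summand I dominate $|c(\cdot,w_1)-1|$ by $L_c(\cdot,w_1,w_c)|w_1-w_c|+|c(\cdot,w_c)-1|$ and apply H\"older again, reproducing the bracket $\kappa^2[\|c(\cdot,w_c)-1\|_{0,\tilde q_c,\Om}+\|L_c(\cdot,w_1,w_c)\|_{0,q_c,\Om}(\varrho+\|w_c\|_V)]$ of \eqref{eq:theu2}. The difference $\ell^\mathrm{src}(w_1)-\ell^\mathrm{src}(w_2)$ is handled by a direct application of the Lipschitz bound on $f$ with exponents $(p_f,p_f,q_f)$, yielding the $\|L_f(\cdot,w_1,w_2)\|_{0,q_f,\Om}$-term. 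Collecting everything, \eqref{eq:theu2} delivers $\|\mathcal{F}(w_1)-\mathcal{F}(w_2)\|_{V^\ast}\le L_\mathcal{F}\|w_1-w_2\|_V$, and the inf-sup bound then shows that $\mathcal{T}$ is Lipschitz on $K_\varrho^\mathrm{cl}$ with constant $L_\mathcal{F}/\beta(R,\kappa)<1$. Banach's theorem then yields the unique $u\in K_\varrho^\mathrm{cl}$ solving \eqref{eq:scalfpp}.

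The main obstacle is the bookkeeping in step (b): the contrast term is jointly nonlinear in $w$ -- simultaneously through the coefficient $c(\cdot,w)$ and the multiplicative factor $w$ -- so the difference must be split and a reference element $w_c\in V$ inserted to avoid uncontrolled growth of the Nemycki operators involved. The bounds become transparent only once the $L_{q_c}$-integrability of $L_c(\cdot,\cdot,\cdot)$ guaranteed by Definition~\ref{def:loclip} is fully exploited; an analogous but simpler device covers $f$ and $w_f$.
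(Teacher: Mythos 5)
Your proposal is correct and follows essentially the same route as the paper: Banach's fixed point theorem on $K_\varrho^\mathrm{cl}$, with the self-mapping property obtained from the bound on $\|\mathcal{F}(w)\|_{V^\ast}$ via Remark~\ref{rem:df-est} and condition \eqref{eq:theu1}, and the contraction property from the product-difference decomposition of the contrast term together with the Lipschitz bounds and condition \eqref{eq:theu2}. The only (immaterial) difference is that you split $(c(\cdot,w_1)-1)w_1-(c(\cdot,w_2)-1)w_2$ as $(c(\cdot,w_1)-1)(w_1-w_2)+(c(\cdot,w_1)-c(\cdot,w_2))w_2$ while the paper uses the symmetric variant $(c(\cdot,w)-c(\cdot,v))w+(c(\cdot,v)-1)(w-v)$; both yield the same bounds.
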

\begin{proof}
First we mention that $K_\varrho^\mathrm{cl}$ is a closed nonempty subset of $V$.

Next we show that
$\mathcal{A}^{-1}\mathcal{F}(K_\varrho^\mathrm{cl}) \subset K_\varrho^\mathrm{cl}$.
To this end we make use of the estimates given in the proof of Remark~\ref{rem:df-est}
and obtain
\begin{align*}
\|\mathcal{F}(w)\|_{V^\ast}
&\le \|\ell^\mathrm{contr}(w)\|_{V^\ast} + \|\ell^\mathrm{src}(w)\|_{V^\ast}
+\|\ell^\mathrm{inc}\|_{V^\ast}\\
&\le \kappa^2\left[\|c(\cdot,w_c)-1\|_{0,\tilde{q}_c,\Om}
+ \|L_c(\cdot,w,w_c)\|_{0,q_c,\Om}(\|w\|_V + \|w_c\|_V)\right]\|w\|_V\\
&\quad + \left[\|f(\cdot,w_f)\|_{0,\tilde{q}_f,\Om}
+ \|L_f(\cdot,w,w_f)\|_{0,q_f,\Om}(\|w\|_V + \|w_f\|_V)\right]
+ \|\ell^\mathrm{inc}\|_{V^\ast}\\
&\le \kappa^2\left[\|c(\cdot,w_c)-1\|_{0,\tilde{q}_c,\Om}
+ \|L_c(\cdot,w,w_c)\|_{0,q_c,\Om}(\varrho + \|w_c\|_V)\right]\varrho\\
&\quad + \left[\|f(\cdot,w_f)\|_{0,\tilde{q}_f,\Om}
+ \|L_f(\cdot,w,w_f)\|_{0,q_f,\Om}(\varrho + \|w_f\|_V)\right]\\
&\quad
+ C_\mathrm{tr}\|\vecr\cdot\nabla u^\mathrm{inc} - T_\kappa u^\mathrm{inc}\|_{-1/2,2,S_R}\,.
\end{align*}

Hence the assumption \eqref{eq:theu1} implies $\|\mathcal{A}^{-1}\mathcal{F}(w)\|_V \le \varrho$.

It remains to show that the mapping $\mathcal{A}^{-1}\mathcal{F}$ is a contraction.

We start with the consideration of the contrast term. From the elementary
decomposition
\[
(c(\cdot,w) - 1) w - (c(\cdot,v) - 1) v
=(c(\cdot,w) - c(\cdot,v)) w + (c(\cdot,v)-1) (w - v)
\]
we see that
\begin{align*}
&\|\ell^\mathrm{contr}(w) - \ell^\mathrm{contr}(v)\|_{V^\ast}\\
&\le \kappa^2\|L_c(\cdot,w,v)\|_{0,q_c,\Om}\|w-v\|_V\|w\|_V\\
&\quad + \kappa^2\left[\|c(\cdot,w_c)-1\|_{0,\tilde{q}_c,\Om}
+ \|L_c(\cdot,w,w_c)\|_{0,q_c,\Om}\|w-w_c\|_V\right]\|w-v\|_V\\
&\le \kappa^2\|L_c(\cdot,w,v)\|_{0,q_c,\Om}\|w-v\|_V\varrho\\
&\quad + \kappa^2\left[\|c(\cdot,w_c)-1\|_{0,\tilde{q}_c,\Om}
+ \|L_c(\cdot,w,w_c)\|_{0,q_c,\Om}(\varrho + \|w_c\|_V)\right]\|w-v\|_V\\
&\le \kappa^2\left[\|L_c(\cdot,w,v)\|_{0,q_c,\Om}\varrho
+ \|c(\cdot,w_c)-1\|_{0,\tilde{q}_c,\Om}
+ \|L_c(\cdot,w,w_c)\|_{0,q_c,\Om}(\varrho + \|w_c\|_V)\right]\|w-v\|_V.
\end{align*}
The estimate of the source term follows immediately from the properties of $f$:
\[
\|\ell^\mathrm{src}(w) - \ell^\mathrm{src}(v)\|_{V^\ast}
\le \|L_f(\cdot,w,v)\|_{0,q_f,\Om}\|w-v\|_V.
\]
From
\[
\|\mathcal{F}(w) - \mathcal{F}(v)\|_{V^\ast}
\le \|\ell^\mathrm{contr}(w) - \ell^\mathrm{contr}(v)\|_{V^\ast}
+ \|\ell^\mathrm{src}(w) - \ell^\mathrm{src}(v)\|_{V^\ast}
\]
and assumption \eqref{eq:theu2} we thus obtain
\[
\|\mathcal{F}(w) - \mathcal{F}(v)\|_{V^\ast}
\le L_\mathcal{F} \|w-v\|_V.
\]
In summary, Banach's fixed point theorem can be applied (see e.g.\ \cite[Sect.~9.2.1]{Evans:15})
and we conclude that the problem \eqref{eq:scalfpp}
has a unique solution $u\in$ $K_\varrho^\mathrm{cl}$.
\end{proof}

If we introduce the function space
\[
\tilde{V} := \{v\in L_2(B_R):\; v|_\Om\in H^1(\Om),\ v|_{B_R\setminus\overline\Om}\in H^1(B_R\setminus\overline\Om)\}
\]
equipped with the norm
\begin{align*}
\|v\|_{\tilde{V}}
& := \left(\|v\|_{1,2,\Om}^2 + \|v\|_{1,2,B_R\setminus\overline\Om}^2\right)^{1/2}
\quad
\text{for all }
v \in \tilde{V},
\end{align*}
the ball $K_\varrho^\mathrm{cl}$ appearing in the above theorem
can be interpreted as a ball in $\tilde{V}$ of radius $\varrho$ with center in
\[
u_0 := \begin{cases}
0 & \text{in } \Om,\\
-u^\mathrm{inc} & \text{in } B_R\setminus\overline\Om.
\end{cases}
\]
Indeed, for $u$ of the form \eqref{eq:solstructure}, it holds that
\begin{align*}
\|u-u_0\|_{\tilde{V}}^2
= \|u^\mathrm{trans}\|_{1,2,\Om}^2
+ \|u^\mathrm{rad}+u^\mathrm{inc}\|_{1,2,B_R\setminus\overline\Om}^2
= \|u\|_V^2.
\end{align*}
This means that the influence of the incident field $u^\mathrm{inc}$ on the radius $\varrho$
in Theorem~\ref{th:scalee} depends only on the deviation of $u^\mathrm{inc}$
from a radiating field measured by $\|\ell^\mathrm{inc}\|_{V^\ast}$,
but not directly on the intensity of $u^\mathrm{inc}$.
In other words, if the incident field $u^\mathrm{inc}$ is radiating
(i.e., it also satisfies the Sommerfeld radiation condition \eqref{eq:sommerfeld1}
and thus $\ell^\mathrm{inc}=0$),
the radius $\varrho$ does not depend on $u^\mathrm{inc}$.
In particular, $u^\mathrm{inc}$ can be a strong field, which is important
for the occurrence of generation effects of higher harmonics \cite{Angermann:19a}.

\begin{example}[Example \ref{ex:Kerr} continued]\label{ex:Kerr_cont}
The identity
\[
c(\cdot,\xi)-c(\cdot,\eta) = \alpha\,(|\xi|^2 - |\eta|^2)
= \alpha\,(|\xi| + |\eta|)(|\xi| - |\eta|)
\]
for all $\xi,\eta\in\C$
and the inequality
$
\left||\xi| - |\eta|\right| \le |\xi - \eta|
$
show that
\[
|c(\cdot,\xi)-c(\cdot,\eta)|
\le |\alpha|(|\xi| + |\eta|) |\xi - \eta|
\]
holds, hence we can set
$L_c(\cdot,\xi,\eta) := |\alpha|(|\xi| + |\eta|)$.
With $p_c=q_c=4$, $c$ generates a locally Lipschitz continuous Nemycki operator
in $V$.
Furthermore we may choose $w_c=0$.
Then:
\begin{align*}
\|c(\cdot,w_c)-1\|_{0,\tilde{q}_c,\Om}
&= \|\eps^{(L)} - 1\|_{0,2,\Om},\\
\|L_c(\cdot,w,v)\|_{0,q_c,\Om}
&= \|\alpha(|w| + |v|)\|_{0,4,\Om}
\le \|\alpha w\|_{0,4,\Om} + \|\alpha v\|_{0,4,\Om}\\
&\le  \|\alpha\|_{0,\infty,\Om}\left[\|w\|_{0,4,\Om} + \|v\|_{0,4,\Om}\right]
\le  C_\mathrm{emb}\|\alpha\|_{0,\infty,\Om}\left[\|w\|_V + \|v\|_V\right],\\
\|L_c(\cdot,w,w_c)\|_{0,q_c,\Om}
&=  \|\alpha w\|_{0,4,\Om}
\le  C_\mathrm{emb}\|\alpha\|_{0,\infty,\Om}\|w\|_V.
\end{align*}
Hence the validity of the following conditions is sufficient for
\eqref{eq:theu1}, \eqref{eq:theu2}:
\begin{align*}
\kappa^2\left[\|\eps^{(L)} - 1\|_{0,2,\Om}
+ C_\mathrm{emb}\|\alpha\|_{0,\infty,\Om}\varrho^2\right]\varrho&\\
+\ C_\mathrm{tr}\|\vecr\cdot\nabla u^\mathrm{inc} - T_\kappa u^\mathrm{inc}\|_{-1/2,2,S_R}
&\le \varrho\beta(R,\kappa),\\
\kappa^2\left[\|\eps^{(L)} - 1\|_{0,2,\Om}
+ 3C_\mathrm{emb}\|\alpha\|_{0,\infty,\Om}\varrho^2\right]
&\le L_\mathcal{F}.
\end{align*}
A consideration of these condition shows that there can be different scenarios
for which they can be fulfilled.
In particular, one of the smallness requirements concerns the product
$\|\alpha\|_{0,\infty,\Om}\varrho^3$.
\end{example}
\begin{example}[saturated Kerr nonlinearity]
Another important example for the nonlinearities is \cite{Akhmediev:98}
\[
c(\vecx,\xi) := \begin{cases}
1, & (\vecx,\xi)\in \Om^+\times\C,\\
\eps^{(L)}(\vecx) + \alpha(\vecx)|\xi|^2/(1+\gamma|\xi|^2), & (\vecx,\xi)\in\Om\times\C,
\end{cases}
\]
with given $\eps^{(L)}, \alpha\in L_\infty(\Om)$, saturation parameter $\gamma>0$,
and $f = 0$. Based on the identity
\[
\frac{|\xi|^2}{1+\gamma|\xi|^2} - \frac{|\eta|^2}{1+\gamma|\eta|^2}
= \frac{(1+\gamma|\eta|^2)|\xi|^2-(1+\gamma|\xi|^2)|\eta|^2}{(1+\gamma|\xi|^2)(1+\gamma|\eta|^2)}
=\frac{|\xi|^2-|\eta|^2}{(1+\gamma|\xi|^2)(1+\gamma|\eta|^2)}
\]
for all $\xi,\eta\in\C$ we obtain
\[
\left|\frac{|\xi|^2}{1+\gamma|\xi|^2} - \frac{|\eta|^2}{1+\gamma|\eta|^2}\right|
= \frac{(|\xi| + |\eta|)\left||\xi| - |\eta|\right|}{(1+\gamma|\xi|^2)(1+\gamma|\eta|^2)}
\le (|\xi| + |\eta|)|\xi - \eta|.
\]
Hence on $\Om$ we arrive at the same Lipschitz function as in the previous
Example~\ref{ex:Kerr_cont}, that is
\[
L_c(\vecx,\xi,\eta) := \begin{cases}
0, & (\vecx,\xi,\eta)\in \Om^+\times\C\times\C,\\
|\alpha|(|\xi| + |\eta|), & (\vecx,\xi,\eta)\in\Om\times\C\times\C.
\end{cases}
\]
Moreover, since
\[
c(\vecx,w_c) = c(\vecx,0) = \begin{cases}
1, & (\vecx,\xi)\in \Om^+\times\C,\\
\eps^{(L)}, & (\vecx,\xi)\in\Om\times\C.
\end{cases}
\]we get the same sufficient conditions.
\end{example}

\section{The modified boundary value problem}

Since the exact DtN operator is represented as an infinite series
(see \eqref{def:2dDtNb}, \eqref{def:3dDtNb}),
it is practically necessary to truncate this nonlocal operator
and consider only finite sums
\begin{align}
T_{\kappa,N} u(\vecx)
&:= \frac{1}{R}\sum_{|n|\le N} Z_n(\kappa R)u_n(R)Y_n(\hat\vecx),
\quad
\vecx=R\hat\vecx\in S_R\subset\R^2,
\label{def:2dDtNtrunc}\\
T_{\kappa,N} u(\vecx)
&=\frac{1}{R}\sum_{n=0}^N\sum_{|m|\le n}
z_n(\kappa R)u_n^m(R)Y_n^m(\hat\vecx),
\quad
\vecx=R\hat\vecx\in S_R\subset\R^3
\label{def:3dDtNtrunc}
\end{align}
for some $N\in\N_0$. The map $T_{\kappa,N}$ is called the \emph{truncated DtN operator},
and $N$ is the \emph{truncation order} of the DtN operator.

The replacement of the exact DtN operator $T_\kappa$ in the problem \eqref{eq:intscalproblem}
by the truncated DtN operator $T_{\kappa,N}$ introduces a perturbation,
hence we have to answer the question of existence and uniqueness of a solution
to the following problem:

Find $u_N\in V$ such that
\begin{equation}\label{eq:intscalproblem4}
a_N(u_N,v) = n_N(u_N,v)
\quad\text{for all }
v\in V.
\end{equation}
holds, where $a_N$ and $n_N$ are the forms defined by \eqref{eq:ainttransm},
\eqref{eq:ninttransm} with $T_\kappa$ replaced by $T_{\kappa,N}$.

The next result is the counterpart to Lemmata \ref{l:a-cont}, \ref{l:contGarding}.
Here we formulate a different version of G{\aa}rding's inequality compared to the case
$d = 2$ considered in \cite[Thm.~4.4]{Hsiao:11}.
\begin{lemma}\label{l:aNprop}
The sesquilinear form $a_N$
\begin{enumerate}[(i)]
\item
is bounded, i.e.\ there exists a constant $C > 0$ independent of $N$ such that
\[
|a_N(w,v)| \le C\|w\|_V \|v\|_V
\quad\text{for all } w,v\in V,
\]
and
\item
satisfies a G{\aa}rding's inequality in the form
\[
\Re a_N(v,v) \ge \|v\|_{V,\kappa}^2 - 2\kappa^2 \|v\|_{0,2,B_R}^2
\quad\text{for all } v\in V.
\]
\end{enumerate}
\end{lemma}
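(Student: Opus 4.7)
The plan is to mirror the proofs of Lemmata~\ref{l:a-cont} and \ref{l:contGarding}, while carefully tracking that all constants can be made independent of the truncation parameter $N$. The decisive observation is that, if $P_N$ denotes the $L_2(S_R)$-orthogonal projection onto the span of the Fourier modes retained in \eqref{def:2dDtNtrunc} (respectively \eqref{def:3dDtNtrunc} in 3D), then the series representations in Section~3 give the factorisation $T_{\kappa,N} = T_\kappa\circ P_N$. Because $P_N$ acts diagonally in the spherical-harmonic basis, it is norm-reducing on every $H^s(S_R)$ in the range covered by Theorem~\ref{th:DtNprop}; the Fourier-coefficient characterisation of $H^s(S_R)$, where the weight $(1+n^2)^s$ only acts on retained modes, makes this immediate.

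For part (i), I would combine this factorisation with Theorem~\ref{th:DtNprop} to obtain
\[
\|T_{\kappa,N} w\|_{-1/2,2,S_R} \le C_{T_\kappa}\|w\|_{1/2,2,S_R}
\quad\text{for all } w\in V,
\]
with $C_{T_\kappa}$ independent of $N$. After this step the proof of (i) is finished by repeating, line for line, the Cauchy--Bunyakovsky--Schwarz plus trace-theorem argument from Lemma~\ref{l:a-cont}; all other summands defining $a_N$ coincide with those of $a$.

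For part (ii), the first step is to expand the boundary term in the Fourier basis. Using orthonormality of $\{Y_n\}$ on $S_1$ and the Jacobian of the polar/spherical parametrisation of $S_R$, I would compute in 2D
\[
(T_{\kappa,N} v, v)_{S_R} = \sum_{|n|\le N} Z_n(\kappa R)\,|v_n(R)|^2,
\]
and analogously in 3D with $z_n$ in place of $Z_n$ and the double sum over $n,m$. By Lemma~\ref{l:propHankelfunctions}, $\Re Z_n(\kappa R)<0$ and $\Re z_n(\kappa R)\le -1$ for all indices, so $-\Re(T_{\kappa,N} v, v)_{S_R}\ge 0$. Substituting into the identity
\[
\Re a_N(v,v) = \|v\|_{V,\kappa}^2 - 2\kappa^2\|v\|_{0,2,B_R}^2 - \Re(T_{\kappa,N} v, v)_{S_R}
\]
delivers the claim. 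Note that this argument is actually \emph{simpler} than the Gårding proof for $a$: Melenk's lower bound $-\Re(T_\kappa v,v)_{S_R}\ge CR^{-1}\|v\|_{0,2,S_R}^2$ is not needed, since modal non-positivity already suffices.

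The main thing to watch out for is the bookkeeping that makes the factorisation $T_{\kappa,N} = T_\kappa\circ P_N$ a literal operator identity, with consistent normalisations of Fourier coefficients and the surface measure on $S_R$ in both dimensions. Given the explicit series in Section~3 this is routine, but it must be done carefully because it is precisely what guarantees that the constant $C_{T_\kappa}$ inherited from Theorem~\ref{th:DtNprop} bounds $T_{\kappa,N}$ uniformly in $N$; without that uniformity the subsequent stability analysis of the truncated problem would collapse.
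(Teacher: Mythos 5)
Your proof is correct; part (ii) coincides with the paper's argument (modal expansion, orthonormality of the harmonics, and the sign information on $\Re Z_n$, $\Re z_n$ from Lemma~\ref{l:propHankelfunctions}), so there is nothing to add there. For part (i) you take a genuinely different, and arguably slicker, route: the paper either redoes the proof of \cite[eq.~(3.4a)]{Melenk:10} with finitely many terms or invokes Lemma~\ref{l:Tkuniformcontinuity} with $s=0$, i.e.\ a direct Fourier computation that bounds $|(T_{\kappa,N}w,v)_{S_R}|$ using Corollary~\ref{cor:propHankelfunctions}, whereas you factor $T_{\kappa,N}=T_\kappa\circ P_N$ and combine the norm-reducing property of the spectral projector $P_N$ (immediate from the Fourier-coefficient norms \eqref{def:SobFourierNorm2}, \eqref{def:SobFourierNorm3}) with the $N$-free bound of Theorem~\ref{th:DtNprop}. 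The identity $T_{\kappa,N}=T_\kappa P_N$ is legitimate and is in fact exploited by the paper itself in the proof of Theorem~\ref{th:lintruncprobsolvable}, so your bookkeeping concern is already settled there. What your route buys is economy: no re-examination of the coefficient asymptotics is needed, and uniformity in $N$ is structurally evident. What the paper's route buys is an explicit constant (essentially $R^{-1}\max\{|Z_0(\kappa R)|,(1+|\kappa R|^2)^{1/2}\}$ times trace constants) together with the shifted-index estimates $H^{1/2-s}\times H^{1/2+s}$ of Lemma~\ref{l:Tkuniformcontinuity}, which are reused elsewhere (e.g.\ in Remark~\ref{rem:sharpDtNbound}); your argument only delivers the $s=0$ case, which is all that Lemma~\ref{l:aNprop} requires.
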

\begin{proof}
(i)
If the proof of \cite[eq.~(3.4a)]{Melenk:10} is carried out with finitely many terms
of the expansion of $T_\kappa$ only, the statement follows easily.
Alternatively, Lemma~\ref{l:Tkuniformcontinuity} with $s=0$ can also be used.

(ii)
As in the proof of Lemma~\ref{l:contGarding},
the definitions of $a_N$ and the wavenumber dependent norm yield
\[
\Re a_N(v,v) =\|v\|_{V,\kappa}^2 - 2\kappa^2 \|v\|_{0,2,B_R}^2 - \Re\,(T_{\kappa,N} v,v)_{S_R}.
\]
Hence it remains to estimate the last term.
In the case $d=2$, we have (see \eqref{def:2dDtNtrunc})
\[
T_{\kappa,N} v(\vecx)
:= \frac{1}{R}\sum_{|n|\le N} Z_n(\kappa R)v_n(R)Y_n(\hat\vecx),
\quad
\vecx=R\hat\vecx\in S_R.
\]
Then, using the $L_2(S_1)$-orthonormality of the circular harmonics
\cite[Prop.~3.2.1]{Zeidler:95a}, we get
\begin{align*}
-(T_{\kappa,N} v,v)_{S_R}
&= -\frac{1}{R}\sum_{|n|\le N} Z_n(\kappa R)(v_n(R)Y_n,v_n(R)Y_n)_{S_R}\\
&= -\frac{1}{R}\sum_{|n|\le N} Z_n(\kappa R)|v_n(R)|^2(Y_n,Y_n)_{S_R}\\
&= -\sum_{|n|\le N} Z_n(\kappa R)|v_n(R)|^2(Y_n,Y_n)_{S_1}\\
&= -\sum_{|n|\le N} Z_n(\kappa R)|v_n(R)|^2.
\end{align*}
Hence, by Lemma~\ref{l:propHankelfunctions},
\begin{align*}
-\Re\,(T_{\kappa,N} v,v)_{S_R}
&=\sum_{0<|n|\le N} \underbrace{(-\Re\,Z_n(\kappa R))}_{\ge 1/2}|v_n(R)|^2
+ \underbrace{(-\Re\,Z_0(\kappa R))}_{> 0}|v_0(R)|^2\\
&\ge \frac{1}{2}\sum_{|n|\le N} |v_n(R)|^2 \ge 0.
\end{align*}
The case $d=3$ can be treated similarly. From
\[
T_{\kappa,N} v(\vecx)
=\frac{1}{R}\sum_{n=0}^N\sum_{|m|\le n}
z_n(\kappa R)v_n^m(R)Y_n^m(\hat\vecx)
\]
(see \eqref{def:3dDtNtrunc}),
we immediately obtain, using the $L_2(S_1)$-orthonormality of the spherical harmonics
\cite[Thm.~2.8]{Colton:19} that
\begin{align*}
-(T_{\kappa,N} v,v)_{S_R}
&=-\frac{1}{R}\sum_{n=0}^N\sum_{|m|\le n}
z_n(\kappa R)(v_n^m(R)Y_n^m,v_n^m(R)Y_n^m)_{S_R}\\
&=-\frac{1}{R}\sum_{n=0}^N\sum_{|m|\le n}
z_n(\kappa R)|v_n^m(R)|^2(Y_n^m,Y_n^m)_{S_R}\\
&=-R\sum_{n=0}^N\sum_{|m|\le n}
z_n(\kappa R)|v_n^m(R)|^2(Y_n^m,Y_n^m)_{S_1}\\
&=-R\sum_{n=0}^N\sum_{|m|\le n}
z_n(\kappa R)|v_n^m(R)|^2,
\end{align*}
and Lemma~\ref{l:propHankelfunctions} implies
\[
-\Re\,(T_{\kappa,N} v,v)_{S_R}
=R\sum_{n=0}^N\sum_{|m|\le n}
\underbrace{(-\Re\,z_n(\kappa R))}_{\ge 1}|v_n^m(R)|^2
\ge R\sum_{n=0}^N\sum_{|m|\le n} |v_n^m(R)|^2
\ge 0.
\]
\end{proof}
In both cases we obtain the same G{\aa}rding's inequality as in the original
(untruncated) problem Lemma~\ref{l:contGarding}.

The next result is the variational version of the truncation error estimate.
It closely follows the lines of the proof of \cite[Thm.~3.3]{Hsiao:11},
where an estimate of $\|(T_\kappa - T_{\kappa,N})v\|_{s-1/2,2,S_R}$, $s\in\R$,
was proved in the case $d=2$.
\begin{lemma}\label{l:bilintruncerr}
For given $w,v\in H^{1/2}(S_R)$
it holds that
\[
\left|\left((T_\kappa - T_{\kappa,N}) w,v\right)_{S_R}\right| \le c(N,w,v)\|w\|_{1/2,2,S_R}\|v\|_{1/2,2,S_R},
\]
where $c(N,w,v)\ge 0$ and $\lim_{N\to\infty}c(N,w,v) = 0$.
\end{lemma}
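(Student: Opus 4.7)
The plan is to exploit the explicit Fourier series representations of $T_\kappa$ and $T_{\kappa,N}$ so that $(T_\kappa - T_{\kappa,N})w$ is precisely the sum over Fourier modes with index magnitude exceeding $N$. I would first expand $w$ and $v$ in the circular harmonic basis $\{Y_n\}$ (in the case $d=2$) or the spherical harmonic basis $\{Y_n^m\}$ (in the case $d=3$). Using the $L_2(S_1)$-orthonormality of these bases (cited in the proof of Lemma~\ref{l:aNprop}), I would compute
\[
\bigl((T_\kappa - T_{\kappa,N})w,v\bigr)_{S_R}
= \sum_{|n|>N} Z_n(\kappa R)\, w_n(R)\overline{v_n(R)}
\]
in 2D (with an analogous double-indexed tail $\sum_{n>N}\sum_{|m|\le n}z_n(\kappa R)w_n^m(R)\overline{v_n^m(R)}$ in 3D), picking up $R$-independent combinatorial factors from the $S_R$ versus $S_1$ surface measures.

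Next I would bound $|Z_n(\kappa R)|$ via Corollary~\ref{cor:propHankelfunctions} by $\sqrt{(1+n^2)(1+(\kappa R)^2)}$, and likewise $|z_n(\kappa R)|\le \sqrt{(1+n^2)(2+(\kappa R)^2)}$ for $d=3$. Distributing the factor $(1+n^2)^{1/2}$ symmetrically as $(1+n^2)^{1/4}\cdot(1+n^2)^{1/4}$ and applying Cauchy--Schwarz yields
\[
\bigl|\bigl((T_\kappa - T_{\kappa,N})w,v\bigr)_{S_R}\bigr|
\le C(\kappa,R)\,
\Bigl(\sum_{|n|>N}(1+n^2)^{1/2}|w_n(R)|^2\Bigr)^{\!1/2}
\Bigl(\sum_{|n|>N}(1+n^2)^{1/2}|v_n(R)|^2\Bigr)^{\!1/2},
\]
with an analogous bound in 3D using the standard spectral characterization of $H^s(S_R)$.

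Since the full sums $\sum_{n\in\Z}(1+n^2)^{1/2}|w_n(R)|^2$ and the corresponding one for $v$ are equivalent to $\|w\|_{1/2,2,S_R}^2$ and $\|v\|_{1/2,2,S_R}^2$, their tails
\[
\eps_N(w) := \Bigl(\sum_{|n|>N}(1+n^2)^{1/2}|w_n(R)|^2\Bigr)^{\!1/2},
\quad
\eps_N(v) := \Bigl(\sum_{|n|>N}(1+n^2)^{1/2}|v_n(R)|^2\Bigr)^{\!1/2}
\]
tend to zero as $N\to\infty$ for any fixed $w,v\in H^{1/2}(S_R)$. Consequently, setting
\[
c(N,w,v) := C(\kappa,R)\,\frac{\eps_N(w)\,\eps_N(v)}{\|w\|_{1/2,2,S_R}\|v\|_{1/2,2,S_R}}
\]
(and $c(N,w,v):=0$ whenever $w$ or $v$ vanishes) yields a bound of the claimed form with $c(N,w,v)\to 0$.

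The only mildly subtle point is that the pointwise estimate $|Z_n(\kappa R)|\lesssim (1+n^2)^{1/2}$ is sharp (the Hankel quotient behaves like $-n$ for large $n$), so a naive factorization would produce a constant bound independent of $N$. The trick that makes the argument work is to absorb the \emph{full} weight $(1+n^2)^{1/2}$ into the $H^{1/2}$-Sobolev weights of both $w$ and $v$ before invoking that tails of convergent series vanish; the $d=3$ case is handled in exactly the same manner with $Z_n$, $(1+n^2)$, $Y_n$ replaced by $z_n$, $(1+n(n+1))$, $Y_n^m$ respectively, using the second estimate of Corollary~\ref{cor:propHankelfunctions}.
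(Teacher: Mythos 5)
Your proposal is correct and follows essentially the same route as the paper: reduce the pairing to the tail sum $\sum_{|n|>N}Z_n(\kappa R)w_n\overline{v_n}$ via orthonormality, split the weight $(1+n^2)^{1/2}$ symmetrically, apply Cauchy--Schwarz, bound $|Z_n|(1+n^2)^{-1/2}$ uniformly via Corollary~\ref{cor:propHankelfunctions}, and let the $H^{1/2}$-tails of $w$ and $v$ supply the decay --- the paper's $\tilde{c}(N,w,v)$ is exactly your normalized product $\eps_N(w)\eps_N(v)/(\|w\|_{1/2,2,S_R}\|v\|_{1/2,2,S_R})$. Your closing observation about why a naive bound on $|Z_n|$ alone cannot give decay in $N$ correctly identifies the essential point of the argument.
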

\begin{proof}
We start with the two-dimensional situation.
So let
\begin{equation}\label{eq:wvseries2}
\begin{aligned}
w(\vecx) &= w(R\hat\vecx)
= \sum_{|n|\in\N_0} w_n(R)Y_n(\hat\vecx),
\\
v(\vecx) &= v(R\hat\vecx)
= \sum_{|k|\in\N_0} v_k(R)Y_k(\hat\vecx),
\quad
\vecx\in S_R,
\end{aligned}
\end{equation}
be series representations of $w|_{S_R},v|_{S_R}$ with the Fourier coefficients
\begin{align*}
w_n(R) &= (w(R\cdot),Y_n)_{S_1}
= \int_{S_1}w(R\hat\vecx)\overline{Y_n}(\hat\vecx)ds(\hat\vecx),
\\
v_k(R) &= (v(R\cdot),Y_k)_{S_1}
= \int_{S_1}v(R\hat\vecx)\overline{Y_k}(\hat\vecx)ds(\hat\vecx).
\end{align*}
The norm on the Sobolev space $H^s(S_R)$, $s\ge0$,
can be defined as follows \cite[Ch.~1, Rem.~7.6]{Lions:72a}:
\begin{equation}\label{def:SobFourierNorm2}
\|v\|_{s,2,S_R}^2:=
R\sum_{n\in\Z} (1+n^2)^s |v_n(R)|^2.
\end{equation}
Then, by \eqref{def:2dDtNtrunc},
the orthonormality of the circular harmonics \cite[Prop.~3.2.1]{Zeidler:95a}
and \eqref{def:SobFourierNorm2},
\begin{align*}
\left|\left((T_\kappa - T_{\kappa,N}) w,v\right)_{S_R}\right|
&= \frac{1}{R}\left|\sum_{|n|,|k|> N}
\left(Z_n(\kappa R)w_n(R)Y_n(R^{-1}\cdot),v_k(R)Y_k(R^{-1}\cdot)\right)_{S_R}\right|
\\
&= \left|\sum_{|n|,|k|> N}
Z_n(\kappa R)\left(w_n(R)Y_n,v_k(R)Y_k\right)_{S_1}\right|
\\
&= \left|\sum_{|n|> N}
Z_n(\kappa R)w_n(R)\overline{v_n}(R)\right|\\
&= \left|\sum_{|n|> N}
\frac{Z_n(\kappa R)}{(1+n^2)^{1/2}}(1+n^2)^{1/4}w_n(R)(1+n^2)^{1/4}\overline{v_n}(R)\right|\\
&\le \max_{|n|> N}
\left|\frac{Z_n(\kappa R)}{(1+n^2)^{1/2}}\right|
\sum_{|n|> N}\left|(1+n^2)^{1/4}w_n(R)(1+n^2)^{1/4}\overline{v_n}(R)\right|\\
&\le \max_{|n|> N}
\left|\frac{Z_n(\kappa R)}{(1+n^2)^{1/2}}\right|
\left(\sum_{|n|> N}(1+n^2)^{1/2}\left|w_n(R)\right|^2\right)^{1/2}\\
&\quad \times
\left(\sum_{|n|> N}(1+n^2)^{1/2}\left|v_n(R)\right|^2\right)^{1/2}\\
&\le \frac{1}{R}\max_{|n|> N}
\left|\frac{Z_n(\kappa R)}{(1+n^2)^{1/2}}\right|\tilde{c}(N,w,v)\|w\|_{1/2,2,S_R}\|v\|_{1/2,2,S_R},
\end{align*}
where
\[
\tilde{c}(N,w,v)^2
:=\frac{\sum_{|n|> N} (1+n^2)^{1/2}|w_n(R)|^2}%
{\sum_{|n|\in\N_0} (1+n^2)^{1/2}|w_n(R)|^2}
\,\frac{\sum_{|n|> N} (1+n^2)^{1/2}|v_n(R)|^2}%
{\sum_{|n|\in\N_0} (1+n^2)^{1/2}|v_n(R)|^2}.
\]
The coefficient $\tilde{c}(N,w,v)$ tends to zero for $N\to\infty$
thanks to \eqref{def:SobFourierNorm2}, \eqref{def:SobFourierNorm3}..
Corollary~\ref{cor:propHankelfunctions} implies the estimate
\[
\frac{1}{1+n^2} |Z_n(\kappa R)|^2 \le \max\{|Z_0(\kappa R)|^2,1 + |\kappa R|^2\},
\quad
|n|\in\N_0,
\]
hence we can set
\[
c(N,w,v):=\frac{\tilde{c}(N,w,v)}{R}\max\{|Z_0(\kappa R)|,(1 + |\kappa R|^2)^{1/2}\}.
\]
The investigation of the case $d=3$ runs similarly. So let
\begin{equation}\label{eq:wvseries3}
\begin{aligned}
w(\vecx) &= w(R\hat\vecx)
= \sum_{n\in\N_0}\sum_{|m|\le n} w_n^m(R)Y_n^m(\hat\vecx),
\\
v(\vecx) &= v(R\hat\vecx)
= \sum_{k\in\N_0}\sum_{|l|\le k} v_k^l(R)Y_k^l(\hat\vecx),
\quad
\vecx\in S_R,
\end{aligned}
\end{equation}
be series representations of $w|_{S_R},v|_{S_R}$ with the Fourier coefficients
\begin{align*}
w_n^m(R) &= (w(R\cdot),Y_n^m)_{S_1}
= \int_{S_1}w(R\hat\vecx)\overline{Y_n^m}(\hat\vecx)ds(\hat\vecx),
\\
v_k^l(R) &= (v(R\cdot),Y_k^l)_{S_1}
= \int_{S_1}v(R\hat\vecx)\overline{Y_k^l}(\hat\vecx)ds(\hat\vecx).
\end{align*}
The norm on the Sobolev space $H^s(S_R)$, $s\ge0$,
can be defined as follows \cite[Ch.~1, Rem.~7.6]{Lions:72a}:
\begin{equation}\label{def:SobFourierNorm3}
\|v\|_{s,2,S_R}^2:=
R^2\sum_{n\in\N_0}\sum_{|m|\le n} (1+n^2)^s |v_n^m(R)|^2.
\end{equation}
Then, by \eqref{def:3dDtNtrunc},
the orthonormality of the spherical harmonics \cite[Thm.~2.8]{Colton:19}
and \eqref{def:SobFourierNorm3},
\begin{align*}
\left|\left((T_\kappa - T_{\kappa,N}) w,v\right)_{S_R}\right|
&= \frac{1}{R}\left|\sum_{n,k> N}\sum_{|m|\le n,|l|\le k}
\left(z_n(\kappa R)w_n^m(R)Y_n^m(R^{-1}\cdot),v_k^l(R)Y_k^l(R^{-1}\cdot)\right)_{S_R}\right|\\
&= R\left|\sum_{n,k> N}\sum_{|m|\le n,|l|\le k}
z_n(\kappa R)\left(w_n^m(R)Y_n^m,v_k^l(R)Y_k^l)\right)_{S_1}\right|\\
&= R\left|\sum_{n> N}\sum_{|m|\le n}
z_n(\kappa R) w_n^m(R)\overline{v_n^m}(R)\right|\\
&= R\left|\sum_{n> N}\sum_{|m|\le n}
\frac{z_n(\kappa R)}{(1+n^2)^{1/2}}(1+n^2)^{1/4}w_n^m(R)(1+n^2)^{1/4}\overline{v_n^m}(R)\right|\\
&\le R\max_{n> N}
\left|\frac{z_n(\kappa R)}{(1+n^2)^{1/2}}\right|
\sum_{n> N}\sum_{|m|\le n}\left|(1+n^2)^{1/4}w_n^m(R)(1+n^2)^{1/4}\overline{v_n^m}(R)\right|\\
&\le R\max_{n> N}
\left|\frac{z_n(\kappa R)}{(1+n^2)^{1/2}}\right|
\left(\sum_{n> N}\sum_{|m|\le n}(1+n^2)^{1/2}\left|w_n^m(R)\right|^2\right)^{1/2}\\
&\quad \times
\left(\sum_{n> N}\sum_{|m|\le n}(1+n^2)^{1/2}\left|v_n^m(R)\right|^2\right)^{1/2}\\
&\le \frac{1}{R}\max_{n> N}
\left|\frac{z_n(\kappa R)}{(1+n^2)^{1/2}}\right|\tilde{c}(N,w,v)\|w\|_{1/2,2,S_R}\|v\|_{1/2,2,S_R},
\end{align*}
where
\[
\tilde{c}(N,w,v)^2
:=\frac{\sum_{n> N}\sum_{|m|\le n}(1+n^2)^{1/2}\left|w_n^m(R)\right|^2}%
{\sum_{|n|\in\N_0}\sum_{|m|\le n}(1+n^2)^{1/2}\left|w_n^m(R)\right|^2}
\,\frac{\sum_{n> N}\sum_{|m|\le n}(1+n^2)^{1/2}\left|v_n^m(R)\right|^2}%
{\sum_{|n|\in\N_0}\sum_{|m|\le n}(1+n^2)^{1/2}\left|v_n^m(R)\right|^2}.
\]
Thanks to Corollary~\ref{cor:propHankelfunctions} we can define
\[
c(N,w,v) := \frac{\tilde{c}(N,w,v)}{R}\left(2 + |\kappa R|^2\right)^{1/2}.
\]
\end{proof}
\begin{lemma}\label{l:Tkuniformcontinuity}
For $s\in[0,1/2)$ and $w\in H^{1-s}(B_R\setminus\overline\Om)$,
$v\in H^{1+s}(B_R\setminus\overline\Om)$ it holds that
\[
\left|(T_{\kappa,N} w,v)_{S_R}\right|
\le C_\mathrm{bl}\|w\|_{1-s,2,B_R\setminus\overline\Om}\|v\|_{1+s,2,B_R\setminus\overline\Om},
\]
where the constant $C_\mathrm{bl}\ge 0$ does not depend on $N$.
\end{lemma}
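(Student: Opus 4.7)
The plan is to reduce everything to Fourier series on the sphere $S_R$, control the multipliers $Z_n(\kappa R)$ or $z_n(\kappa R)$ by Corollary~\ref{cor:propHankelfunctions}, apply Cauchy--Schwarz with a suitable splitting of the exponents, and finally pull the sphere norms back to $B_R\setminus\overline\Om$ via the trace theorem. Since truncation only removes (rather than introduces) terms, every constant obtained in this way is automatically independent of $N$.

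More concretely, I would first expand the traces of $w$ and $v$ on $S_R$ in the same series \eqref{eq:wvseries2} (for $d=2$) or \eqref{eq:wvseries3} (for $d=3$) used in the proof of Lemma~\ref{l:bilintruncerr}. The orthonormality of the harmonics on $S_1$ together with the definition \eqref{def:2dDtNtrunc} or \eqref{def:3dDtNtrunc} reduces
\[
(T_{\kappa,N} w,v)_{S_R}
\]
to a finite sum of products $Z_n(\kappa R)\,w_n(R)\,\overline{v_n}(R)$ (resp.\ $z_n(\kappa R)\,w_n^m(R)\,\overline{v_n^m}(R)$) over $|n|\le N$ (resp.\ $n\le N$, $|m|\le n$).

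Next, Corollary~\ref{cor:propHankelfunctions} provides a bound of the form $|Z_n(\kappa R)|,|z_n(\kappa R)|\le C(\kappa,R)(1+n^2)^{1/2}$, uniformly in $n$. I would then split
\[
(1+n^2)^{1/2} = (1+n^2)^{(1/2-s)/2}\,(1+n^2)^{(1/2+s)/2}
\]
and apply the discrete Cauchy--Schwarz inequality to the resulting sum. Using the Fourier characterization \eqref{def:SobFourierNorm2} (resp.\ \eqref{def:SobFourierNorm3}) of the $H^r(S_R)$ norms, the two factors are bounded by $\|w\|_{1/2-s,2,S_R}$ and $\|v\|_{1/2+s,2,S_R}$, respectively, with constants depending only on $R$ and $\kappa$; extending the summation from $|n|\le N$ to all $n$ only enlarges the bound, which gives the $N$-independence.

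Finally, since $s\in[0,1/2)$, both $1-s$ and $1/2-s$ are strictly greater than $1/2$, so the trace operator $H^{1-s}(B_R\setminus\overline\Om)\to H^{1/2-s}(S_R)$ is bounded (\cite[Thm.~3.37]{McLean:00}), and analogously for $H^{1+s}\to H^{1/2+s}$. Combining the two trace estimates with the previous inequality yields the claimed bound with
\[
C_\mathrm{bl}=C(\kappa,R,s)\,C_\mathrm{tr}^{2},
\]
independent of $N$. I do not expect any serious obstacle here: the only small caveat is keeping track of the admissible range of $s$ so that the trace theorem still applies, which the assumption $s<1/2$ precisely guarantees.
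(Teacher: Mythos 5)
Your proposal is correct and follows essentially the same route as the paper's own proof: Fourier expansion on $S_R$, orthonormality of the harmonics, the uniform bound $|Z_n(\kappa R)|,|z_n(\kappa R)|\le C(\kappa,R)(1+n^2)^{1/2}$ from Corollary~\ref{cor:propHankelfunctions}, the exponent splitting $(1+n^2)^{1/2}=(1+n^2)^{(1/2-s)/2}(1+n^2)^{(1/2+s)/2}$ with Cauchy--Schwarz, and finally the trace theorem, with $N$-independence coming from taking the supremum of the multipliers over all $n$. Only one small slip: the relevant condition is $1/2<1\mp s<3/2$ (so that the trace maps $H^{1\mp s}(B_R\setminus\overline\Om)$ boundedly into $H^{1/2\mp s}(S_R)$), not that $1/2-s>1/2$, which is false for $s\ge 0$; the hypothesis $s\in[0,1/2)$ guarantees the correct condition.
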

\begin{proof}
We start with the two-dimensional situation
as in the proof of Lemma~\ref{l:bilintruncerr}.
If $w,v$ have the representations \eqref{eq:wvseries2},
then, by \eqref{def:2dDtNtrunc},
the orthonormality of the circular harmonics \cite[Prop.~3.2.1]{Zeidler:95a}
and \eqref{def:SobFourierNorm2},
\begin{align*}
\left|(T_{\kappa,N} w,v)_{S_R}\right|
&= \frac{1}{R}\left|\sum_{|n|,|k|\le N}
\left(Z_n(\kappa R)w_n(R)Y_n(R^{-1}\cdot),v_k(R)Y_k(R^{-1}\cdot)\right)_{S_R}\right|
\\
&= \left|\sum_{|n|,|k|\le N}
Z_n(\kappa R)\left(w_n(R)Y_n,v_k(R)Y_k\right)_{S_1}\right|
\\
&= \left|\sum_{|n|\le N}
Z_n(\kappa R)w_n(R)\overline{v_n}(R)\right|\\
&= \left|\sum_{|n|\le N}
\frac{Z_n(\kappa R)}{(1+n^2)^{1/2}}(1+n^2)^{(1/2-s)/2}w_n(R)(1+n^2)^{(1/2+s)/2}\overline{v_n}(R)\right|\\
&\le \max_{|n|\le N}
\left|\frac{Z_n(\kappa R)}{(1+n^2)^{1/2}}\right|
\sum_{|n|\le N}\left|(1+n^2)^{(1/2-s)/2}w_n(R)(1+n^2)^{(1/2+s)/2}\overline{v_n}(R)\right|\\
&\le \max_{|n|\le N}
\left|\frac{Z_n(\kappa R)}{(1+n^2)^{1/2}}\right|
\left(\sum_{|n|\le N}(1+n^2)^{1/2-s}\left|w_n(R)\right|^2\right)^{1/2}\\
&\quad \times
\left(\sum_{|n|\le N}(1+n^2)^{1/2+s}\left|v_n(R)\right|^2\right)^{1/2}\\
&\le \frac{1}{R}\max_{|n|\le N}
\left|\frac{Z_n(\kappa R)}{(1+n^2)^{1/2}}\right|\|w\|_{1/2-s,2,S_R}\|v\|_{1/2+s,2,S_R}.
\end{align*}
Corollary~\ref{cor:propHankelfunctions} implies the estimate
\[
\frac{1}{1+n^2} |Z_n(\kappa R)|^2 \le \max\{|Z_0(\kappa R)|^2,1 + |\kappa R|^2\},
\quad
|n|\in\N_0,
\]
hence
\begin{equation}\label{eq:TkuniformcontinuitySR2}
\left|(T_{\kappa,N} w,v)_{S_R}\right|
\le \frac{1}{R}\max\{|Z_0(\kappa R)|,(1 + |\kappa R|^2)^{1/2}\}\|w\|_{1/2-s,2,S_R}\|v\|_{1/2+s,2,S_R}.
\end{equation}
By the trace theorem \cite[Thm.~3.38]{McLean:00}, we finally arrive at
\[
\left|(T_{\kappa,N} w,v)_{S_R}\right|
\le \frac{C_{tr}^2}{R}\max\{|Z_0(\kappa R)|,(1 + |\kappa R|^2)^{1/2}\}
\|w\|_{1-s,2,B_R\setminus\overline\Om}\|v\|_{1+s,2,B_R\setminus\overline\Om}.
\]
The investigation of the case $d=3$ runs similarly.
So let $w,v$ have the representations \eqref{eq:wvseries3},
then, by \eqref{def:3dDtNtrunc},
the orthonormality of the spherical harmonics \cite[Thm.~2.8]{Colton:19}
and \eqref{def:SobFourierNorm3},
\begin{align*}
\left|(T_{\kappa,N} w,v)_{S_R}\right|
&= \frac{1}{R}\left|\sum_{n,k=0}^N\sum_{|m|\le n,|l|\le k}
\left(z_n(\kappa R)w_n^m(R)Y_n^m(R^{-1}\cdot),v_k^l(R)Y_k^l(R^{-1}\cdot)\right)_{S_R}\right|\\
&= R\left|\sum_{n,k=0}^N\sum_{|m|\le n,|l|\le k}
z_n(\kappa R)\left(w_n^m(R)Y_n^m,v_k^l(R)Y_k^l)\right)_{S_1}\right|\\
&= R\left|\sum_{n=0}^N\sum_{|m|\le n}
z_n(\kappa R) w_n^m(R)\overline{v_n^m}(R)\right|\\
&= R\left|\sum_{n=0}^N\sum_{|m|\le n}
\frac{z_n(\kappa R)}{(1+n^2)^{1/2}}(1+n^2)^{(1/2-s)/2}w_n^m(R)(1+n^2)^{(1/2+s)/2}\overline{v_n^m}(R)\right|\\
&\le R\max_{n\in\N_0}
\left|\frac{z_n(\kappa R)}{(1+n^2)^{1/2}}\right|
\sum_{n=0}^N\sum_{|m|\le n}\left|(1+n^2)^{(1/2-s)/2}w_n^m(R)(1+n^2)^{(1/2+s)/2}\overline{v_n^m}(R)\right|\\
&\le R\max_{n\in\N_0}
\left|\frac{z_n(\kappa R)}{(1+n^2)^{1/2}}\right|
\left(\sum_{n=0}^N\sum_{|m|\le n}(1+n^2)^{1/2-s}\left|w_n^m(R)\right|^2\right)^{1/2}\\
&\quad \times
\left(\sum_{n=0}^N\sum_{|m|\le n}(1+n^2)^{1/2+s}\left|v_n^m(R)\right|^2\right)^{1/2}\\
&\le \frac{1}{R}\max_{n\in\N_0}
\left|\frac{z_n(\kappa R)}{(1+n^2)^{1/2}}\right|\|w\|_{1/2-s,2,S_R}\|v\|_{1/2+s,2,S_R}.
\end{align*}

Corollary~\ref{cor:propHankelfunctions} yields
\begin{equation}\label{eq:TkuniformcontinuitySR3}
\left|(T_{\kappa,N} w,v)_{S_R}\right|
\le \frac{1}{R}\left(2 + |\kappa R|^2\right)^{1/2}\|w\|_{1/2-s,2,S_R}\|v\|_{1/2+s,2,S_R}.
\end{equation}
By the trace theorem \cite[Thm.~3.38]{McLean:00}, we finally arrive at
\[
\left|(T_{\kappa,N} w,v)_{S_R}\right|
\le \frac{C_{tr}^2}{R}\left(2 + |\kappa R|^2\right)^{1/2}
\|w\|_{1-s,2,B_R\setminus\overline\Om}\|v\|_{1+s,2,B_R\setminus\overline\Om}.
\]
\end{proof}

\begin{theorem}\label{th:lintruncprobsolvable}
Under the assumptions of Lemma~\ref{l:contGarding},
given an antilinear continuous functional $\ell:\; V\to\C$, there exists a constant
$N^\ast > 0$ such that for $N \ge N^\ast$ the problem

Find $u_N\in V$ such that
\begin{equation}\label{eq:lintruncproblem}
a_N(u_N,v) = \ell(v)
\quad\text{for all } v\in V
\end{equation}
is uniquely solvable.
\end{theorem}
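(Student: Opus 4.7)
The plan is to follow closely the structure of the proof of Theorem~\ref{th:linprobsolvable}. First, I observe that Lemma~\ref{l:aNprop}(i) gives a bounded operator $\mathcal{A}_N\colon V\to V^\ast$ defined by $\mathcal{A}_N w(v):=a_N(w,v)$, while Lemma~\ref{l:aNprop}(ii) combined with the compact embedding $V\hookrightarrow L_2(B_R)$ (Rellich--Kondrachov, cf.\ \cite[Thm.~3.27]{McLean:00}) exhibits $\mathcal{A}_N$ as a compact perturbation of a coercive operator. Hence $\mathcal{A}_N$ is Fredholm of index zero, and by the Fredholm alternative it is enough to establish injectivity of $\mathcal{A}_N$ for all $N$ beyond some threshold $N^\ast$.

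To prove injectivity for large $N$, I would argue by contradiction. Suppose it fails; then one can extract a subsequence $N_k\to\infty$ and $u_k\in V$ with $\|u_k\|_V=1$ and $a_{N_k}(u_k,v)=0$ for every $v\in V$. Taking $v=u_k$ and using Lemma~\ref{l:aNprop}(ii) yields
\[
\|u_k\|_{V,\kappa}^2 \le 2\kappa^2\|u_k\|_{0,2,B_R}^2,
\]
so, in view of \eqref{eq:wavenumbernormequiv}, the norms $\|u_k\|_{0,2,B_R}$ are bounded from below by a positive constant independent of $k$. After extracting a further subsequence, $u_k\rightharpoonup u$ in $V$ and $u_k\to u$ strongly in $L_2(B_R)$. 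For a fixed test function $v\in V$ I now want to pass to the limit in $a_{N_k}(u_k,v)=0$. The gradient terms and the volume term $\kappa^2(u_k,v)_{B_R}$ converge to their counterparts with $u$ by weak convergence in $V$ and strong convergence in $L_2(B_R)$, and $(T_\kappa u_k,v)_{S_R}\to(T_\kappa u,v)_{S_R}$ because Theorem~\ref{th:DtNprop} makes $w\mapsto (T_\kappa\gamma w,\gamma v)_{S_R}$ a bounded antilinear functional on $V$.

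The decisive step is to show that the truncation perturbation $((T_\kappa-T_{\kappa,N_k})u_k,v)_{S_R}$ tends to zero as $k\to\infty$. Expanding both traces in Fourier series on $S_R$, using the orthonormality of $Y_n$ (resp.\ $Y_n^m$) on $S_1$, and bounding $|Z_n(\kappa R)|/(1+n^2)^{1/2}$ (resp.\ $|z_n(\kappa R)|/(1+n^2)^{1/2}$) uniformly in $n$ by Corollary~\ref{cor:propHankelfunctions}, one obtains in two dimensions
\[
|((T_\kappa-T_{\kappa,N_k})u_k,v)_{S_R}| \le C\,\|u_k\|_{1/2,2,S_R}\Bigl(\sum_{|n|>N_k}(1+n^2)^{1/2}|v_n(R)|^2\Bigr)^{1/2},
\]
and an analogous estimate in three dimensions. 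The first factor is uniformly bounded since $\|u_k\|_V=1$ and the trace operator is continuous, while the tail on the right converges to zero as $k\to\infty$ because $\gamma v\in H^{1/2}(S_R)$ is fixed. The limit equation therefore reads $a(u,v)=0$ for every $v\in V$, and Theorem~\ref{th:linprobsolvable} forces $u=0$. But then $u_k\to 0$ in $L_2(B_R)$, contradicting the lower bound on $\|u_k\|_{0,2,B_R}$.

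The main obstacle is precisely that last Fourier-tail estimate. Because $|Z_n(\kappa R)|/(1+n^2)^{1/2}$ and $|z_n(\kappa R)|/(1+n^2)^{1/2}$ tend to nonzero limits as $|n|\to\infty$, the difference $T_\kappa-T_{\kappa,N}$ does \emph{not} converge to zero in the operator norm from $H^{1/2}(S_R)$ to $H^{-1/2}(S_R)$; a direct Neumann-series perturbation of Theorem~\ref{th:linprobsolvable} is therefore unavailable. The contradiction argument succeeds only because the estimate above is asymmetric: the tail decay is carried by the \emph{fixed} test function $v$, whereas the bounded sequence $u_k$ enters only through its uniformly controlled $V$-norm.
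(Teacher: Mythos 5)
Your proposal is correct, but it reaches the contradiction by a genuinely different mechanism than the paper. Both arguments agree on the skeleton: the G{\aa}rding inequality of Lemma~\ref{l:aNprop}(ii) plus the compact embedding $V\hookrightarrow L_2(B_R)$ make $\mathcal{A}_N$ Fredholm of index zero, so unique solvability reduces to injectivity for large $N$, and both negate that statement to produce a normalized sequence $u_k\in\ker\mathcal{A}_{N_k}$ with $N_k\to\infty$. From there the paper runs a quantitative Schatz-type duality argument: it derives the a priori bound \eqref{eq:schatzest} on $\|u-u_N\|_V$ by combining the G{\aa}rding inequality with an auxiliary adjoint problem, estimating $\eta_1$ via Lemma~\ref{l:bilintruncerr} and $\eta_2$ via the identity $T_\kappa-T_{\kappa,N}=T_\kappa(\id-P_N)$ together with a projection-error estimate; specializing to $\ell=0$, $u=0$ then contradicts $\|u_N\|_V=1$. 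You instead use a soft compactness argument: extract $u_k\rightharpoonup u$ in $V$, $u_k\to u$ in $L_2(B_R)$, pass to the limit in $a_{N_k}(u_k,v)=0$ term by term, and kill the truncation defect with the asymmetric tail estimate in which the decay is carried by the fixed test function $v$ while $u_k$ enters only through its uniformly bounded trace norm — a step whose necessity you correctly diagnose, since $T_{\kappa,N}\not\to T_\kappa$ in the operator norm from $H^{1/2}(S_R)$ to $H^{-1/2}(S_R)$. Your route is shorter and avoids both the adjoint problem and the projector $P_N$, and every step checks out (in particular the lower bound $\|u_k\|_{0,2,B_R}\ge C_-/(\sqrt{2}\kappa)$ from the G{\aa}rding inequality, and the identification $u=0$ via Theorem~\ref{th:linprobsolvable}). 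What it does not deliver is what the paper's longer computation is really for: the explicit estimate \eqref{eq:schatzest} is reused verbatim in Lemma~\ref{l:truncinfsup} to produce the $N$-independent inf-sup constant $\beta_{N^\ast}(R,\kappa)=C_-\beta(R,\kappa)/(4C_+)$ and again in the truncation error bound of Theorem~\ref{th:truncscalerrest}, whereas your compactness argument is purely qualitative and gives neither a computable threshold $N^\ast$ nor a stability constant. So as a proof of the theorem as stated it is complete; as a foundation for the rest of the section it would have to be supplemented by essentially the paper's quantitative argument anyway.
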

\begin{proof}
First we show that the problem \eqref{eq:lintruncproblem} has at most one solution.
We start as in the proof of \cite[Thm.~4.5]{Hsiao:11} and argue by contradiction,
i.e.\ we suppose the following:
\begin{equation}\label{eq:contrass}
\begin{gathered}
\forall N^\ast\in\N \quad \exists N = N(N^\ast) \ge N^\ast
\quad\text{and}\quad
u_N = u_{N(N^\ast)}\in V
\quad\text{such that }\\
a_N(u_N ,v) = 0
\quad\text{for all }
v\in V
\quad\text{and }
\|u_N\|_V = 1.
\end{gathered}
\end{equation}

However, the subsequent discussion differs significantly from the proof of \cite[Thm.~4.5]{Hsiao:11}.
We apply an argument the idea of which goes back to Schatz \cite{Schatz:74}.

First we \emph{assume} there exists a solution $u_N\in V$ of \eqref{eq:lintruncproblem}
and derive an a priori estimate of the error $\|u-u_N\|_V$, where $u\in V$ is the
solution of \eqref{eq:intlinproblem}, see Theorem~\ref{th:linprobsolvable}.
Since $a_N$ satisfies a G{\aa}rding's inequality (Lemma~\ref{l:aNprop}(ii)),
we have, making use of \eqref{eq:wavenumbernormequiv},
\[
C_-^2 \|u-u_N\|_V^2 - 2\kappa^2 \|u-u_N\|_{0,2,B_R}^2 \le \Re a_N(u-u_N,u-u_N).
\]
Since
\begin{align*}
a_N(u-u_N,v) &= a_N(u,v) - a_N(u_N,v)\\
&= \underbrace{a(u,v)}_{=\ell(v)} + a_N(u,v) - a(u,v) - \underbrace{a_N(u_N,v)}_{=\ell(v)}\\
&= \left((T_\kappa - T_{\kappa,N}) u,v\right)_{S_R},
\end{align*}
we obtain
\begin{equation}\label{eq:aprioriprim}
C_-^2 \|u-u_N\|_V^2 - 2\kappa^2 \|u-u_N\|_{0,2,B_R}^2
\le \eta_1 \|u-u_N\|_V
\end{equation}
with
\[
\eta_1 := \sup_{v\in V}\frac{\Re \left((T_\kappa - T_{\kappa,N}) u,v\right)_{S_R}}{\|v\|_V}.
\]
Now we consider the following auxiliary adjoint problem (cf.\ \cite[p.~43]{McLean:00}):

Find $w_N\in V$ such that
\[
\overline{a(v,w_N)} = (v,u-u_N)_{B_R}
\quad\text{for all } v\in V.
\]
Since $\mathcal{A}$ is a Fredholm operator (see the proof of Theorem~\ref{th:linprobsolvable}),
the adjoint problem possesses a unique solution $w_N\in V$.
Then
\begin{align*}
\|u-u_N\|_{0,2,B_R}^2 &= \overline{a(u-u_N,w_N)}
= \overline{a(u,w_N)} - \overline{a(u_N,w_N)}\\
&= \underbrace{\overline{a(u,w_N)} - \overline{a_N(u_N,w_N)}}_{=\overline{\ell(w_N)}-\overline{\ell(w_N)}=0}
+ \overline{a_N(u_N,w_N)} - \overline{a(u_N,w_N)}\\
&= \overline{\left((T_\kappa - T_{\kappa,N}) u_N,w_N\right)}_{S_R}.
\end{align*}
In particular, this relation shows that $\left((T_\kappa - T_{\kappa,N}) u_N,w_N\right)_{S_R}$ is real.
With
\[
\eta_2 := \sup_{v\in V}\frac{\left((T_\kappa - T_{\kappa,N}) u_N,v\right)_{S_R}}{\|v\|_V}
\]
we obtain
\[
\|u-u_N\|_{0,2,B_R}^2 \le \eta_2 \|w_N\|_V \le \eta_2 C_-^{-1}C(R,\kappa) \|u-u_N\|_{V^\ast}.
\]
The continuous embedding $V\subset V^\ast$ yields
\[
\|u-u_N\|_{0,2,B_R}^2 \le \eta_2 C_-^{-1}C(R,\kappa) C_\mathrm{emb}\|u-u_N\|_V.
\]
Applying this estimate in \eqref{eq:aprioriprim}, we get
\[
C_-^2 \|u-u_N\|_V^2 -2\kappa^2\eta_2 C_-^{-1}C(R,\kappa) C_\mathrm{emb}\|u-u_N\|_V
\le \eta_1 \|u-u_N\|_V.
\]
Now, if $\|u-u_N\|_V\ne 0$, we finally arrive at
\begin{equation}\label{eq:intermedee}
C_-^2 \|u-u_N\|_V \le \eta_1 + 2\kappa^2\eta_2 C_-^{-1}C(R,\kappa) C_\mathrm{emb}.
\end{equation}
Clearly this inequality is true also for $\|u-u_N\|_V = 0$ so that we can
remove this interim assumption.

Thanks to Lemma~\ref{l:bilintruncerr} we have that
\[
\left|\left((T_\kappa - T_{\kappa,N}) u,v\right)_{S_R}\right| \le c(N,u,v)\|u\|_{1/2,2,S_R}\|v\|_{1/2,2,S_R}
\le c(N,u,c) C_\mathrm{tr}^2 \|u\|_V\|v\|_V,
\]
hence
\begin{equation}\label{eq:eta1est}
\eta_1 \le c_+(N,u) C_\mathrm{tr}^2 \|u\|_V
\quad\text{with}\quad
c_+(N,u):=\sup_{v\in V} c(N,u,v),
\end{equation}
where $\lim_{N\to\infty}c_+(N,u) = 0$.
Note that, as can be seen from the proof of Lemma~\ref{l:bilintruncerr},
the second fractional factor in the representation of $\tilde{c}(N,w,v)$
can be estimated from above by one without losing the limit behaviour for $N\to\infty$.
Consequently, $\eta_1$ can be made arbitrarily small provided $N$ is large enough.

In order to estimate $\eta_2$ we cannot apply Lemma~\ref{l:bilintruncerr} directly
since the second argument in the factor $c(N,u_N,v)$ depends on $N$, too.
Therefore we give a more direct estimate.

Namely, let $v\in V$ have the representation \eqref{eq:wvseries2} or \eqref{eq:wvseries3}, respectively.
Then we define
\[
V_N|_{S_R} := \begin{cases}
\spann_{|n|\le N}\{Y_n(R^{-1}\cdot)\},& d=2,\\
\spann_{n=0\ldots N,|m|\le n}\{Y_n^m(R^{-1}\cdot)\},& d=3,
\end{cases}
\]
and introduce an orthogonal projector
\[
P_N:\;V|_{S_R}\to V_N|_{S_R}:\; v\mapsto P_Nv := \begin{cases}
\sum_{|n|\le N} v_n(R)Y_n(R^{-1}\cdot),
& d=2,\\
\sum_{n=0}^N\sum_{|m|\le n} v_n^m(R)Y_n^m(R^{-1}\cdot),
& d=3.
\end{cases}
\]
Then it holds that $V_N|_{S_R}\subset\ker (T_\kappa P_N - T_{\kappa,N})$.
Indeed, if $d=2$ and $v\in V_N|_{S_R}$, then
$P_Nv = v = \sum_{|n|\le N} v_n(R)Y_n(R^{-1}\cdot)$ and
\[
T_\kappa P_Nv = T_\kappa v
= \frac{1}{R}\sum_{|n|\le N} Z_n(\kappa R)v_n(R)Y_n(R^{-1}\cdot)
= T_{\kappa,N} v.
\]
An analogous argument applies in the case $d=3$.

Now we return to the estimate of $\eta_2$ and write, for $u_N\in V$,
\[
(T_\kappa - T_{\kappa,N}) u_N
=(T_\kappa - T_\kappa P_N) u_N + (T_\kappa P_N - T_{\kappa,N}) u_N
=T_\kappa (\id - P_N) u_N,
\]
where we have used the above property.
The advantage of this approach is that we can apply a wellknown estimate
of the projection error.
The proof of this estimate runs similarly to the proof of Lemma~\ref{l:bilintruncerr}
but only without the coefficients $Z_n$ or $z_n$, respectively:
\begin{align*}
\left|\left((\id - P_N) w,v\right)_{S_R}\right|
&= \left|\sum_{|n|,|k|> N}
\left(w_n(R)Y_n(R^{-1}\cdot),v_k(R)Y_k(R^{-1}\cdot)\right)_{S_R}\right|
\\
&= R\left|\sum_{|n|,|k|> N}
\left(w_n(R)Y_n,v_k(R)Y_k\right)_{S_1}\right|
\\
&= R\left|\sum_{|n|> N}
w_n(R)\overline{v_n}(R)\right|\\
&= R\left|\sum_{|n|> N}
\frac{1}{(1+n^2)^{1/2}}(1+n^2)^{1/4}w_n(R)(1+n^2)^{1/4}\overline{v_n}(R)\right|\\
&\le \max_{|n|> N}
\frac{R}{(1+n^2)^{1/2}}
\sum_{|n|> N}\left|(1+n^2)^{1/4}w_n(R)(1+n^2)^{1/4}\overline{v_n}(R)\right|\\
&\le \frac{R}{(1+N^2)^{1/2}}
\left(\sum_{|n|> N}(1+n^2)^{1/2}\left|w_n(R)\right|^2\right)^{1/2}\\
&\quad \times
\left(\sum_{|n|> N}(1+n^2)^{1/2}\left|v_n(R)\right|^2\right)^{1/2}\\
&\le \frac{1}{(1+N^2)^{1/2}}\|w\|_{1/2,2,S_R}\|v\|_{1/2,2,S_R}.
\end{align*}
The same estimate holds true for $d=3$.
Then we get, by Remark~\ref{rem:sharpDtNbound} (or Lemma~\ref{l:Tkuniformcontinuity}),
\begin{align*}
\left|\left((T_\kappa - T_{\kappa,N}) u_N,v\right)_{S_R}\right|
&=\left|\left(T_\kappa (\id - P_N) u_N,v\right)_{S_R}\right|\\
&\le \frac{C\kappa}{(1+N^2)^{1/2}}\|u_N\|_{1/2,2,S_R}\|v\|_{1/2,2,S_R}\\
&\le \frac{CC_\mathrm{tr}^2\kappa}{(1+N^2)^{1/2}}\|u_N\|_V\|v\|_V,
\end{align*}
thus
\[
\eta_2 \le \frac{CC_\mathrm{tr}^2\kappa}{(1+N^2)^{1/2}} \|u_N\|_V.
\]
Using this estimate and \eqref{eq:eta1est} in \eqref{eq:intermedee}, we obtain
\begin{equation}\label{eq:schatzest}
C_-^2 \|u-u_N\|_V \le c_+(N,u) C_\mathrm{tr}^2 \|u\|_V + 2\kappa^2 C_-^{-1}C(R,\kappa) C_\mathrm{emb}
\frac{CC_\mathrm{tr}^2\kappa}{(1+N^2)^{1/2}} \|u_N\|_V.
\end{equation}
Now we apply this estimate to the solutions $u_N$ of the homogeneous truncated problems
in \eqref{eq:contrass}.
By Theorem~\ref{th:linprobsolvable}, the homogeneous linear interior problem \eqref{eq:intlinproblem}
(i.e.\ $\ell=0$) has the solution $u=0$, and the above estimate implies
\[
C_-^2 \|u_N\|_V \le 2\kappa^2 C_-^{-1}C(R,\kappa) C_\mathrm{emb}
\frac{CC_\mathrm{tr}^2\kappa}{(1+N^2)^{1/2}} \|u_N\|_V,
\]
which is a contradiction to $\|u_N\|_V=1$ for all $N$.
\end{proof}
Although the proof of Theorem~\ref{th:lintruncprobsolvable} allows an analogous conclusion
as in Lemma~\ref{l:infsup} that the truncated sesquilinear form $a_N$ satisfies an inf-sup condition,
such a conclusion is not fully satisfactory since the question remains
whether and how the  inf-sup constant depends on $N$ or not.
However, at least for sufficiently large $N$, a positive answer can given.
\begin{lemma}\label{l:truncinfsup}
Under the assumptions of Lemma~\ref{l:contGarding},
there exists a number $N^\ast\in\N$ such that
\[
\beta_{N^\ast}(R,\kappa) := \inf_{w\in V\setminus\{0\}}\sup_{v\in V\setminus\{0\}}
\frac{|a_N(w,v)|}{\|w\|_{V,\kappa} \|v\|_{V,\kappa}}>0
\]
is independent of $N\ge N^\ast$.
\end{lemma}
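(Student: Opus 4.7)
The plan is to argue by contradiction, in the spirit of a Schatz-type compactness argument. The ingredients I intend to combine are the uniform G\aa{}rding inequality for $a_N$ from Lemma~\ref{l:aNprop}(ii), the compact embedding $V\hookrightarrow L_2(B_R)$ (Rellich--Kondrachov) together with the compactness of the trace $\gamma:V\to L_2(S_R)$, and the uniqueness part of Theorem~\ref{th:linprobsolvable} for the untruncated linear problem.

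Suppose the conclusion fails. Then there exist $N_k\to\infty$ and $w_k\in V$ with $\|w_k\|_{V,\kappa}=1$ and $\varepsilon_k := \sup_{v\neq 0}|a_{N_k}(w_k,v)|/\|v\|_{V,\kappa}\to 0$. After extracting a subsequence I may assume that $w_k\rightharpoonup w$ weakly in $V$, $w_k\to w$ strongly in $L_2(B_R)$, and $\gamma w_k\to \gamma w$ strongly in $L_2(S_R)$.

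The central step is to verify that $w=0$. For a fixed $v\in V$ I would decompose
\[
a_{N_k}(w_k,v) = a(w_k,v) + \bigl((T_\kappa - T_{\kappa,N_k})w_k,\,v\bigr)_{S_R}.
\]
The bulk terms of $a(w_k,v)$ converge to $a(w,v)$ by the weak $V$-convergence (the $\kappa^2$-term by the strong $L_2$-convergence). For the boundary correction I exploit the identity $T_{\kappa,N_k}=T_\kappa P_{N_k}$, where $P_{N_k}$ is the Fourier truncation projector on $S_R$ introduced in the proof of Theorem~\ref{th:lintruncprobsolvable}, and shift the projector to the test function by duality:
\[
\bigl(T_\kappa(\id-P_{N_k})w_k,v\bigr)_{S_R} = \bigl(w_k,\,(\id-P_{N_k})T_\kappa^\ast v\bigr)_{S_R}.
\]
Since $T_\kappa^\ast v$ is a fixed element of $H^{-1/2}(S_R)$ and the circular/spherical harmonics form a Schauder basis of every $H^s(S_R)$ (with $P_{N_k}$ uniformly bounded on these spaces), we have $(\id-P_{N_k})T_\kappa^\ast v\to 0$ in $H^{-1/2}(S_R)$. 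The boundedness of $\{w_k\}$ in $H^{1/2}(S_R)$ then makes the duality pairing vanish in the limit. Passing to the limit in $a_{N_k}(w_k,v)=a(w_k,v)+\bigl((T_\kappa-T_{\kappa,N_k})w_k,v\bigr)_{S_R}$ while noting that its absolute value is bounded by $\varepsilon_k\|v\|_{V,\kappa}\to 0$ yields $a(w,v)=0$ for every $v\in V$, whence $w=0$ by Theorem~\ref{th:linprobsolvable}.

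To reach the contradiction, I apply Lemma~\ref{l:aNprop}(ii) to the pair $(w_k,w_k)$:
\[
\|w_k\|_{V,\kappa}^2 - 2\kappa^2\|w_k\|_{0,2,B_R}^2 \le \Re\, a_{N_k}(w_k,w_k) \le \varepsilon_k\|w_k\|_{V,\kappa} = \varepsilon_k.
\]
The strong $L_2$-convergence gives $\|w_k\|_{0,2,B_R}\to 0$, so the left-hand side tends to $1$ while the right-hand side tends to $0$, contradicting $\|w_k\|_{V,\kappa}=1$. The main obstacle I expect is precisely the limit passage in the DtN-correction term: it is the place where the duality rearrangement and the strong convergence $P_{N_k}\to\id$ in $H^{-1/2}(S_R)$ must work together with the weak-to-strong compactness of $w_k$, and where a naive norm bound for $T_\kappa-T_{\kappa,N_k}$ on $H^{1/2}\to H^{-1/2}$ would not suffice.
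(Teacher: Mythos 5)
Your argument is correct, but it is genuinely different from the one in the paper. The paper does not argue by compactness and contradiction at this point; instead it recycles the quantitative a priori estimate \eqref{eq:schatzest} obtained in the Schatz-type duality argument of Theorem~\ref{th:lintruncprobsolvable}, applies the triangle inequality $\|u_N\|_V\le\|u\|_V+\|u-u_N\|_V$, absorbs the $\|u_N\|_V$-term on the right for $N\ge N^\ast$ large enough, and then invokes the continuous stability (Lemma~\ref{l:infsup}) to arrive at $\|u_N\|_{V,\kappa}\le \frac{4C_+}{C_-\beta(R,\kappa)}\|\ell\|_{V^\ast}$; this yields the \emph{explicit formula} $\beta_{N^\ast}(R,\kappa)=C_-\beta(R,\kappa)/(4C_+)$, which is precisely the ``formula-based relationship between the discrete and the continuous stability constant'' advertised in the introduction and in the sentence following the lemma. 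Your compactness route (weak convergence in $V$, Rellich for the $L_2(B_R)$-term, $T_{\kappa,N}=T_\kappa P_N$ together with the duality shift $\bigl(T_\kappa(\id-P_{N_k})w_k,v\bigr)_{S_R}=\bigl(w_k,(\id-P_{N_k})T_\kappa^\ast v\bigr)_{S_R}$ and $(\id-P_{N_k})T_\kappa^\ast v\to 0$ in $H^{-1/2}(S_R)$, then injectivity of $\mathcal{A}$ from Theorem~\ref{th:linprobsolvable} and the uniform G{\aa}rding inequality) is sound: the key limit passage in the DtN correction works exactly as you describe, and your observation that a naive $H^{1/2}\to H^{-1/2}$ norm bound would not suffice is accurate. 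What your approach buys is independence from the existence theory for the truncated problem --- you need only the uniqueness part of Theorem~\ref{th:linprobsolvable}, not Theorem~\ref{th:lintruncprobsolvable} --- and a cleaner logical structure; what it loses is any explicit value of $\beta_{N^\ast}$ and of the threshold $N^\ast$, both of which the paper's proof provides and which enter the hypotheses of Theorems~\ref{th:truncscalee} and \ref{th:truncscalerrest} (though only positivity and $N$-independence are strictly needed there). Two cosmetic points: the strong $L_2(S_R)$-convergence of the traces that you extract is never actually used (weak convergence in $H^{1/2}(S_R)$ plus boundedness suffices everywhere), and the convergence $a(w_k,v)\to a(w,v)$ also requires handling the term $(T_\kappa w_k,v)_{S_R}$, which follows from the same duality pairing against the fixed element $T_\kappa^\ast v$ --- worth stating explicitly since you single out only the ``bulk terms''.
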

In the proof a formula is given that expresses $\beta_{N^\ast}(R,\kappa)$
in terms of $\beta(R,\kappa)$.
\begin{proof}
We return to the proof of Theorem~\ref{th:lintruncprobsolvable} and
mention that the estimate \eqref{eq:schatzest} is valid for
solutions $u,u_N$ of the general linear problems \eqref{eq:intlinproblem}
(or, equally, \eqref{eq:intlinopproblem}) and \eqref{eq:lintruncproblem}, respectively.

By the triangle inequality,
\begin{align*}
\|u_N\|_V &\le \|u\|_V + \|u-u_N\|_V \\
&\le \|u\|_V + c_+(N,u) C_-^{-2}C_\mathrm{tr}^2 \|u\|_V + 2\kappa^2 C_-^{-3}C(R,\kappa) C_\mathrm{emb}
\frac{CC_\mathrm{tr}^2\kappa}{(1+N^2)^{1/2}} \|u_N\|_V.
\end{align*}
If $N^\ast$ is sufficiently large such that
\[
\kappa^2 C_-^{-3}C(R,\kappa) C_\mathrm{emb}
\frac{CC_\mathrm{tr}^2\kappa}{(1+N^2)^{1/2}} \le \frac14
\quad\text{and}\quad
c_+(N,u) C_-^{-2}C_\mathrm{tr}^2 \le 1
\quad\text{for all }N\ge N^\ast,
\]
then, by Lemma~\ref{l:infsup},
\[
\|u_N\|_V \le 4 \|u\|_V \le \frac{4}{C_-} \|u\|_{V,\kappa}
\le  \|\ell\|_{V^\ast}.
\]
That is, the sesquilinear form $a_N$ satisfies an \emph{inf-sup condition}
\[
\beta_{N^\ast}(R,\kappa) := \inf_{w\in V\setminus\{0\}}\sup_{v\in V\setminus\{0\}}
\frac{|a_N(w,v)|}{\|w\|_{V,\kappa} \|v\|_{V,\kappa}} > 0
\]
with $\displaystyle\beta_{N^\ast}(R,\kappa):= \frac{C_- \beta(R,\kappa)}{4C_+}$
independent of $N\ge N^\ast$.
\end{proof}
Analogously to \eqref{eq:defopA} we introduce the truncated linear operator
$\mathcal{A}_N:\; V \to V^\ast$ by
\[
\mathcal{A}_N w(v) := a_N(w,v)
\quad\text{for all } w,v\in V.
\]
By Lemma~\ref{l:aNprop}, $\mathcal{A}_N$ is a bounded operator,
and Lemma~\ref{l:truncinfsup} implies that $\mathcal{A}_N$ has a bounded inverse:
\[
\|w\|_{V,\kappa} \le \beta_{N^\ast}(R,\kappa)^{-1}\|\mathcal{A}_N w\|_\ast
\quad\text{for all } w\in V.
\]
Furthermore, we define a nonlinear operator $\mathcal{F}_N:\; V \to V^\ast$ by
\[
\mathcal{F}_N(w)(v) := \ell^\mathrm{contr}(w) + \ell^\mathrm{src}(w) + \ell_N^\mathrm{inc}
\qquad\text{for all }
w\in V,
\]
where
\[
\langle\ell_N^\mathrm{inc},v\rangle
:=
(\vecr\cdot\nabla u^\mathrm{inc} - T_{\kappa,N} u^\mathrm{inc},v)_{S_R}.
\]
The problem \eqref{eq:intscalproblem4} is then equivalent to the operator equation
\[
\mathcal{A}_N u = \mathcal{F}_N(u)
\quad\text{in } V^\ast,
\]
and further to the fixed-point problem
\begin{equation}\label{eq:scalfpptruncated}
u = \mathcal{A}_N^{-1}\mathcal{F}_N(u)
\quad\text{in } V.
\end{equation}\begin{theorem}\label{th:truncscalee}
Under the assumptions of Lemma~\ref{l:contGarding},
let the functions $c$ and $f$ generate locally Lipschitz continuous
Nemycki operators in $V$ and assume that there exist functions $w_f,w_c\in V$ such that
$f(\cdot,w_f)\in L_{p_f/(p_f-1)}(\Om)$ and $c(\cdot,w_f)\in L_{p_c/(p_c-2)}(\Om)$,
respectively.

Furthermore let
$u^\mathrm{inc}\in H^1_\mathrm{loc}(\Om^+)$ be such that
additionally $\Delta u^\mathrm{inc}\in L_{2,\mathrm{loc}}(\Om^+)$ holds.

If there exist numbers $\varrho>0$ and $L_\mathcal{F}\in (0,\beta_{N^\ast}(R,\kappa))$
(where $N^\ast$ and $\beta_{N^\ast}(R,\kappa)$ are from Lemma~\ref{l:truncinfsup})
such that the following two conditions
\begin{align*}
&\kappa^2\left[\|c(\cdot,w_c)-1\|_{0,\tilde{q}_c,\Om}
+ \|L_c(\cdot,w,w_c)\|_{0,q_c,\Om}(\varrho + \|w_c\|_V)\right]\varrho\nonumber\\
&\quad + \left[\|f(\cdot,w_f)\|_{0,\tilde{q}_f,\Om}
+ \|L_f(\cdot,w,w_f)\|_{0,q_f,\Om}(\varrho + \|w_f\|_V)\right]\\
&\quad
+ C_\mathrm{tr}\|\vecr\cdot\nabla u^\mathrm{inc} - T_{\kappa,N} u^\mathrm{inc}\|_{-1/2,2,S_R}
\le \varrho\beta_{N^\ast}(R,\kappa),\nonumber\\
&\kappa^2\left[\|L_c(\cdot,w,v)\|_{0,q_c,\Om}\varrho
+ \|c(\cdot,w_c)-1\|_{0,\tilde{q}_c,\Om}
+ \|L_c(\cdot,w,w_c)\|_{0,q_c,\Om}(\varrho + \|w_c\|_V)\right]\nonumber\\
&\quad + \|L_f(\cdot,w,v)\|_{0,q_f,\Om} \le L_\mathcal{F}
\end{align*}
are satisfied for all $w,v\in K_\varrho^\mathrm{cl}$,
then the problem \eqref{eq:scalfpptruncated} has a unique solution
$u_N \in K_\varrho^\mathrm{cl}$ for all $N \ge N^\ast$.
\end{theorem}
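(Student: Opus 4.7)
The plan is to imitate the proof of Theorem~\ref{th:scalee} essentially line by line, replacing the untruncated objects $a$, $T_\kappa$, $\mathcal{A}$, $\mathcal{F}$, $\ell^\mathrm{inc}$, $\beta(R,\kappa)$ by their truncated counterparts $a_N$, $T_{\kappa,N}$, $\mathcal{A}_N$, $\mathcal{F}_N$, $\ell_N^\mathrm{inc}$, $\beta_{N^\ast}(R,\kappa)$. The crucial preparatory fact is Lemma~\ref{l:truncinfsup}, which guarantees that for all $N\ge N^\ast$ the operator $\mathcal{A}_N^{-1}:\,V^\ast\to V$ exists and is bounded with a bound that does not deteriorate with $N$, so that $\|\mathcal{A}_N^{-1}\ell\|_V\le C_+\beta_{N^\ast}(R,\kappa)^{-1}\|\ell\|_{V^\ast}$.

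First I would verify that $\mathcal{A}_N^{-1}\mathcal{F}_N$ is a self-map of the closed ball $K_\varrho^\mathrm{cl}\subset V$. The splitting $\mathcal{F}_N(w)=\ell^\mathrm{contr}(w)+\ell^\mathrm{src}(w)+\ell_N^\mathrm{inc}$ is literally the same as in the untruncated case except for the replacement of $T_\kappa u^\mathrm{inc}$ by $T_{\kappa,N} u^\mathrm{inc}$ in the boundary functional. The contrast and source functionals live entirely in $\Omega$ and are unaffected, so the estimates from the proof of Remark~\ref{rem:df-est} carry over verbatim and yield the two bracketed terms on the left of the first hypothesis. For $\ell_N^\mathrm{inc}$ I would use exactly the same trace argument as for $\ell^\mathrm{inc}$, giving $\|\ell_N^\mathrm{inc}\|_{V^\ast}\le C_\mathrm{tr}\|\vecr\cdot\nabla u^\mathrm{inc}-T_{\kappa,N} u^\mathrm{inc}\|_{-1/2,2,S_R}$; that $T_{\kappa,N} u^\mathrm{inc}$ actually lies in $H^{-1/2}(S_R)$ with a bound independent of $N$ is guaranteed by Lemma~\ref{l:aNprop}(i) (equivalently Lemma~\ref{l:Tkuniformcontinuity} at $s=0$). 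Summing the three contributions and dividing by $\beta_{N^\ast}(R,\kappa)$ via Lemma~\ref{l:truncinfsup}, the first hypothesis of the theorem yields $\|\mathcal{A}_N^{-1}\mathcal{F}_N(w)\|_V\le \varrho$.

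Second, I would establish the contraction property. The decomposition
\[
(c(\cdot,w)-1)w-(c(\cdot,v)-1)v=(c(\cdot,w)-c(\cdot,v))w+(c(\cdot,v)-1)(w-v)
\]
combined with the local Lipschitz bounds of Definition~\ref{def:loclip} gives exactly the same estimate for $\|\ell^\mathrm{contr}(w)-\ell^\mathrm{contr}(v)\|_{V^\ast}$ as in Theorem~\ref{th:scalee}; the source term is estimated analogously; and $\ell_N^\mathrm{inc}$ drops out of the difference because it does not depend on the argument. Invoking the second hypothesis produces $\|\mathcal{F}_N(w)-\mathcal{F}_N(v)\|_{V^\ast}\le L_\mathcal{F}\|w-v\|_V$, and a final application of Lemma~\ref{l:truncinfsup} gives contraction constant $L_\mathcal{F}/\beta_{N^\ast}(R,\kappa)<1$ for the fixed-point map $\mathcal{A}_N^{-1}\mathcal{F}_N$ on $K_\varrho^\mathrm{cl}$.

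Finally I would apply Banach's fixed-point theorem to conclude the existence and uniqueness of $u_N\in K_\varrho^\mathrm{cl}$ solving \eqref{eq:scalfpptruncated} for every $N\ge N^\ast$. I expect no serious obstacle: the only delicate point is making sure that the constant $C_\mathrm{tr}$ entering the bound of $\|\ell_N^\mathrm{inc}\|_{V^\ast}$ is the $N$-independent trace constant (it is, since the trace argument does not involve $T_{\kappa,N}$ itself but only the $H^{-1/2}(S_R)$-norm of its image), so the entire argument is indeed uniform in $N\ge N^\ast$.
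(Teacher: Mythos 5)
Your proposal is correct and coincides with the paper's own argument: the paper's proof of this theorem consists of the single sentence ``Analogously to the proof of Theorem~\ref{th:scalee}'', and your write-up is exactly that analogy, carried out with the truncated objects $\mathcal{A}_N$, $\mathcal{F}_N$, $\ell_N^\mathrm{inc}$ and the $N$-independent inf-sup constant $\beta_{N^\ast}(R,\kappa)$ from Lemma~\ref{l:truncinfsup}. Your additional remark that the trace constant in the bound for $\|\ell_N^\mathrm{inc}\|_{V^\ast}$ is independent of $N$ is a useful point the paper leaves implicit.
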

\begin{proof}
Analogously to the proof of Theorem~\ref{th:scalee}.
\end{proof}

The next result is devoted to an estimate of the truncation error $\|u-u_N\|_V$.
We start from the proof of Theorem~\ref{th:lintruncprobsolvable} but have in mind the
nonlinear problems \eqref{eq:intscalproblem3} and \eqref{eq:intscalproblem4}.
So let $u,u_N\in V$ be the solutions of \eqref{eq:intscalproblem3} and \eqref{eq:intscalproblem4},
respectively.
Since $a_N$ satisfies a G{\aa}rding's inequality by Lemma~\ref{l:aNprop}, we have that
\begin{equation}\label{eq:gaardingappl}
C_-^2 \|u-u_N\|_V^2 - 2\kappa^2 \|u-u_N\|_{0,2,B_R}^2 \le \Re a_N(u-u_N,u-u_N),
\end{equation}
where
$C_- := \min\{1;\kappa\}$.
In order to estimate the right-hand side, we write
\begin{align*}
a_N(u-u_N,v) &= a_N(u,v) - a_N(u_N,v)\\
&= a(u,v) + a_N(u,v) - a(u,v) - a_N(u_N,v)\\
&= a_N(u,v) - a(u,v) + n(u,v) - n_N(u_N,v).
\end{align*}
Now, the first difference term is equal to $\left((T_\kappa - T_{\kappa,N}) u,v\right)_{S_R}$,
and for the second one we have
\begin{align*}
&\quad
n(u,v) - n_N(u_N,v)\\
&= \kappa^2(c(\cdot,u) - 1) u,v)_{B_R}
+ (f(\cdot,u),v)_{B_R}
- (T_\kappa u^\mathrm{inc},v)_{S_R}
+ (\vecr\cdot\nabla u^\mathrm{inc},v)_{S_R}\\
&\quad
-\kappa^2(c(\cdot,u_N) - 1) u_N,v)_{B_R}
- (f(\cdot,u_N),v)_{B_R}
+ (T_{\kappa,N} u^\mathrm{inc},v)_{S_R}
- (\vecr\cdot\nabla u^\mathrm{inc},v)_{S_R}\\
&= \kappa^2(c(\cdot,u) - 1) u,v)_{B_R}
-\kappa^2(c(\cdot,u_N) - 1) u_N,v)_{B_R}
+ (f(\cdot,u),v)_{B_R}
- (f(\cdot,u_N),v)_{B_R}\\
&\quad
- \left((T_\kappa - T_{\kappa,N}) u^\mathrm{inc},v\right)_{S_R}.
\end{align*}
As in the proof of Theorem~\ref{th:scalee} we see that
\begin{align*}
&\quad
|n(u,v) - n_N(u_N,v)|\\
&\le \|\ell^\mathrm{contr}(u) - \ell^\mathrm{contr}(u_N)\|_{V^\ast}\|v\|_V
+ \|\ell^\mathrm{src}(u) - \ell^\mathrm{src}(u_N)\|_{V^\ast}\|v\|_V
+ \left|\left((T_\kappa - T_{\kappa,N}) u^\mathrm{inc},v\right)_{S_R}\right|\\
&\le L_\mathcal{F} \|u-u_N\|_V\|v\|_V + \eta^\mathrm{inc}\|v\|_V
\end{align*}
with
\[
\eta^\mathrm{inc}
:= \sup_{v\in V}\frac{\left|\left((T_\kappa - T_{\kappa,N}) u^\mathrm{inc},v\right)_{S_R}\right|}{\|v\|_V}.
\]
Hence we obtain from \eqref{eq:gaardingappl}
\begin{equation}\label{eq:aprioriprim}
\begin{aligned}
C_-^2 \|u-u_N\|_V^2 - 2\kappa^2 \|u-u_N\|_{0,2,B_R}^2
&\le \eta_1 \|u-u_N\|_V + L_\mathcal{F} \|u-u_N\|_V^2 + \eta^\mathrm{inc}\|u-u_N\|_V\\
&= \left(\eta_1 + \eta^\mathrm{inc} + L_\mathcal{F} \|u-u_N\|_V\right)\|u-u_N\|_V
\end{aligned}
\end{equation}
with
\[
\eta_1 := \sup_{v\in V}\frac{\Re \left((T_\kappa - T_{\kappa,N}) u,v\right)_{S_R}}{\|v\|_V}.
\]
Now we consider the following auxiliary adjoint problem (cf.\ \cite[p.~43]{McLean:00}):

Find $w_N\in V$ such that
\[
\overline{a(v,w_N)} = (v,u-u_N)_{B_R}
\quad\text{for all } v\in V.
\] 
Since $\mathcal{A}$ is a Fredholm operator (see the proof of Theorem~\ref{th:linprobsolvable}),
the adjoint problem possesses a unique solution $w_N\in V$.
Then
\begin{align}
\|u-u_N\|_{0,2,B_R}^2 &= \overline{a(u-u_N,w_N)}
= \overline{a(u,w_N)} - \overline{a(u_N,w_N)}
\nonumber\\
&= \overline{a(u,w_N)} - \overline{a_N(u_N,w_N)}
+ \overline{a_N(u_N,w_N)} - \overline{a(u_N,w_N)}
\label{eq:sqL2err}\\
&= \overline{n(u,w_N)} - \overline{n_N(u_N,w_N)}
+ \overline{a_N(u_N,w_N)} - \overline{a(u_N,w_N)}.
\nonumber
\end{align}
The first difference term can be estimated as above, that is,
\begin{align*}
\left|\overline{n(u,w_N)} - \overline{n_N(u_N,w_N)}\right|
\le L_\mathcal{F} \|u-u_N\|_V\|w_N\|_V + \eta^\mathrm{inc}\|w_N\|_V.
\end{align*}
Setting
\[
\eta_2 := \sup_{v\in V}\frac{\left|\left((T_\kappa - T_{\kappa,N}) u_N,v\right)_{S_R}\right|}{\|v\|_V}\,,
\]
we obtain from \eqref{eq:sqL2err}
\begin{align*}
\|u-u_N\|_{0,2,B_R}^2
&\le \left(L_\mathcal{F} \|u-u_N\|_V
+ \eta^\mathrm{inc} +\eta_2\right) \|w_N\|_V\\
&\le \left(L_\mathcal{F} \|u-u_N\|_V
+ \eta^\mathrm{inc} +\eta_2\right) C_-^{-1}C(R,\kappa) \|u-u_N\|_{V^\ast},
\end{align*}
and the continuous embedding $V\subset V^\ast$ yields
\[
\|u-u_N\|_{0,2,B_R}^2 \le \left(L_\mathcal{F} \|u-u_N\|_V
+ \eta^\mathrm{inc} +\eta_2\right) C_-^{-1}C(R,\kappa) C_\mathrm{emb}\|u-u_N\|_V.
\]
Applying this estimate in \eqref{eq:aprioriprim}, we get
\begin{gather*}
C_-^2 \|u-u_N\|_V^2 -2\kappa^2\left(L_\mathcal{F} \|u-u_N\|_V
+ \eta^\mathrm{inc} +\eta_2\right) C_-^{-1}C(R,\kappa) C_\mathrm{emb}\|u-u_N\|_V\\
\le \left(\eta_1 + \eta^\mathrm{inc} + L_\mathcal{F} \|u-u_N\|_V\right) \|u-u_N\|_V.
\end{gather*}
Now, if $\|u-u_N\|_V\ne 0$, we finally arrive at
\begin{equation}\label{eq:intermedee}
\begin{gathered}
C_-^2 \|u-u_N\|_V \\
\le \eta_1 + \eta^\mathrm{inc} + L_\mathcal{F} \|u-u_N\|_V
+ 2\kappa^2\left(L_\mathcal{F} \|u-u_N\|_V
+ \eta^\mathrm{inc} +\eta_2\right) C_-^{-1}C(R,\kappa) C_\mathrm{emb}.
\end{gathered}
\end{equation}
Clearly this inequality is true also for $\|u-u_N\|_V = 0$ so that we can
remove this interim assumption.

Thanks to Lemma~\ref{l:bilintruncerr} we have that
\[
\left|\left((T_\kappa - T_{\kappa,N}) u,v\right)_{S_R}\right|
\le c_+(N,u)\|u\|_{1/2,2,S_R}\|v\|_{1/2,2,S_R}
\]
with $\lim_{N\to\infty}c_+(N,u) = 0$,
hence
\begin{equation}\label{eq:eta1est}
\eta_1 \le c_+(N,u) C_\mathrm{tr}^2 \|u\|_V.
\end{equation}
Consequently, $\eta_1$ can be made arbitrarily small provided $N$ is large enough.
Analogously, $\eta^\mathrm{inc}$ can be made arbitrarily small for sufficiently large $N$.

For $\eta_2$, we have the following estimate (see the proof of Theorem~\ref{th:lintruncprobsolvable}):
\[
\eta_2 \le \frac{CC_\mathrm{tr}^2\kappa}{(1+N^2)^{1/2}} \|u_N\|_V.
\]
Using this estimate, the inequality \eqref{eq:eta1est}, and the analogous estimate for $\eta^\mathrm{inc}$
in \eqref{eq:intermedee}, we obtain
\begin{align*}
C_-^2 \|u-u_N\|_V
&\le c_+(N,u) C_\mathrm{tr}^2 \|u\|_V
+ c_+(N,u^\mathrm{inc}) C_\mathrm{tr}^2 \|u^\mathrm{inc}\|_V
+ L_\mathcal{F} \|u-u_N\|_V\\
&\quad
+ 2\kappa^2 C_-^{-1}C(R,\kappa) C_\mathrm{emb}
\Big(L_\mathcal{F} \|u-u_N\|_V
+ c_+(N,u^\mathrm{inc}) C_\mathrm{tr}^2 \|u^\mathrm{inc}\|_V\\
&\qquad
+ \frac{CC_\mathrm{tr}^2\kappa}{(1+N^2)^{1/2}} \|u_N\|_V\Big).
\end{align*}
Now, if $L_\mathcal{F}$ also satisfies
\[
\tilde{\varrho}\big(1+2\kappa^2 C_-^{-1}C(R,\kappa) C_\mathrm{emb}\big) L_\mathcal{F}\le \frac{C_-^2}{4},
\]
where we have used the (pessimistic) estimate
\[
\|u-u_N\|_V \le \|u\|_V + \|u_N\|_V
\le 2\tilde{\varrho}
\]
with $\tilde{\varrho}$ being the maximum value of the radii $\varrho$ from
the nonlinear existence and uniqueness theorems for $u$ and $u_N$, respectively
(cf.~Theorems~\ref{th:scalee}, \ref{th:lintruncprobsolvable}),
we conclude
\begin{equation}\label{eq:truncerrest1}
\begin{aligned}
C_-^2 \|u-u_N\|_V
&\le 2c_+(N,u) C_\mathrm{tr}^2 \|u\|_V
+ 2c_+(N,u^\mathrm{inc}) C_\mathrm{tr}^2 \|u^\mathrm{inc}\|_V\\
&\quad
+ 4\kappa^2 C_-^{-1}C(R,\kappa) C_\mathrm{emb}
\Big(c_+(N,u^\mathrm{inc}) C_\mathrm{tr}^2 \|u^\mathrm{inc}\|_V
+ \frac{CC_\mathrm{tr}^2\kappa}{(1+N^2)^{1/2}} \|u_N\|_V\Big).
\end{aligned}
\end{equation}
We have proved the following result.
\begin{theorem}\label{th:truncscalerrest}
Let the assumptions of the above Theorem~\ref{th:truncscalee}
with respect to $R$, $\kappa$, $c$, and $f$ be satisfied.
Then, if the Lipschitz constant $L_\mathcal{F}$ is sufficiently small,
i.e.\ satisfies
\[
L_\mathcal{F}\le\min\left\{\beta(R,\kappa),\beta_{N^\ast}(R,\kappa),
\frac{C_-^2}{4\tilde{\varrho}\big(1+2\kappa^2 C_-^{-1}C(R,\kappa) C_\mathrm{emb}\big)}
\right\},
\]
there exists a constant $c>0$ independent of $N\ge N^\ast$
(the structure of the constant can be seen from \eqref{eq:truncerrest1})
such that the following estimate holds:
\[
c\|u-u_N\|_V
\le c_+(N,u) \|u\|_V
+ c_+(N,u^\mathrm{inc}) \|u^\mathrm{inc}\|_V\\
+ \frac{1}{(1+N^2)^{1/2}} \|u_N\|_V.
\]
\end{theorem}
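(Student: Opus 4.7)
The plan is to mimic the Schatz-type duality argument already used in the proof of Theorem~\ref{th:lintruncprobsolvable}, now adapted to the nonlinear setting by exploiting the Lipschitz bound $L_\mathcal{F}$ established in the existence proof. First I would apply the G{\aa}rding-type inequality of Lemma~\ref{l:aNprop}(ii) to $u-u_N$, obtaining
\[
C_-^2\,\|u-u_N\|_V^2 - 2\kappa^2\|u-u_N\|_{0,2,B_R}^2 \le \Re a_N(u-u_N,u-u_N).
\]
The right-hand side is then rewritten by using that $u,u_N$ solve \eqref{eq:intscalproblem3} and \eqref{eq:intscalproblem4}, respectively:
\[
a_N(u-u_N,v) = \bigl[a_N(u,v)-a(u,v)\bigr] + \bigl[n(u,v)-n_N(u_N,v)\bigr]
\]
with $a_N(u,v)-a(u,v) = -((T_\kappa-T_{\kappa,N})u,v)_{S_R}$, and the nonlinear difference controlled by $L_\mathcal{F}\|u-u_N\|_V\|v\|_V$ (via the estimates for $\ell^\mathrm{contr},\ell^\mathrm{src}$ already derived in Theorem~\ref{th:scalee}) plus a term $((T_\kappa - T_{\kappa,N})u^\mathrm{inc},v)_{S_R}$ coming from $\ell^\mathrm{inc}-\ell_N^\mathrm{inc}$.

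Next, to eliminate the wrong-signed $L_2$ term, I would run an Aubin--Nitsche duality: find $w_N\in V$ with $\overline{a(v,w_N)}=(v,u-u_N)_{B_R}$ for all $v\in V$, which is uniquely solvable thanks to Theorem~\ref{th:linprobsolvable}, yielding $\|w_N\|_V\le C_-^{-1}C(R,\kappa)\|u-u_N\|_{V^\ast}$ and, by the compact embedding $V\subset V^\ast$, $\|w_N\|_V\le C_-^{-1}C(R,\kappa)C_\mathrm{emb}\|u-u_N\|_V$. Plugging $v=w_N$ and testing both \eqref{eq:intscalproblem3} and \eqref{eq:intscalproblem4} with $w_N$ expresses $\|u-u_N\|_{0,2,B_R}^2$ as the sum $\overline{n(u,w_N)-n_N(u_N,w_N)} + \overline{a_N(u_N,w_N)-a(u_N,w_N)}$, the two pieces being bounded by $L_\mathcal{F}\|u-u_N\|_V\|w_N\|_V$, $\eta^\mathrm{inc}\|w_N\|_V$ and $\eta_2\|w_N\|_V$, respectively, where $\eta^\mathrm{inc}$ and $\eta_2$ are the suprema (over unit-norm $v\in V$) of $|((T_\kappa-T_{\kappa,N})u^\mathrm{inc},v)_{S_R}|$ and $|((T_\kappa-T_{\kappa,N})u_N,v)_{S_R}|$.

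For the pure truncation quantities I would invoke Lemma~\ref{l:bilintruncerr} (via the trace theorem) to get $\eta_1,\eta^\mathrm{inc}=o(1)$ as $N\to\infty$, with explicit prefactors $c_+(N,u)C_\mathrm{tr}^2\|u\|_V$ and $c_+(N,u^\mathrm{inc})C_\mathrm{tr}^2\|u^\mathrm{inc}\|_V$. For $\eta_2$, since $u_N$ itself depends on $N$, I would use the projector trick already exploited in Theorem~\ref{th:lintruncprobsolvable}: because $T_\kappa P_N = T_{\kappa,N}$ on $V_N|_{S_R}$, one has $(T_\kappa-T_{\kappa,N})u_N = T_\kappa(\id-P_N)u_N$, and combining the explicit orthogonal-projection error bound on $S_R$ with Remark~\ref{rem:sharpDtNbound} yields $\eta_2\le CC_\mathrm{tr}^2\kappa(1+N^2)^{-1/2}\|u_N\|_V$.

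Inserting all the bounds into the G{\aa}rding estimate produces an inequality of the form
\[
C_-^2\|u-u_N\|_V \le (\text{truncation terms}) + L_\mathcal{F}\bigl(1+2\kappa^2 C_-^{-1}C(R,\kappa)C_\mathrm{emb}\bigr)\|u-u_N\|_V.
\]
The final ingredient is the smallness hypothesis on $L_\mathcal{F}$ together with the a~priori bound $\|u-u_N\|_V\le\|u\|_V+\|u_N\|_V\le 2\tilde{\varrho}$ furnished by Theorems~\ref{th:scalee} and~\ref{th:truncscalee}; the restriction on $L_\mathcal{F}$ in the statement is precisely what is needed to absorb the $L_\mathcal{F}$-proportional term on the left-hand side, leaving the claimed bound. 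I expect the main obstacle to be the careful bookkeeping in the nonlinear Aubin--Nitsche step (ensuring that both the primal and dual contributions of the nonlinearity are controlled uniformly in $N$ by $L_\mathcal{F}$ rather than by pointwise derivatives of $c$ and $f$) and the simultaneous absorption of two different $\|u-u_N\|_V$-terms --- one from the $L_2$ duality and one from the $V$-bound --- under a single smallness condition on $L_\mathcal{F}$.
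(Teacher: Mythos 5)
Your proposal follows the paper's own argument essentially step for step: the same G{\aa}rding estimate for $u-u_N$, the same splitting $a_N(u-u_N,v)=[a_N(u,v)-a(u,v)]+[n(u,v)-n_N(u_N,v)]$ with the Lipschitz bound $L_\mathcal{F}$ and the $\ell^\mathrm{inc}-\ell_N^\mathrm{inc}$ term, the same Aubin--Nitsche duality to handle the wrong-signed $L_2$ term, the same treatment of $\eta_1,\eta^\mathrm{inc}$ via Lemma~\ref{l:bilintruncerr} and of $\eta_2$ via the projector identity $(T_\kappa-T_{\kappa,N})u_N=T_\kappa(\id-P_N)u_N$, and the same absorption using $\|u-u_N\|_V\le 2\tilde\varrho$ together with the smallness condition on $L_\mathcal{F}$. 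The only blemish is the sign you give for $a_N(u,v)-a(u,v)$ (it equals $+((T_\kappa-T_{\kappa,N})u,v)_{S_R}$, since both forms carry the DtN term with a minus sign), which is immaterial because the quantity is estimated through a supremum anyway.
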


\section{Conclusion}
A mathematical model together with an investigation of existence
and uniqueness of its solution for radiation and propagation effects
on compactly supported nonlinearities is presented.
The full-space problem is reduced to an equivalent truncated local problem,
whereby in particular the dependence of the solution on the truncation parameter
(with regard to stability and error of the truncated solution) is studied.
The results form the basis for the use of numerical methods, e.g., FEM,
for the approximate solution of the original problem with controllable accuracy.

\end{document}